\begin{document}

\title{The \lambdamucal{T}}

\author[1,2]{Herman Geuvers}
\author[1]{Robbert Krebbers}
\author[1]{James McKinna}

\affil[1]{Radboud University Nijmegen}
\affil[2]{Eindhoven University of Technology}

\renewcommand\Authands{ and }

\maketitle

\begin{abstract}
Calculi with control operators have been studied as extensions of 
simple type theory. Real programming languages contain datatypes, so 
to really understand control operators, one should also include these 
in the calculus. As a first step in that direction, we introduce 
\lambdamuT{}, a combination of Parigot's \lambdamucal{} and 
\GodelsTfull{}, to extend a calculus with control operators with a 
datatype of natural numbers with a primitive recursor. 

We consider the problem of confluence on 
raw terms, and that of strong normalization for the well-typed terms. 
Observing some problems with extending the proofs of Baba \etal{} and 
Parigot's original confluence proof, we provide new, and improved, proofs 
of confluence (by complete developments) and strong normalization (by 
reducibility and a postponement argument) for our system. 

We conclude with some remarks about extensions, choices, and prospects for 
an improved presentation. 
\end{abstract}

\section{Introduction}
In pursuit, on the one hand, of a satisfactory equational theory of
\cbv{} \lambdacal{}, and on the other, of a means to interpret
the computational content of classical proofs, a variety of calculi
with control operators have been proposed. Few of these systems address 
the problem of how to incorporate primitive datatypes in
direct style, preferring instead to consider the usual Church
encoding of datatypes or else to analyze computation over datatypes via 
CPS-translations. 

In part this appears to arise because of the technical difficulty in
getting standard results such as confluence or strong normalization, and
their proof methods, either for classical calculi, or for simply-typed
calculi with datatypes, to extend to their combination. 

This paper introduces a new \lambdacal{} with control, \lambdamuT, in
which for example constructs for \catch\ and \throw\ may be
represented, which moreover has a basic datatype of natural numbers
with a primitive recursor, in the style of \GodelsTfull. We
demonstrate that it is possible to achieve a synthesis of classical
computation with datatypes with a conventional metatheory of typing
and reduction. To show how the system can be used in programming, we
give a simple example in~\ref{example:program}, where we define a function
that multiplies the first $n$ values of $f : \natTypeT \to\natTypeT$ and 
throws an exception as soon as it encounters the value $0$.

\subsection{Our approach}
Since Lafont's counterexample~\cite{girard1989}, it is well known that a calculus 
providing a general content to classical logic cannot be confluent. It only 
may become confluent if one adds an evaluation strategy (\cbn{} or \cbv{}). 
To define a calculus with control operators and datatypes we have therefore 
observed a tension between the \cbn{} features taken directly from Parigot's 
\lambdamucal{}, and the need to add certain \cbv{} features to obtain a system 
that is confluent and satisfies a normal form theorem (each closed term of type 
$\natTypeT$ is convertible to a numeral). The \lambdamucal{T} is therefore 
a \cbn{} system with strict evaluation on datatypes.
To avoid losing a normal form theorem, we could not make it a full \cbn{} 
system, and to avoid losing confluence we had to restrict the primitive 
recursor to only allow conversion when the numerical argument is a numeral.

Given these technical considerations, we were able to prove that
\lambdamuT{} satisfies subject reduction, has a normal form theorem,
is confluent and strongly normalizing. The last two proofs are
non-trivial because various niceties are required to make the standard
proof methods work.

Our confluence proof uses the notion of parallel reduction and defines
a complete development for each term. Surprisingly, it was difficult
to find a confluence proof for the original untyped
\lambdamucal{}. Baba, Hirokawa and Fujita~\cite{baba2001} have given a 
confluence proof for \lambdamu{} without the \mbox{$\to_{\mu\eta}$-rule}
($\mu\alpha.[\alpha]t \to t$ provided that $\alpha \notin \FCV
t$). Although they suggest how to extend parallel reduction for the
$\to_{\mu\eta}$-rule, they do not provide a formal definition of the
complete development nor a proof. Nakazawa~\cite{nakazawa2003} has
successfully carried out their suggestion for a
\cbv{} variant of \lambdamu{}, but does not use
the notion of complete development. Walter 
Py's PhD thesis~\cite{py1998} was the only place where we have found a complete proof
of confluence for \lambdamu{}. It uses Aczel's generalization of
parallel reduction~\cite{Aczel:CR} and a number of postponement arguments. 
In the present paper we extend the methodology
of~\cite{baba2001} to the case of \lambdamuT{}, which also includes
the $\to_{\mu\eta}$-rule.

Our strong normalization proof proceeds by defining relations 
$\to_A$ and $\to_B$ such that $\to\, =\, \to_{AB}\, \defined\, \to_A \cup \to_B$.
First we prove that $\to_A$ is strongly normalizing by the reducibility method. 
Secondly, we prove that $\to_B$ is strongly normalizing and that both reductions 
commute in a way that we can obtain strong normalization for $\to_{AB}$. The 
first phase is inspired by Parigot's proof of strong normalization for the 
\lambdamucal{}~\cite{parigot1997}.

\subsection{Related work}
The extension of simply typed lambda calculus with control operators
and the observation that these operators can be typed using the rules
of classical logic is originally due to Griffin~\cite{griffin1990} and
has lead to a lot of
research~\cite{parigot1992,parigot1993,degroote1994,rehof1994,berger1995,
	coquand1996,BarbaneraBerardi96,ariola2003,bakel2005},
by considering variations on the control operators, the underlying calculus or the
computation rules, or by studying concrete examples of the
computational content of proofs in classical logic. The
\lambdamucal{} of Parigot~\cite{parigot1992} has become a central
starting point for much research in this area.

The extension with datatypes, to make the calculus into a real
programming language with control operators, has not received so
much attention. We briefly summarize the research done
in this direction and compare it with our work. 

Murthy has defined a system with control operators, arithmetic,
products and sums in his PhD thesis~\cite{murthy1990}. His system uses
the control operators $\con{C}$ and $\con{A}$ (originally due 
to~\cite{griffin1990}) and the semantics of these operators is specified
by evaluation contexts rather than local reduction rules, as we do. So
his system does not really describe a {\em calculus} for datatypes
and control. Furthermore, Murthy mainly considers
CPS-translations to give an operational semantics of his system and did
not prove properties like confluence or strong normalization. 

Crolard and Polonowski have considered a version of \GodelsTfull{} with
products and \callcc~\cite{crolard2011}. As with Murthy,
the semantics is presented by CPS-translations instead of a direct
specification via a calculus.  Therefore properties like confluence
and strong normalization are trivial because they hold for the target
system already.

Barthe and Uustalu have worked on CPS-translations for inductive and
coinductive types~\cite{barthe2002}. Their work includes a system with
a primitive for iteration over the natural numbers and the control
operator $\Delta$.  Unfortunately only some properties of
CPS-translations are proven.

Rehof and S{\o}rensen have described an extension of the \lambdacal{_\Delta} 
with basic constants and functions~\cite{rehof1994}.
Unfortunately their extension is quite limited. For example the primitive 
recursor $\nrecT$ takes terms, rather than basic constants, as its arguments.
Their extension does not allow this, making it impossible to define $\nrecT$.

Parigot has described a second-order variant of his
\lambdamucal{}~\cite{parigot1992}. This system is very powerful,
because it includes all the well-known second-order representable
datatypes.  However, it suffers from the same weakness as \SystemF{},
namely poor computational efficiency (for example, an
$O(n)$-predecessor function). Also, as observed in~\cite{parigot1992,
  parigot1993}, this system does not ensure {\em unique representation
  of datatypes}. For example, there is no one-to-one correspondence
between natural numbers and closed normal forms of the type of Church
numerals.

There have been various investigations into concrete examples of
computational content of classical proofs. Coquand gives an overview
in his notes~\cite{coquand1996}. An earlier example is~\cite{berger1995},
where a binpacking problem is analyzed using proof
transformations. More recent work is by Makarov~\cite{makarov2006},
who takes Griffin's calculus and adds various rules to optimize the
extracted program.

If we look in particular at \GodelsTfull{}, Berger, Buchholz and
Schwichtenberg have described a form of program extraction from
classical proofs~\cite{berger2000}. Their method extracts a
term from a classical proof in which all computationally 
irrelevant parts are removed. To prove the correctness of their
approach they give a realizability interpretation. However, since
their target language is \GodelsTfull{}, extracted programs do not
contain control mechanisms.

Caldwell, Gent and Underwood have considered program extraction from
classical proofs in the proof assistant \NuPrl{}~\cite{caldwell2000}. 
In their work they extend \NuPrl{} with a proof
rule for Peirce's law and they associate \callcc{} to the extraction
of Peirce's law. Now, program extraction indeed results in a program
with control. The main focus of their work is on using
program extraction to obtain efficient search algorithms. The authors do not
prove any meta theoretical results so it is unclear whether their
approach is correct for arbitrary classical proofs.

\subsection{Outline}
The paper is organized as follows:
\begin{itemize}
\item Section~\ref{section:godelsT} recapitulates \GodelsTfull,
  fixing notation and conventions, together with the key normal form
  property.
\item Section~\ref{section:muT} introduces \lambdamuT, our
  \GodelsTfull{} variant of Parigot's \lambdamucal{} extended with a
  datatype of natural numbers with primitive recursor \(\nrecT\). We define
  the basic reduction rules, whose compatible closure defines computation in
  \lambdamuT. We show how to represent rules for a statically bound \catch\
  and \throw\ mechanism. We prove subject reduction, and the
  extended analogue of the normal form property. 
\item In Section~\ref{section:muT_cps}, we develop the corresponding
  CPS-translation for \lambdamuT{}, and show it preserves typing and
  conversion.
\item Section~\ref{section:muT_confluence} contains one of our two
  principal technical contributions: a direct proof of confluence on
  the raw terms of \lambdamuT, based on a novel analysis of
  complete developments.
\item In Section~\ref{section:muT_sn}, our second technical
  contribution is to prove SN for our calculus, using the reducibility
  method and a postponement argument.
\item We close with some conclusions and indications for further work,
  both in extending our system with a richer type system, and
  in investigating a fully-fledged \cbv{} version.
\end{itemize}

\section{\texorpdfstring{\GodelsTfull{}}{Goedel's T}}
\label{section:godelsT}
\index{lambda-calculus-T@\lambdacal{\GodelsT}}
\index{Goedels T@\GodelsTfull{}|see{\lambdacal{\GodelsT}}}
\GodelsTfull{} (henceforth \lambdaT{}) was introduced by G\"odel to 
prove the consistency of Peano Arithmetic~\cite{sorensen2006}. It arises from 
$\lambda\Arrow$ by addition of a base type for natural numbers and a  
construct for primitive recursion.

\begin{definition}
\label{definition:godelsT_types}
The  \emph{types} of \lambdaT{} are built from a basic type
(the natural numbers) and a function type ($\to$) as follows.
\[
	\rho, \sigma, \tau \inductive \natTypeT \separator \sigma \to \tau
\]
\end{definition}

\begin{definition}
\label{definition:godelsT_terms}
The  \emph{terms} of the \lambdaT{} are inductively defined over an 
infinite set of  \emph{$\lambda$-variables $(x, y, \ldots)$} as follows.
\begin{flalign*}
t, r, s  \inductive &\ x \separator  \lm x:\rho.r \separator ts 
	\separator 0 \separator \sucT t \separator \nrecT_\rho\ r\ s\ t 
\end{flalign*}
Here, $\rho$ ranges over \lambdaT{}-types.
\end{definition}

As one would imagine, the terms $0$, $\sucT$ and $\nrecT$ denote zero, the successor 
function and primitive recursion over the natural numbers, respectively. 
We let $\FV t$ denote the set of free variables of $t$ and we
define the operation of capture avoiding substitution $\subst t x r$ of 
$r$ for $x$ in $t$ in the usual way.

\begin{convention}
\label{convention:variables}
Although a $\lambda$-abstraction and $\nrecT$ construct are annotated by
a type, we omit these type annotations when they are obvious or 
not relevant. Furthermore, we use the  \emph{Barendregt convention}. That is, 
given an expression, we may assume that bound variables are distinct from free 
variables and that all bound variables are distinct.
\end{convention}

\begin{definition}
The  \emph{derivation rules for \lambdaT{}} are as
shown in Figure~\ref{fig:GodelsT_typing}.
\begin{figure}[h!]
\centering
\subfloat[var]{
	\AXC{$x : \rho \in \Gamma$}
	\UIC{$\ljudg \Gamma x \rho$}
	\normalAlignProof
	\DisplayProof}	
\subfloat[lambda]{
	\AXC{$\ljudg {\Gamma, x : \sigma} t \tau$}
	\UIC{$\ljudg \Gamma {\lm x : \sigma.t} {\sigma \to \tau}$}
	\normalAlignProof
	\DisplayProof}
\subfloat[app]{
	\AXC{$\ljudg \Gamma t {\sigma \to \tau}$}
	\AXC{$\ljudg \Gamma s \sigma$}
	\BIC{$\ljudg \Gamma {ts} \tau$}
	\normalAlignProof
	\DisplayProof}

\subfloat[zero]{
	\AXC{$\ljudg \Gamma 0 \natTypeT$}
	\normalAlignProof
	\DisplayProof}
\subfloat[suc]{
	\AXC{$\ljudg \Gamma t \natTypeT$}
	\UIC{$\ljudg \Gamma {\sucT t} \natTypeT$}
	\normalAlignProof
	\DisplayProof}
\subfloat[nrec]{
	\AXC{$\ljudg \Gamma r \rho$}
	\AXC{$\ljudg \Gamma s {\natTypeT \to \rho \to \rho}$}
	\AXC{$\ljudg \Gamma t \natTypeT$}
	\TIC{$\ljudg \Gamma {\nrecT_\rho\ r\ s\ t} \rho$}
	\normalAlignProof
	\DisplayProof}	

\caption{The rules for typing judgments in \lambdaT{}.}
\label{fig:GodelsT_typing}
\end{figure}
\end{definition}

\begin{definition}
\label{definition:GodelsT_reduction}
Reduction $t \to t'$ is defined as the 
compatible closure of the following rules. 
\begin{flalign*}
	(\lm x.t)r &\to \subst t x r \tag{$\beta$} \\
	\nrecT\ r\ s\ 0 &\to r \tag{$0$} \\
	\nrecT\ r\ s\ (\sucT t) &\to s\ t\ (\nrecT\ r\ s\ t) \tag{$\sucT$}
\end{flalign*}
As usual, $\tto$ denotes the reflexive/transitive closure and 
$=$ denotes the reflexive/symmetric/transitive closure.
\end{definition}

Although we do not specify a deterministic reduction strategy it is obviously possible to create a 
\cbn{} and \cbv{} version of \lambdaT{}. Yet it is 
interesting to remark that in a \cbv{} version of \lambdaT{}
calculating the predecessor takes at least linear time while in a \cbn{} 
version the predecessor can be calculated in constant time~\cite{colson1998}.

Fortunately, despite the additional features of \lambdaT{}, the important properties 
of $\lambda\Arrow$, subject reduction, confluence and strong normalization, 
are preserved~\cite{stenlund1972, girard1989}.

Because it is convenient to be able to talk about a term representing an actual natural 
number we introduce the following notation.

\begin{notation}
\index{Numerals in \lambdaT{}}
$\natenc n \defined \sucT^n 0$
\end{notation}

\begin{definition}
\index{Values!of \lambdaT{}}
 \emph{Values} are inductively defined as follows.
\[
	v,w \inductive 0 \separator \sucT v \separator \lm x.r
\]
\end{definition}

\begin{theorem}
Given a term $t$ that is in normal form and such that $\ljudg {} t \rho$:
\begin{enumerate}
\item If $\rho = \natTypeT$, then $t \equiv \natenc n$ for some $n \in \nat$.
\item If $\rho = \sigma \to \tau$, then $t \equiv \lm x.r$ for a variable $x$ and term $r$.
\end{enumerate}
\end{theorem}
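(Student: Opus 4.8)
The plan is to prove this by induction on the structure of the normal-form term $t$, using the typing derivation to rule out the cases that cannot occur. This is the standard "canonical forms" argument for a simply-typed system with datatypes. Since $t$ is closed (empty context) and in normal form, I expect that the only possible shapes are the introduction forms for each type, and the work lies in systematically excluding everything else.

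First I would set up a simultaneous induction establishing both clauses at once, generalizing slightly to a statement about the \emph{head} of a normal form. The key structural fact I would isolate is that a closed normal form cannot have a redex at any position, and in particular cannot be a stuck elimination. Concretely, I would analyze $t$ by cases on its outermost constructor from Definition~\ref{definition:godelsT_terms}. The variable case is impossible since the context is empty. The cases $0$, $\sucT t'$, and $\lm x.r$ are the introduction forms: for these the typing rules (zero), (suc), (lambda) force the type to be $\natTypeT$ or $\sigma \to \tau$ respectively, and an easy subsidiary induction on $\sucT t'$ (whose subterm is again a closed normal form of type $\natTypeT$) yields $t \equiv \natenc{n}$.

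The real content is in the elimination forms, namely application $t_1 t_2$ and $\nrecT\ r\ s\ t'$. Here I would argue that the head subterm ($t_1$, respectively $t'$) is itself a closed normal form of an appropriate type, so the induction hypothesis applies and pins down its shape — and that shape always produces a redex, contradicting normality. For an application $t_1 t_2$, the subject-reduction/typing rule (app) gives $t_1$ a closed normal form of type $\sigma \to \tau$; by the induction hypothesis $t_1 \equiv \lm x.r$, whence $t_1 t_2$ is a $\beta$-redex, contradiction. For $\nrecT\ r\ s\ t'$, rule (nrec) gives $t'$ type $\natTypeT$, so by the induction hypothesis $t' \equiv \natenc{n}$; if $n = 0$ this is a $(0)$-redex and if $n = m+1$ it is a $(\sucT)$-redex, again contradicting normality.

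The main obstacle I anticipate is purely bookkeeping rather than conceptual: making the mutual recursion between the two clauses precise, since the elimination cases call the induction hypothesis at a possibly different type than the one being proved. I would handle this by phrasing the induction on the size (or the typing-derivation height) of $t$ rather than on a single type, so that both clauses are available at every smaller term simultaneously; the closedness assumption must be threaded carefully through the subterms, but in this calculus every subterm of a closed term in the relevant head positions is itself closed of the stated type, so no real difficulty arises. The only subtlety worth flagging explicitly is that normal form must be taken to mean \emph{no redex anywhere} (the compatible closure of Definition~\ref{definition:GodelsT_reduction}), so that a redex appearing at the head genuinely violates the hypothesis.
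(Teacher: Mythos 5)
Your proof is correct: it is the standard canonical-forms induction, and since the paper states this theorem for \lambdaT{} without proof (deferring to the literature cited just above it), there is nothing to diverge from --- indeed your argument mirrors the paper's own proof of the \lambdamuT{} analogue (Lemmas~\ref{muT:typed_values} and~\ref{lemma:muT_nf_open}), where elimination forms are likewise excluded by exhibiting a redex. Your worry about the mutual dependence of the two clauses is harmless: plain structural induction on $t$, with the statement quantified over all types $\rho$, already suffices, since every appeal to the induction hypothesis is at a proper subterm.
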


As the following indicates, the system \lambdaT{} 
has quite some expressive power.

\begin{definition}
\index{Representable function}
A function $f : \nat^n \to \nat$ is  \emph{representable} in \lambdaT{} if
there is a term $t$ with $\ljudg {} t {\natTypeT^n \to \natTypeT}$ such that:
\[
	t\; \natenc{m_1} \ldots \natenc{m_n} = \natenc{f(m_1, \ldots, m_n)}
\]
\end{definition}

\begin{theorem}
\label{theorem:godelsT_representable}
The functions representable in \lambdaT{} are exactly the functions that are
provably recursive in first-order arithmetic\footnote{Here we are allowed to say either
Peano Arithmetic (PA) or Heyting Arithmetic (HA), because a function is provably
recursive in PA iff it is probably recursive in HA~\cite{sorensen2006}.}.
\end{theorem}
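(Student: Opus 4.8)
The plan is to prove the two inclusions separately, since the statement asserts an exact characterization of the representable functions.

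\textbf{Representable implies provably recursive.} Given a term $t$ with $\ljudg{}{t}{\natTypeT^n \to \natTypeT}$ representing $f$, I would formalize the syntax and one-step reduction of \lambdaT{} as primitive-recursive predicates inside first-order arithmetic, so that the statement ``$t\,\natenc{m_1}\cdots\natenc{m_n}$ reduces to $\natenc{k}$'' becomes an arithmetical formula. The graph of $f$ is then $\Sigma^0_1$, and since $f$ is total and single-valued (by confluence and strong normalization of \lambdaT{}), it suffices to show that PA proves totality, i.e. $\forall \vec m\,\exists k.\ t\,\natenc{m_1}\cdots\natenc{m_n} \tto \natenc{k}$. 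For this I would formalize the reducibility (computability) proof of strong normalization, assigning to each type an ordinal below $\varepsilon_0$ and discharging the termination induction by transfinite induction up to that ordinal. Since PA proves transfinite induction up to every fixed $\alpha < \varepsilon_0$, the totality of each individual term is provable, giving provable recursiveness of $f$. The main obstacle here is bookkeeping: making the ordinal assignment and the induction on the reducibility predicate go through inside PA with a bound strictly below $\varepsilon_0$.

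\textbf{Provably recursive implies representable.} Here I would use G\"odel's functional (Dialectica) interpretation together with the negative translation. Suppose $f$ is provably recursive, so PA proves $\forall \vec x\, \exists!\, y.\ G(\vec x, y)$ where $G$ is a primitive-recursive graph predicate. Applying the G\"odel--Gentzen negative translation followed by Friedman's $A$-translation turns this $\Pi^0_2$ theorem of PA into a theorem of HA of the same shape. The soundness theorem for the Dialectica interpretation then extracts from the HA-proof a closed \lambdaT{}-term $t$ together with a verification, in the quantifier-free fragment, that $G(\vec x, t\,\vec x)$ holds for all $\vec x$; reading $G$ as the graph of $f$ and appealing to the normal form theorem yields $t\,\natenc{m_1}\cdots\natenc{m_n} = \natenc{f(\vec m)}$, i.e. $t$ represents $f$. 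The real work, and the main obstacle, is the soundness of the Dialectica interpretation: one must check that every axiom and rule of HA is matched by a \lambdaT{}-term construction, and it is precisely the induction axiom whose interpretation requires the recursor $\nrecT$ — which is exactly why \lambdaT{}, and not merely $\lambda\Arrow$, is needed.

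As an alternative packaging, both inclusions can be factored through the class of $\varepsilon_0$-recursive functions, with PA's provably recursive functions coinciding with the $\varepsilon_0$-recursive ones and \lambdaT{} representing precisely these; but the Dialectica route is the more direct one for the completeness half, and I would follow it.
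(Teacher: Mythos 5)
Your proposal is correct in outline, but it is worth noting that the paper does not actually prove this theorem: its entire proof is the one-line citation ``This is proven in~\cite{sorensen2006}.'' What you have written is, in effect, a reconstruction of the standard argument found in that reference, so the comparison is between your sketch and the textbook proof the paper outsources to. Two remarks on your two halves. For ``representable implies provably recursive,'' the ordinal-analytic route you propose (Howard-style assignment of ordinals below $\varepsilon_0$, discharged by PA's provable transfinite induction below $\varepsilon_0$) does work, but it is heavier than necessary: for each \emph{fixed} closed term $t$ only finitely many types occur in its typing derivation, the Tait computability predicates for those types are arithmetically definable, and the induction over the (external, fixed) derivation can be carried out wholesale inside PA, so each instance of totality is PA-provable without any ordinal bookkeeping --- the uniformity you would lose is exactly what G\"odel's incompleteness forbids anyway. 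For ``provably recursive implies representable,'' your pipeline has one superfluous stage: the Friedman $A$-translation is the device one pairs with \emph{modified realizability}, which fails to validate Markov's principle; the Dialectica interpretation validates Markov's principle outright, so the G\"odel--Gentzen negative translation followed directly by Dialectica already extracts a witnessing \lambdaT{}-term from a $\Pi^0_2$ theorem of PA. Indeed, given the paper's own footnote that a function is provably recursive in PA iff in HA, you may start from an HA-proof of $\forall \vec x\,\exists y.\,G(\vec x,y)$ and apply Dialectica soundness with no translation at all. Your identification of the induction axiom as the point where $\nrecT$ is forced, and the appeal to the normal form theorem to convert the verified witness into a numeral equation, are exactly right.
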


\begin{proof}
This is proven in~\cite{sorensen2006}.
\end{proof}

\section{The \texorpdfstring{\lambdamuT}{lambda-mu-T}-calculus}
\label{section:muT}
\index{Lambda-mu-T-calculus@\lambdamucal{T}}

In this section we present our \GodelsTfull{} extension of Parigot's 
\lambdamucal{} (henceforth \lambdamuT). 

\begin{definition}
\label{definition:muT_terms}
The  \emph{terms} and  \emph{commands} of \lambdamuT{} are mutually inductively 
defined over an infinite set of  \emph{$\lambda$-variables $(x, y, \ldots)$} and 
 \emph{$\mu$-variables $(\alpha, \beta, \ldots)$} as follows.
\begin{flalign*}
t, r, s  \inductive &\ x \separator  \lm x:\rho.r \separator ts \separator \mu \alpha:\rho.c
	\separator 0 \separator \sucT t \separator \nrecT_\rho\ r\ s\ t \\
c, d \inductive &\ [\alpha]t
\end{flalign*}
Here, $\rho$ ranges over \lambdaT{}-types (Definition~\ref{definition:godelsT_types}).
We give $[\alpha]t$ lower precedence than $sr$, 
allowing us to write $[\alpha]sr$ instead of $[\alpha](sr)$. 
\end{definition}

As usual, we let $\FV t$ and $\FCV t$ denote the set of free $\lambda$-variables 
and \mbox{$\mu$-variables} of $t$, respectively. Moreover, 
we  define substitution $\subst t x r$ of $r$ for $x$ in $t$, which
is capture avoiding for both $\lambda$- and $\mu$-variables, in the obvious way.
Similar to Convention~\ref{convention:variables}, we will often omit type annotations
for $\mu$-binders.

\begin{notation}
$\muthrow c \defined \mu \gamma.c$ provided that $\gamma \notin \FCV c$.
\end{notation}

\begin{definition}
\label{def:lambdamuT} The  \emph{typing rules for \lambdamuT{}} are as
shown in Figure~\ref{fig:muT_typing}.
\begin{figure}[h!]
\centering
\subfloat[axiom]{
	\AXC{$x : \rho \in \Gamma$}
	\UIC{$\mujudg \Gamma \Delta x \rho$}
	\normalAlignProof
	\DisplayProof}
\subfloat[lambda]{
	\AXC{$\mujudg {\Gamma, x : \sigma} \Delta t \tau$}
	\UIC{$\mujudg \Gamma \Delta {\lm x : \sigma.t} {\sigma \to \tau}$}
	\normalAlignProof
	\DisplayProof}
\subfloat[app]{
	\AXC{$\mujudg \Gamma \Delta t {\sigma \to \tau}$}
	\AXC{$\mujudg \Gamma \Delta s \sigma$}
	\BIC{$\mujudg \Gamma \Delta {ts} \tau$}
	\normalAlignProof
	\DisplayProof}
	
\subfloat[zero]{
	\AXC{$\mujudg \Gamma \Delta 0 \natTypeT$}
	\normalAlignProof
	\DisplayProof}
\subfloat[suc]{
	\AXC{$\mujudg \Gamma \Delta t \natTypeT$}
	\UIC{$\mujudg \Gamma \Delta {\sucT t} \natTypeT$}
	\normalAlignProof
	\DisplayProof}

\subfloat[nrec]{
	\AXC{$\mujudg \Gamma \Delta r \rho$}
	\AXC{$\mujudg \Gamma \Delta s {\natTypeT \to \rho \to \rho}$}
	\AXC{$\mujudg \Gamma \Delta t \natTypeT$}
	\TIC{$\mujudg \Gamma \Delta {\nrecT_\rho\ r\ s\ t} \rho$}
	\normalAlignProof
	\DisplayProof}	

\subfloat[activate]{
	\AXC{$\musjudg \Gamma {\Delta, \alpha : \rho} c$}
	\UIC{$\mujudg \Gamma \Delta {\mu \alpha : \rho.c} \rho$}
	\normalAlignProof
	\DisplayProof}
\subfloat[passivate]{
	\AXC{$\mujudg \Gamma \Delta t \rho$}
	\AXC{$\alpha : \rho \in \Delta$}
	\BIC{$\musjudg \Gamma \Delta {[\alpha]t}$}
	\normalAlignProof
	\DisplayProof}

\caption{The rules for typing judgments in \lambdamuT{}.}
\label{fig:muT_typing}
\end{figure}
\end{definition}

A typing judgment $\mujudg \Gamma \Delta t \rho$ is  \emph{derivable in
  \lambdamuT{}} in case it is the conclusion of a derivation tree
that uses the rules of Definition~\ref{def:lambdamuT}. We
say ``term $t$ has type $\rho$ in environment of
$\lambda$-variables $\Gamma$ and environment of
\mbox{$\mu$-variables} $\Delta$''.

Similarly, a typing judgment $\musjudg \Gamma \Delta c$ is {\em
  derivable in \lambdamuT{}\/} in case it is the conclusion of a
derivation tree that uses the rules of Definition~\ref{def:lambdamuT}. 
We say ``command $c$ is typable in 
environment of $\lambda$-variables $\Gamma$ and environment of
$\mu$-variables $\Delta$''.

\begin{fact} The typing judgment is closed under weakening of both environments.
That is, if $\mujudg \Gamma \Delta t \rho$, $\Gamma \subseteq \Gamma'$ and
$\Delta \subseteq \Delta'$, then $\mujudg {\Gamma'} {\Delta'} t \rho$.
\end{fact}

In order to define the reduction rules we first define the notions of 
 \emph{contexts} and  \emph{structural substitution}. Although the reduction rules 
merely require contexts of a restricted shape (those that are  \emph{singular}) we 
define contexts of a more general shape so we can reuse these definitions 
in our proof of confluence (Section~\ref{section:muT_confluence}) 
and strong normalization (Section~\ref{section:muT_sn}).

\begin{definition}
\label{definition:context}
\index{Context!call-by-name for \lambdamuT{}}
A  \emph{\lambdamuT{}-context} is defined as follows.
\[
	E \inductive \Box \separator E t \separator \sucT E \separator \nrecT\ r\ s\ E
\]
A context is  \emph{singular} if it is the following shape.
\[
	E^s \inductive \Box t \separator \sucT \Box \separator \nrecT\ r\ s\ \Box
\]
\end{definition}

\begin{definition}
Given a context $E$ and a term $s$,  \emph{substitution of $s$ 
for the hole in $E$}, notation $\cctx E s$, is defined as follows.
\begin{flalign*}
\cctx \Box s & \defined s\\
\cctx {(Et)} s & \defined \cctx E s t \\
\cctx {(\sucT E)} s & \defined \sucT \cctx E s \\
\cctx {(\nrecT\ r\ s\ E)} s & \defined \nrecT\ r\ s\ \cctx E s
\end{flalign*}
\end{definition}

\begin{definition}
Given contexts $E$ and $F$, the context $EF$ is defined by:
\begin{flalign*}
	\Box F &\defined F \\
	(Et)F & \defined (EF)t \\
	(\sucT E)F & \defined \sucT(EF) \\
	(\nrecT\ r\ s\ E)F & \defined \nrecT\ r\ s\ (EF)
\end{flalign*}
\end{definition}

\begin{fact}
$\cctx E {\cctx F {t}} \equiv \cctx {EF} t$
\end{fact}

Using contexts we can now define  \emph{structural substitution}.
Structural substitution of a $\mu$-variable $\beta$ and a context
$E$ for a $\mu$-variable $\alpha$ in $t$, notation $\subst t \alpha {\beta E}$, 
recursively replaces each command $[\alpha]q$ in $t$ by $[\beta]\cctx E {q'}$
where $q' \equiv \subst q \alpha {\beta E}$. Our notion of structural 
substitution is more general than Parigot's original 
presentation~\cite{parigot1992}. He defines $\subst t \beta \alpha$, 
which renames each $\mu$-variable $\beta$ in $t$ into $\alpha$, and $\subst t \alpha s$,
which replaces each command $[\alpha]q$ in $t$ by $[\alpha]q's$ where 
$q' \equiv \subst q \alpha s$. Of course, his
notions are just instances of our definition, namely, the former corresponds to 
$\subst t \beta {\alpha\ \Box}$ and the latter to $\subst t \alpha {\alpha\ (\Box s)}$.
Parigot's presentation suffices for the definition of the reduction
rules, but our presentation allows us to prove properties like confluence 
(Section~\ref{section:muT_confluence}) and strong normalization 
(Section~\ref{section:muT_sn}) in a more streamlined way.

\begin{definition}
\index{Structural substitution}
\label{lemma:mu_strucsubst}
 \emph{Structural substitution} $\subst t {\alpha} {\beta E}$ of a $\mu$-variable 
$\beta$ and a context $E$ for a $\mu$-variable $\alpha$
is defined as follows.
\begin{flalign*}
\subst x {\alpha} {\beta E} 
	&\defined x\\
\subst {(\lm x.r)} \alpha {\beta E} 
	&\defined \lm x.\subst r \alpha {\beta E} \\
\subst {(ts)} \alpha {\beta E} 
	&\defined \subst t \alpha {\beta E} \subst s \alpha {\beta E}\\
\subst 0 {\alpha} {\beta E} 
	&\defined 0\\
\subst {(\sucT t)} \alpha {\beta E} 
	&\defined \sucT (\subst t \alpha {\beta E})\\
\subst {(\nrecT\ r\ s\ t)} \alpha {\beta E} 
	&\defined \nrecT\ (\subst r \alpha {\beta E})\ (\subst s \alpha {\beta E})\ (\subst t \alpha {\beta E})\\
\subst {(\mu \gamma.c)} \alpha {\beta E} 
	&\defined \mu \gamma.\subst c \alpha {\beta E}  \\
\subst {([\alpha]t)} \alpha {\beta E} 
	&\defined [\beta] \cctx E {\subst t \alpha {\beta E}} \\
\subst {([\gamma]t)} \alpha {\beta E} 
	&\defined [\gamma]\subst t \alpha {\beta E} 
	\qquad \text{ provided that }\gamma \neq \alpha 
\end{flalign*}
Structural substitution is capture avoiding for both $\lambda$- and $\mu$-variables.
\end{definition}

\begin{definition}
Reduction $t \to t'$ is defined as the compatible 
closure of the following rules. 
\begin{flalign*}
	(\lm x.t)r &\to \subst t x r \tag{$\beta$}\\
	\sucT(\mu \alpha.c) &\to \mu \alpha.\subst c \alpha {\alpha\ (\sucT \Box)} \tag{$\mu \sucT$}\\
	(\mu \alpha.c)s &\to \mu \alpha.\subst c \alpha {\alpha\ (\Box s)} \tag{$\mu R$} \\
	\mu \alpha.[\alpha]t &\to t \qquad\text{ provided that }\alpha \notin \FCV t \tag{$\mu\eta$} \\
	{[\alpha]}\mu \beta.c &\to \subst c \beta {\alpha\ \Box} \tag{$\mu i$} \\
	\nrecT\ r\ s\ 0 &\to r \tag{$0$} \\
	\nrecT\ r\ s\ (\sucT \natenc n) &\to s\ \natenc n\ (\nrecT\ r\ s\ \natenc n) \tag{$\sucT$} \\
	\nrecT\ r\ s\ (\mu \alpha.c) &\to \mu \alpha.\subst c \alpha {\alpha\ (\nrecT\ r\ s\ \Box)} \tag{$\mu \natTypeT$}
\end{flalign*}
As usual, $\tto^+$ denotes the transitive closure, $\tto$ denotes the reflexive/transitive closure and 
$=$ denotes the reflexive/symmetric/transitive closure of $\to$.
\end{definition}

\begin{fact}
\label{fact:muT_red_singular_context}
As in~\cite{felleisen1992}, the notion of a singular 
context allows us to replace the reduction rules $\to_{\mu \sucT}$, $\to_{\mu R}$ and 
$\to_{\mu \natTypeT}$ by the following single rule.
\[
	\cctx {E^s} {\mu \alpha.c} \to  \mu \alpha.\subst c \alpha {\alpha E^s}
\]
\end{fact}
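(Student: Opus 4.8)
The plan is a direct case analysis on the shape of the singular context $E^s$. Its grammar has exactly three productions, $\Box t$, $\sucT\Box$ and $\nrecT\ r\ s\ \Box$, so it suffices to instantiate the single schema at each of these and to check that we recover, respectively, the rules $\to_{\mu R}$, $\to_{\mu\sucT}$ and $\to_{\mu\natTypeT}$, and conversely that each of those three rules is exactly one such instance.

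First I would unfold both sides of the schema $\cctx{E^s}{\mu\alpha.c} \to \mu\alpha.\subst{c}{\alpha}{\alpha E^s}$ for each production, using the definition of the hole-filling operation $\cctx E s$. For $E^s \equiv \Box t$ we get left-hand side $(\mu\alpha.c)t$, and the structural substitution on the right is governed by the context-variable pair $\alpha\ (\Box t)$, i.e.\ exactly the instance of $\to_{\mu R}$ (up to the name of the argument term). For $E^s \equiv \sucT\Box$ we get $\sucT(\mu\alpha.c)$ governed by $\alpha\ (\sucT\Box)$, which is $\to_{\mu\sucT}$. For $E^s \equiv \nrecT\ r\ s\ \Box$ we get $\nrecT\ r\ s\ (\mu\alpha.c)$ governed by $\alpha\ (\nrecT\ r\ s\ \Box)$, which is $\to_{\mu\natTypeT}$. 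Each unfolding is immediate from Definition~\ref{lemma:mu_strucsubst}.

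The conclusion is that the correspondence between the three productions of $E^s$ and the three rules is a bijection on redex/contractum pairs: every pair produced by the schema is one of the three original pairs, and every original pair arises from a unique singular context. Hence the basic one-step relations coincide as sets of ordered pairs, and therefore so do their compatible closures; the full reduction relation $\to$ is thus unaffected by the replacement.

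I do not anticipate a genuine obstacle, since the statement follows purely by unfolding definitions. The only point needing mild care is the bookkeeping of the two $\mu$-variables occurring in structural substitution: in the schema the binder $\beta$ is instantiated to $\alpha$ itself, so that $\subst{c}{\alpha}{\alpha E^s}$ is precisely the special case $\beta \equiv \alpha$ of the general $\subst{c}{\alpha}{\beta E}$, matching the form used in each of the three original rules.
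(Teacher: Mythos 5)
Your proposal is correct and matches what the paper intends: the paper states this as a \emph{fact} without proof precisely because it follows by unfolding the definitions of hole-filling and structural substitution on the three productions of $E^s$, which is exactly the case analysis you carry out. Your closing observation that the bijection on redex/contractum pairs lifts to the compatible closures is the right (and only) point of substance, so nothing is missing.
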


\begin{fact}
\label{fact:muT_lift_context}
$\cctx E {\mu \alpha.c} \tto  \mu \alpha.\subst c \alpha {\alpha E}$
\end{fact}

From a computational point of view one should think of $\mu\alpha.[\beta]t$ 
as a combined operation that catches exceptions labeled $\alpha$ in
$t$ and throws the results of $t$ to $\beta$. Following 
Crolard~\cite{crolard1999}, we define the operators 
$\catch$ and $\throw$.

\begin{definition}
The terms $\catchin \alpha t$ and $\throwto s \beta$ are defined as follows.
\begin{flalign*}
	\catchin \alpha t &\defined \mu \alpha.[\alpha]t \\
	\throwto s \beta &\defined \muthrow[\beta]s
\end{flalign*}
\end{definition}

Similar to commands, we give $\catchin \alpha t$ and $\throwto s \beta$ lower 
precedence than $sr$, allowing us to write $\catchin \alpha {sr}$ instead of 
$\catchin \alpha {(sr)}$.

Crolard~\cite{crolard1999} moreover defines a system with $\catch$ and $\throw$
as primitives
and proves a correspondence with the \lambdamucal{}. We prove
that the above simulation of $\catch$ and $\throw$ satisfies a generalization 
of Crolard's rules.

\begin{lemma}
We have the following reductions for \catch{} and \throw{}.
\begin{enumerate}
\item $\cctx E {\catchin \alpha t} \tto \catchin \alpha {\cctx E {\subst t \alpha {\alpha E}}}$
\item $\cctx E {\throwto t \alpha} \tto \throwto t \alpha$
\item $\catchin \alpha {\catchin \beta t} \to \catchin \alpha {\subst t \beta {\alpha\Box}}$
\item $\throwto {\throwto t \beta} \alpha \to \throwto t \beta$
\item $\throwto {\catchin \beta t} \alpha \to \throwto {\subst t \beta {\alpha\Box}} \alpha$
\item $\catchin \alpha {\throwto t \alpha} \to \catchin \alpha t$
\item $\catchin \alpha t \to t$ provided that $\alpha \notin \FCV t$
\end{enumerate}
\end{lemma}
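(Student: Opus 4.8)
The plan is to treat each clause by unfolding the abbreviations $\catchin \alpha t \equiv \mu\alpha.[\alpha]t$ and $\throwto s\beta \equiv \muthrow[\beta]s \equiv \mu\gamma.[\beta]s$ (with $\gamma$ fresh, so $\gamma\neq\beta$ and $\gamma\notin\FCV s$), and then reading off the reduction either as a single application of one of the basic rules or, for the two $\tto$-clauses, as an instance of Fact~\ref{fact:muT_lift_context}. In every case the claimed right-hand side is recovered simply by evaluating the resulting structural substitution according to Definition~\ref{lemma:mu_strucsubst}.

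First I would dispatch (1) and (2), whose left-hand sides both have the form $\cctx E{\mu\gamma.c}$. By Fact~\ref{fact:muT_lift_context} this reduces to $\mu\gamma.\subst c\gamma{\gamma E}$. For (1), taking $c\equiv[\alpha]t$ and $\gamma\equiv\alpha$, the defining clause $\subst{([\alpha]t)}\alpha{\alpha E}\equiv[\alpha]\cctx E{\subst t\alpha{\alpha E}}$ delivers exactly $\catchin\alpha{\cctx E{\subst t\alpha{\alpha E}}}$. For (2) the bound variable $\gamma$ of $\throwto t\alpha\equiv\mu\gamma.[\alpha]t$ is fresh, so $\gamma\neq\alpha$ and $\gamma\notin\FCV t$; hence $\subst{([\alpha]t)}\gamma{\gamma E}\equiv[\alpha]\subst t\gamma{\gamma E}\equiv[\alpha]t$, recovering $\throwto t\alpha$.

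Next, (3)--(6) each expose, after unfolding, a redex of shape $[\alpha]\mu\delta.c$ sitting directly beneath the single outermost $\mu$-binder. The $\mu i$-rule rewrites this command to $\subst c\delta{\alpha\Box}$, and since $\cctx\Box s\equiv s$ the structural substitution merely relabels commands $[\delta]q$ to $[\alpha]q$. In (3) and (5) the command is $c\equiv[\beta]t$ whose label equals the bound $\beta$, giving $[\alpha]\subst t\beta{\alpha\Box}$; in (4) and (6) the command's label differs from the (fresh) bound variable, which moreover does not occur in $t$, so the substitution returns the body unchanged, namely $[\beta]t$ and $[\alpha]t$ respectively. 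Because $\to$ is closed under context, performing this step beneath the outermost binder yields precisely the stated right-hand sides. Finally, (7) is a verbatim instance of the $\mu\eta$-rule applied to $\mu\alpha.[\alpha]t$, licensed exactly by the side-condition $\alpha\notin\FCV t$.

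The whole argument is conceptually shallow; the only real obstacle is bookkeeping the freshness side-conditions attached to the anonymous $\mu$-binder $\gamma$ introduced by $\throw$. It is precisely the freshness of this variable --- that it differs from every relevant free $\mu$-variable and does not occur in the body --- that makes the structural substitutions act as the identity in (2), (4) and (6), and hence makes those equalities hold on the nose rather than only up to further reduction. I would therefore be careful to invoke the Barendregt convention (Convention~\ref{convention:variables}) when naming $\gamma$, and in each throw-case to verify that the command's label is distinct from $\gamma$ before simplifying.
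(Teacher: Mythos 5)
Your proof is correct and follows essentially the same route as the paper: items (3)--(7) are direct instances of the $\to_{\mu i}$- and $\to_{\mu\eta}$-rules (performed under the outer binder by compatible closure), while (1) and (2) invoke Fact~\ref{fact:muT_lift_context} and then evaluate the resulting structural substitution. The paper states all of this in a single line, and your explicit bookkeeping of the freshness of the anonymous binder $\gamma$ --- which is what makes the substitutions in (2), (4) and (6) act as the identity --- is exactly the detail the paper leaves implicit.
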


\begin{proof}
These reductions follow directly from the reduction rules of \lambdamuT{},
except for (1) and (2) where we need Fact~\ref{fact:muT_lift_context}.
\end{proof}

The $\catch$ and $\throw$ as defined above give rise to a system with
 \emph{statically bound} exceptions. This is different from exceptions in 
for example \Lisp{}, where they are  \emph{dynamically bound}. In a system with 
dynamically bound exceptions, substitution is not capture avoiding for 
exception names.

\begin{example}
Consider the following term:
\[
	\catchin \alpha {\sucT ((\lambda f:\natTypeT\to\natTypeT\,.\,\catchin \alpha {f\,0})\ 
	  (\lambda x:\natTypeT\,.\,\throwto x \alpha))}.
\]
Here, both occurrences of $\catch$ bind different occurrences $\alpha$. 
So after two \mbox{$\beta$-reduction} steps we obtain
$\catchin \alpha {\sucT (\catchin \beta {\throwto 0 \alpha})}$ and hence
its normal form is $0$. In systems with dynamically bound exceptions this
term would reduce to $\sucT 0$ because the $\throw$ would get caught by 
the innermost $\catch$.
\end{example}

\begin{example}
\label{example:program}
We consider a simple \lambdamuT{}-program $F$ that, given $f : \natTypeT \to \natTypeT$, 
computes the product of the first $n$ values of $f$, that is
$F\,\natenc n = f\,0 * \ldots * f\,\natenc n$ for $n\in\nat$. The interest of this program is 
that it uses the exception mechanism to stop multiplying once a zero is
encountered. First we define addition and multiplication in the usual
way in \lambdamuT{}.
\begin{flalign*}
	(+) &\defined \lambda nm \,.\,\nrecT\ m\ (\lm x y\,.\,\sucT y)\ n \\
	(*) &\defined \lambda nm \,.\,\nrecT\ 0\ (\lm x y\,.\,m + y)\ n
\end{flalign*}
Now, given $f : \natTypeT \to \natTypeT$, we define the term 
$F \typeof \natTypeT\to\natTypeT$, using a `helper function' $H$, which does a 
case analysis on the value of $f\,y$, as follows. 
\begin{flalign*}
F &\defined \lm x\,.\,\catchin \alpha {\nrecT\ \natenc 1\ H\, (\sucT x)}\\
H &\defined \lm y\,m\,.\,\nrecT\ (\throwto 0 \alpha)\ 
	(\lm z\,\_\,.\,m * \sucT z)\ (f\,y).
\end{flalign*}

Let $f : \natTypeT \to \natTypeT$ be some term that satisfies $f\,\natenc{0} = \natenc{3}$,
$f\,\natenc{1} = \natenc{0}$ and $f\,\natenc{2} = \natenc{5}$. We show a
computation of $F\,\natenc 2$.
\begin{flalign*}
F\,\natenc 2 
	&\tto{} \catchin \alpha {\nrecT\ \natenc 1\, H\, \natenc 3}\\
	&\tto{} \catchin \alpha {H\,\natenc 2\,(\nrecT\ \natenc 1\,H\, \natenc 2)}\\
	&\tto{} \catchin \alpha {\nrecT\ (\throwto 0 \alpha)\ (\lm z\,\_\,.\,(\nrecT\ \natenc 1\ H\, \natenc 2) * \sucT z)\ (f\,\natenc 2})\\
	&\tto{} \catchin \alpha {(\nrecT\ \natenc 1\,H\, \natenc 2) * \natenc 5}\\
	&\tto{} \catchin \alpha {(H\,\natenc 1\,(\nrecT\ \natenc 1\,H\, \natenc 1)) * \natenc 5}\\
	&\tto{} \catchin \alpha {(\nrecT\ (\throwto 0 \alpha)\ (\lm z\,\_\,.\, (\nrecT\ \natenc 1\,H\, \natenc 1) * \sucT z)\ (f\,\natenc{1})) * \natenc 5}\\
	&\tto{} \catchin \alpha {\throwto 0 \alpha * \natenc 5}\\
	&\tto{} \catchin \alpha {\nrecT\ 0\ (\lm x y\,.\,\natenc 5 + y)\ (\throwto 0 \alpha)}\\
	&\tto{} \catchin \alpha {\throwto 0 \alpha}\\
	&\tto{} 0
\end{flalign*}
\end{example}

In order to prove that \lambdamuT{} satisfies subject reduction we have to
prove that each reduction rule preserves typing. Because some of the reduction
rules involve structural substitution it is convenient to prove an auxiliary 
result that structural substitution preserves typing. To express this
property we introduce the notion of a \emph{contextual typing judgment}, notation
\(\cjudg \Gamma \Delta E \rho \sigma\), which expresses that
$\mujudg \Gamma \Delta t \rho$ implies $\mujudg \Gamma \Delta {\cctx E t} \sigma$.

\begin{definition}
The derivation rules for the \emph{contextual typing judgment} 
\(\cjudg \Gamma \Delta E \rho \sigma\) are as shown in Figure~\ref{fig:muT_typing_context}.
\begin{figure}[h!]
\centering
\subfloat[hole]{
	\AXC{$\cjudg \Gamma \Delta \Box \rho \rho$}
	\normalAlignProof
	\DisplayProof {\hskip -4pt}}	
\subfloat[app]{
	\AXC{$\cjudg \Gamma \Delta E \rho {\sigma \to \tau}$}
	\insertBetweenHyps{\hskip -1pt}
	\AXC{$\mujudg \Gamma \Delta t \sigma$}
	\BIC{$\cjudg \Gamma \Delta {Et} \rho \tau$}
	\normalAlignProof
	\DisplayProof {\hskip -4pt}}
\subfloat[suc]{
	\AXC{$\cjudg \Gamma \Delta E \rho \natTypeT$}
	\UIC{$\cjudg \Gamma \Delta {\sucT E} \rho \natTypeT$}
	\normalAlignProof
	\DisplayProof}
	
\subfloat[nrec]{
	\AXC{$\mujudg \Gamma \Delta r \sigma$}
	\AXC{$\mujudg \Gamma \Delta s {\natTypeT \to \sigma \to \sigma}$}
	\AXC{$\cjudg \Gamma \Delta E \rho \natTypeT$}
	\TIC{$\cjudg \Gamma \Delta {\nrecT\ r\ s\ E} \rho \sigma$}
	\normalAlignProof
	\DisplayProof}
\caption{The rules for contextual typing judgments in \lambdamuT{}.}
\label{fig:muT_typing_context}
\end{figure}
\end{definition}

\begin{fact}
\label{fact:context_typing_correct}
Contextual typing judgments do indeed enjoy the intended behavior.
That is, we have $\mujudg \Gamma \Delta {\cctx E t} \sigma$ iff there is a type 
$\rho$ such that $\cjudg \Gamma \Delta E \rho \sigma$ and 
$\mujudg \Gamma \Delta t \rho$.
\end{fact}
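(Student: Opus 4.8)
The plan is to prove both implications by structural induction on the context $E$, exploiting the fact that the grammar of Definition~\ref{definition:context} builds contexts only through the application, successor and $\nrecT$ formers, and that each of these corresponds to a unique typing rule in Figure~\ref{fig:muT_typing}. Because the hole may be plugged by an arbitrary term and its type is existentially quantified in the left-to-right direction, I would first strengthen the statement to quantify universally over both the plugged term $t$ and the ambient result type $\sigma$ before running the induction, so that the induction hypothesis can be applied at a type different from $\sigma$.

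For the right-to-left direction, assume $\cjudg \Gamma \Delta E \rho \sigma$ together with $\mujudg \Gamma \Delta t \rho$; I would induct on the derivation of the contextual judgment. In the hole case $E \equiv \Box$ we have $\rho = \sigma$ and $\cctx \Box t \equiv t$, so the conclusion is immediate. In the application case $E \equiv E' s$, inverting the contextual \emph{app} rule gives $\cjudg \Gamma \Delta {E'} \rho {\sigma' \to \sigma}$ and $\mujudg \Gamma \Delta s {\sigma'}$; the induction hypothesis yields $\mujudg \Gamma \Delta {\cctx {E'} t} {\sigma' \to \sigma}$, and one application of the ordinary \emph{app} rule, together with $\cctx {E's} t \equiv (\cctx {E'} t)\,s$, closes the case. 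The successor and $\nrecT$ cases are entirely analogous, reassembling the conclusion with the corresponding rule of Figure~\ref{fig:muT_typing}.

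For the left-to-right direction, assume $\mujudg \Gamma \Delta {\cctx E t} \sigma$ and induct on the shape of $E$. The hole case takes $\rho \defined \sigma$. For $E \equiv E' s$ we have $\cctx {E's} t \equiv (\cctx {E'} t)\,s$, and the key step is a generation (inversion) argument: since the outermost constructor of this term is an application, the only typing rule that could conclude it is \emph{app}, so there is a type $\sigma'$ with $\mujudg \Gamma \Delta {\cctx {E'} t} {\sigma' \to \sigma}$ and $\mujudg \Gamma \Delta s {\sigma'}$. Applying the strengthened induction hypothesis to $E'$ at result type $\sigma' \to \sigma$ produces a witness $\rho$ with $\cjudg \Gamma \Delta {E'} \rho {\sigma' \to \sigma}$ and $\mujudg \Gamma \Delta t \rho$, whence the contextual \emph{app} rule gives $\cjudg \Gamma \Delta {E's} \rho \sigma$. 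The successor and $\nrecT$ cases proceed identically, inverting the relevant rule of Figure~\ref{fig:muT_typing} and rebuilding with the matching contextual rule.

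The proof is essentially bookkeeping, so I expect no deep obstacle; the only points requiring care are the mild strengthening of the statement so that the induction hypothesis can fire at the correct (possibly compound) type, and the appeal to syntax-directedness of the typing rules to justify the inversion steps. Since contexts never place the hole in a binding position, only the three inversion cases for application, successor and $\nrecT$ ever arise, and each of these is unambiguous.
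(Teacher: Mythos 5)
Your proof is correct; the paper states this result as a \emph{fact} and omits the proof entirely, and your two structural inductions (on the contextual derivation for one direction, on the shape of $E$ with syntax-directed inversion for the other) are exactly the routine argument the authors evidently considered too standard to spell out. One minor simplification: since the plugged term $t$ is unchanged throughout the induction on $E$ (only the ambient result type varies, e.g.\ from $\sigma$ to $\sigma' \to \sigma$ in the application case), you need only generalize the statement over $\sigma$, not over $t$.
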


\begin{fact}
\label{fact:muTv_typing_subst}
Typing is preserved under (structural) substitution.
\begin{enumerate}
\item If \mbox{$\mujudg {\Gamma, x : \rho} \Delta t \tau$} and 
	$\mujudg \Gamma \Delta r \rho$, then \mbox{$\mujudg \Gamma \Delta {\subst t x r} \tau$}.
\item If \mbox{$\mujudg \Gamma {\Delta, \alpha : \rho} t \tau$} and
	$\cjudg \Gamma \Delta E \rho \sigma$,
	then \mbox{$\mujudg \Gamma {\Delta, \beta : \sigma} {\subst t \alpha {\beta E}} \tau$}.
\end{enumerate}
We have corresponding results for commands.
\end{fact}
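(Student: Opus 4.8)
The plan is to prove both statements, together with their command analogues, simultaneously by induction on the derivation of the typing judgment for $t$ (respectively $c$), exploiting the mutual inductive structure of terms and commands. Part (1) is the standard substitution lemma and holds no surprises: the only interesting base case is the axiom, where $t \equiv x$ forces $\tau = \rho$ and $\subst x x r \equiv r$, so the hypothesis $\mujudg \Gamma \Delta r \rho$ closes it, while a variable $y \neq x$ is left unchanged. All remaining constructors ($\lambda$-abstraction, application, $\sucT$, $\nrecT$, $0$, $\mu$-abstraction, and passivation) follow at once from the induction hypotheses, invoking the weakening Fact whenever we descend under a binder. By the Barendregt convention the bound variable may be taken distinct from $x$ and from the free variables of $r$, so no capture arises.

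The substance lies in part (2), structural substitution, where I would run the induction keeping $E$ and the types $\rho,\sigma$ fixed while allowing $\Gamma,\Delta$ to grow. First I would record that the contextual typing judgment inherits weakening from the ordinary one: since the rules defining $\cjudg \Gamma \Delta E \rho \sigma$ mention the ambient context only through ordinary typing premises (the argument $t$ of $Et$, and $r,s$ of $\nrecT\ r\ s\ E$), from $\cjudg \Gamma \Delta E \rho \sigma$ with $\Gamma \subseteq \Gamma'$ and $\Delta \subseteq \Delta'$ we obtain $\cjudg {\Gamma'} {\Delta'} E \rho \sigma$. With this in hand the constructor cases are routine: for a $\lambda$-abstraction we extend $\Gamma$ by the fresh bound variable and weaken $E$ before applying the induction hypothesis; for application, $\sucT$, and $\nrecT$ we merely recombine the hypotheses on the subterms; and for activation $\mu\gamma.c$ we extend $\Delta$ by $\gamma$ (taken fresh, hence distinct from both $\alpha$ and $\beta$), weaken $E$, and reapply the induction hypothesis at the command level.

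The two passivation cases are where the definition of structural substitution does real work. When the command is $[\gamma]t$ with $\gamma \neq \alpha$, substitution leaves the label alone; the binding $\gamma : \rho'$ still lies in $\Delta$ and hence in $\Delta,\beta:\sigma$ (as $\gamma \neq \beta$), so the induction hypothesis types $\subst t \alpha {\beta E}$ at $\rho'$ and re-passivating gives the result. When the command is $[\alpha]t$, typing forces $\mujudg \Gamma {\Delta, \alpha : \rho} t \rho$, and structural substitution produces $[\beta]\cctx E {\subst t \alpha {\beta E}}$. Here I would apply the induction hypothesis to obtain $\mujudg \Gamma {\Delta, \beta : \sigma} {\subst t \alpha {\beta E}} \rho$, then combine the weakened contextual judgment $\cjudg \Gamma {\Delta, \beta : \sigma} E \rho \sigma$ with Fact~\ref{fact:context_typing_correct} to conclude $\mujudg \Gamma {\Delta, \beta : \sigma} {\cctx E {\subst t \alpha {\beta E}}} \sigma$; finally, since $\beta : \sigma \in \Delta, \beta : \sigma$, passivation yields $\musjudg \Gamma {\Delta, \beta : \sigma} {[\beta]\cctx E {\subst t \alpha {\beta E}}}$.

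I expect this matching passivation case to be the main obstacle, and indeed the whole reason the contextual typing judgment was introduced. It is the single point where $\cjudg{}{}{}{}{}$ and its correctness (Fact~\ref{fact:context_typing_correct}) are indispensable, precisely because structural substitution replaces a bare label $\alpha$ by a label $\beta$ wrapped inside the context $E$: the type of the enclosed term thereby changes from $\rho$ to $\sigma$ exactly along the ``interface'' that $\cjudg \Gamma \Delta E \rho \sigma$ was designed to record. Everything else reduces to bookkeeping about freshness of bound variables and routine appeals to weakening and the induction hypotheses.
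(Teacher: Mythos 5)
Your proposal is correct and follows essentially the same route as the paper: mutual induction on the typing derivations for terms and commands, with the matching passivation case $[\alpha]t \mapsto [\beta]\cctx E {\subst t \alpha {\beta E}}$ closed by combining the induction hypothesis with Fact~\ref{fact:context_typing_correct}. The only difference is that you spell out details the paper leaves implicit (weakening of the contextual typing judgment and the freshness bookkeeping), which is harmless and, if anything, makes the argument more self-contained.
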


\begin{proof}
The first property is proven by mutual induction on the derivations of 
\mbox{$\mujudg {\Gamma, x : \rho} \Delta t \tau$} and
\mbox{$\musjudg {\Gamma, x : \rho} \Delta c$}. All cases are straightforward.
The second property is proven by induction on the derivations of
\mbox{$\mujudg \Gamma {\Delta, \alpha : \rho} t \tau$} and
$\musjudg \Gamma {\Delta, \alpha : \rho} c$. Most cases are straightforward,
so we only treat the passivate case. 
Let \mbox{$\musjudg \Gamma {\Delta, \alpha : \rho} {[\alpha]t}$}
with $\mujudg \Gamma {\Delta, \alpha : \rho} t \rho$. By the induction
hypothesis we have 
\mbox{$\mujudg \Gamma {\Delta, \beta : \sigma} {\subst t \alpha {\beta E}} \rho$}.
This leaves us to prove that 
\mbox{$\musjudg \Gamma {\Delta, \beta : \sigma} {\subst {([\alpha]t)} \alpha {\beta E}}$}.
Since 
$\subst {([\alpha]t)} \alpha {\beta E} \equiv [\beta]\cctx E {\subst t \alpha {\beta E}}$, 
the result follows from Fact~\ref{fact:context_typing_correct} and
the induction hypothesis.
\end{proof}

\begin{theorem}
\label{theorem:muT_subject_reduction}
\index{Subject reduction!for \lambdamuT{}}
The \lambdamucal{T} satisfies subject reduction.
\end{theorem}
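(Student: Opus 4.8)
The plan is to prove subject reduction by induction on the derivation of $t \to t'$, that is, on the structure of the compatible closure of the basic rewrite rules, carried out \emph{simultaneously} for the term judgment $\mujudg \Gamma \Delta t \rho$ and the command judgment $\musjudg \Gamma \Delta c$ (reduction acts on both, e.g.\ via the $(\mu i)$-rule whose redex is a command). The compatibility (congruence) cases — where the contracted redex sits strictly inside an application, a $\sucT$, an $\nrecT$, a $\lambda$, a $\mu$, or a command $[\alpha]\,{\cdot}$ — are immediate: each typing rule of Figure~\ref{fig:muT_typing} is syntax-directed and types a compound expression solely from the types of its immediate subexpressions, so replacing one subexpression by another of the same type (supplied by the induction hypothesis) reproduces the original derivation. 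Hence the real content lies in checking that each of the eight base rules sends a well-typed left-hand side to a well-typed right-hand side of the same type.

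The $(\beta)$-rule is settled directly by Fact~\ref{fact:muTv_typing_subst}(1): inverting the app and lambda rules on $\mujudg \Gamma \Delta {(\lm x.t)r} \tau$ yields $\mujudg {\Gamma, x:\sigma} \Delta t \tau$ and $\mujudg \Gamma \Delta r \sigma$, whence $\mujudg \Gamma \Delta {\subst t x r} \tau$. For the three structural rules $(\mu\sucT)$, $(\mu R)$ and $(\mu\natTypeT)$ it is convenient to invoke Fact~\ref{fact:muT_red_singular_context} and treat them uniformly as the single singular-context rule $\cctx {E^s} {\mu \alpha.c} \to \mu \alpha.\subst c \alpha {\alpha E^s}$. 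Suppose $\mujudg \Gamma \Delta {\cctx {E^s} {\mu \alpha.c}} \sigma$. By Fact~\ref{fact:context_typing_correct} there is a type $\tau$ with $\cjudg \Gamma \Delta {E^s} \tau \sigma$ and $\mujudg \Gamma \Delta {\mu \alpha.c} \tau$; inverting the activate rule gives $\musjudg \Gamma {\Delta, \alpha:\tau} c$. Applying the command form of Fact~\ref{fact:muTv_typing_subst}(2) with this $E^s$ (and taking the renamed $\mu$-variable again to be $\alpha$, which is legitimate under the Barendregt convention since the old binding of type $\tau$ is consumed) produces $\musjudg \Gamma {\Delta, \alpha:\sigma} {\subst c \alpha {\alpha E^s}}$, and the activate rule then yields $\mujudg \Gamma \Delta {\mu \alpha.\subst c \alpha {\alpha E^s}} \sigma$. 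The one subtle bookkeeping point — the place where the contextual typing judgment earns its keep — is that the rebound $\mu$-variable must be assigned the \emph{outer} type $\sigma$ of the context rather than the hole type $\tau$, and Fact~\ref{fact:muTv_typing_subst}(2) is phrased precisely so as to deliver this.

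The rule $(\mu i)$, $[\alpha]\mu \beta.c \to \subst c \beta {\alpha\,\Box}$, is the same argument with the trivial context $E = \Box$: from $\musjudg \Gamma \Delta {[\alpha]\mu \beta.c}$ we invert passivate and activate to get $\alpha:\rho \in \Delta$ and $\musjudg \Gamma {\Delta, \beta:\rho} c$, and since $\cjudg \Gamma \Delta \Box \rho \rho$ by the hole rule, Fact~\ref{fact:muTv_typing_subst}(2) gives $\musjudg \Gamma {\Delta, \alpha:\rho} {\subst c \beta {\alpha\,\Box}}$; as $\alpha:\rho$ already lies in $\Delta$, this environment collapses to $\Delta$ (absorbing the duplicate), so $\musjudg \Gamma \Delta {\subst c \beta {\alpha\,\Box}}$. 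The numeral rules need only inversion of the nrec rule: for $(0)$, $\nrecT\ r\ s\ 0 \to r$ with $\mujudg \Gamma \Delta r \rho$ already in hand; for $(\sucT)$, we read off $\mujudg \Gamma \Delta s {\natTypeT \to \rho \to \rho}$, $\mujudg \Gamma \Delta {\natenc n} \natTypeT$ and $\mujudg \Gamma \Delta {\nrecT\ r\ s\ \natenc n} \rho$, and reassemble $s\ \natenc n\ (\nrecT\ r\ s\ \natenc n)$ by two applications. The only case demanding something beyond the stated facts is $(\mu\eta)$, $\mu \alpha.[\alpha]t \to t$ with $\alpha \notin \FCV t$: inverting activate and passivate gives $\mujudg \Gamma {\Delta, \alpha:\rho} t \rho$, and to conclude $\mujudg \Gamma \Delta t \rho$ we require a \emph{strengthening} lemma that removes the unused $\mu$-variable $\alpha$ from the environment. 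This is the main, though routine, obstacle: it is not among the substitution facts, but follows by a short separate induction on the typing derivation, the side condition $\alpha \notin \FCV t$ ensuring that the passivate rule never appeals to $\alpha$. With strengthening in place every base case closes and the simultaneous induction goes through.
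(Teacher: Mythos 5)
Your proposal is correct and follows essentially the same route as the paper's proof: reduce the three structural rules to the single singular-context rule via Fact~\ref{fact:muT_red_singular_context}, decompose via Fact~\ref{fact:context_typing_correct}, close the structural and $(\mu i)$ cases with Fact~\ref{fact:muTv_typing_subst}(2) (including the key point that the rebound $\mu$-variable receives the outer type of the context), and handle $(\mu\eta)$ by a strengthening lemma, which the paper likewise invokes (merely asserting it where you correctly note it needs its own short induction using $\alpha \notin \FCV t$). The only differences are cosmetic: the paper renames the bound $\mu$-variable to a fresh $\beta$ where you reuse $\alpha$ under the Barendregt convention, and it leaves the congruence cases and the $\beta$/numeral rules implicit where you spell them out.
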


\begin{proof}
We have to prove that all reduction rules preserve typing.
\begin{enumerate}
\item Proving that the result holds for the $\to_\beta$, $\to_0$ and 
	$\to_\sucT$-rule is straightforward, so we omit that.
\item To prove that the result holds for the $\to_{\mu R}$, $\to_{\mu\sucT}$ and 
	$\to_{\mu\natTypeT}$-rule it is sufficient to show that the result holds for
	$\cctx {E^s} {\mu \alpha.c} \to  \mu \beta.\subst c \alpha {\beta E^s}$
	by Fact~\ref{fact:muT_red_singular_context}. Given 
	$\mujudg \Gamma \Delta {\cctx {E^s} {\mu \alpha.c}} \tau$ we use 
	Fact~\ref{fact:context_typing_correct}	to obtain a type $\sigma$ such
	that $\mujudg \Gamma \Delta {\mu\alpha.c} \sigma$ and
	$\cjudg \Gamma \Delta E \sigma \tau$.
	\begin{small}\begin{flalign*}
	\AXC{$\musjudg \Gamma {\Delta, \alpha : \sigma} c$}
	\UIC{$\mujudg \Gamma \Delta {\mu\alpha.c} \sigma$}
	\AXC{$\cjudg \Gamma \Delta {E^s} \sigma \tau$}
	\BIC{$\mujudg \Gamma \Delta {\cctx {E^s} {\mu \alpha.c}} \tau$}
	\DisplayProof 
		& \ \to \ 
		\AXC{$\musjudg {\Gamma} {\Delta, \beta : \tau} {\subst c \alpha {\beta E^s}}$}
		\UIC{$\mujudg \Gamma \Delta {\mu\beta.\subst c \alpha {\beta E^s}} \tau$}
		\DisplayProof
	\end{flalign*}
	\end{small}
	\noindent Here we have
	$\musjudg {\Gamma} {\Delta, \beta : \tau} {\subst c \alpha {\beta E^s}}$
	by Fact~\ref{fact:muTv_typing_subst}.
\item For the $\to_{\mu\eta}$-rule we have the following.
	\begin{flalign*}
	\AXC{$\mujudg \Gamma {\Delta, \alpha: \rho} t \rho$}
	\UIC{$\musjudg \Gamma \Delta {[\alpha]t}$}
	\UIC{$\mujudg \Gamma \Delta {\mu \alpha.[\alpha]t} \rho$}
	\DisplayProof 
		& \ \to \ \mujudg \Gamma \Delta t \rho
	\end{flalign*}
	Since $\alpha\notin \FV t$, we have $\mujudg \Gamma \Delta t \rho$ by 
	strengthening.
\item For the $\to_{\mu i}$-rule we have the following.
	\begin{flalign*}
	\AXC{$\musjudg \Gamma {\Delta, \alpha:\rho, \beta:\rho} c$}
	\UIC{$\mujudg \Gamma {\Delta, \alpha:\rho} {\mu\beta.c} \rho$}
	\UIC{$\musjudg \Gamma {\Delta, \alpha:\rho} {[\alpha]\mu\beta.c}$}
	\DisplayProof 
		& \ \to \ 
		\AXC{$\musjudg \Gamma {\Delta, \alpha:\rho} {\subst c \beta {\alpha\ \Box}}$}
		\DisplayProof 
	\end{flalign*}
	Here we have $\musjudg \Gamma {\Delta, \alpha:\rho} {\subst c \beta {\alpha\ \Box}}$ 
	by Fact~\ref{fact:muTv_typing_subst} and the fact
	that $\cjudg \Gamma {\Delta, \alpha:\rho} {\Box} \rho \rho$. \qedhere
\end{enumerate}
\end{proof}

The $\to_\sucT$-rule, in contrast to the corresponding
rule of \lambdaT{} (Definition~\ref{definition:GodelsT_reduction}), 
only allows conversion when the numerical argument is a numeral. This 
restriction ensures that primitive recursion is not performed on terms that 
might reduce to a term of the shape $\mu\alpha.c$. If we omit this restriction 
we lose confluence. 
\begin{example}
We illustrate this by considering a variant of our system 
with the following rule instead.
\begin{flalign*}
	\nrecT\ r\ s\ (\sucT t) \to{} & s\ t\ (\nrecT\ r\ s\ t) \tag{$\sucT'$}
\end{flalign*}
Now we can reduce the term $t \equiv \mu \alpha.[\alpha]\nrecT\ 0\ (\lm x h.\natenc 2)\ (\sucT\muthrow[\alpha]\natenc 4)$
to two distinct normal forms:
\begin{flalign*}
t \equiv{}& \mu \alpha.[\alpha]\nrecT\ 0\ (\lm x h.\natenc 2)\ (\sucT\muthrow[\alpha]\natenc 4)\\
	\to{} & \mu \alpha.[\alpha]\nrecT\ 0\ (\lm x h.\natenc 2)\ (\muthrow[\alpha]\natenc 4) \tag{$\mu\sucT$} \\
	\to{} & \mu \alpha.[\alpha]\muthrow[\alpha]\natenc 4 \tag{$\mu\natTypeT$}\\
	\to{} & \mu \alpha.[\alpha]\natenc 4 \tag{$\mu i$} \\
	\to{} & \natenc 4 \tag{$\mu\eta$}
\end{flalign*}
and:
\begin{flalign*}
t \equiv{}& \mu \alpha.[\alpha]\nrecT\ 0\ (\lm x h.\natenc 2)\ (\sucT\muthrow[\alpha]\natenc 4) \\
	\to{} & \mu \alpha.[\alpha](\lm x h.\natenc 2)\ (\muthrow[\alpha]\natenc 4)\ (\nrecT\ 0\ (\lm x h.\natenc 2)\ (\muthrow[\alpha]\natenc 4)) \tag{$\sucT'$} \\
	\tto{} & \mu \alpha.[\alpha]\natenc 2 \tag{$\beta$} \\
	\to{} & \natenc 2 \tag{$\mu\eta$}
\end{flalign*}
\end{example}

Alternatively, in order to obtain a confluent system, it is possible to remove
the $\to_{\sucT}$-rule while retaining the unrestricted $\to_{\mu\sucT'}$-rule. However, 
then we can construct closed terms $t \typeof \natTypeT$ that are in normal form 
but are not a numeral.
An example of such a term is $\mu\alpha.[\alpha]\sucT\mu\beta.[\alpha]0$. 

\begin{lemma}
\label{muT:typed_values}
Given a value $v$ such that $\mujudg {} \Delta v \rho$, we have:
\begin{enumerate}
\item If $\rho = \natTypeT$, then $v \equiv \natenc n$.
\item If $\rho = \sigma \to \tau$, then $t \equiv \lm x.r$ for some variable $x$ and term $r$. 
\end{enumerate}
\end{lemma}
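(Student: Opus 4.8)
The plan is to proceed by structural induction on the value $v$, exploiting the fact that the typing rules of Figure~\ref{fig:muT_typing} are syntax-directed: each term constructor is the subject of exactly one typing rule, so the outermost constructor of $v$ simultaneously determines which rule concludes the derivation of $\mujudg{}{\Delta}{v}{\rho}$ and pins down the shape of $\rho$. Since a value is by definition one of $0$, $\sucT w$, or $\lm x.r$, there are only three cases, and crucially neither $\mu$-abstractions nor variables are values.

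First I would dispatch the two base cases by inversion on the typing judgment. If $v \equiv 0$, the only applicable rule is \emph{zero}, hence $\rho = \natTypeT$ and $v \equiv \natenc 0$, so part~1 holds; the arrow alternative cannot arise because no typing rule assigns $0$ a function type. If $v \equiv \lm x.r$, the only applicable rule is \emph{lambda}, so $\rho = \sigma \to \tau$ for some $\sigma, \tau$ and $v$ is already of the required form for part~2; the $\natTypeT$ alternative is impossible for the symmetric reason.

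The one genuinely inductive case is $v \equiv \sucT w$. Inversion forces the \emph{suc} rule, giving $\rho = \natTypeT$ together with $\mujudg{}{\Delta}{w}{\natTypeT}$. Since $w$ is the immediate subterm under $\sucT$ in a value, $w$ is itself a value, so the induction hypothesis (part~1) applies and yields $w \equiv \natenc n$; therefore $v \equiv \sucT \natenc n \equiv \natenc{n+1}$, establishing part~1, while part~2 is again vacuous.

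I do not expect any real obstacle: the argument rests entirely on the observation that the typing system is syntax-directed and that the grammar of values is strictly narrower than that of terms, excluding precisely the two constructors ($\mu\alpha.c$ and variables) that could otherwise carry an arbitrary type. The only points requiring a moment's care are noticing that the empty $\lambda$-context plays no role — values are never variables, so the lemma in fact holds over any $\Gamma$ — and that the successor case genuinely needs the inductive hypothesis rather than a single round of inversion.
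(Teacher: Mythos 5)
Your proof is correct and follows exactly the paper's approach: the paper's proof is simply ``by induction on the structure of values,'' and your three-case analysis ($0$, $\lm x.r$, and $\sucT w$ with the induction hypothesis applied to the value $w$) is precisely the expansion of that argument. The additional observations---that the typing rules are syntax-directed and that the lemma holds for arbitrary $\Gamma$---are accurate but not needed beyond what the paper intends.
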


\begin{proof}
This result is proven by induction on the structure of values.
\end{proof}

\begin{lemma}
\label{lemma:muT_nf_open}
Given a term $t$ that is in normal and such that $\mujudg {} \Delta t \rho$, then
$t$ is a value or $t \equiv \mu\alpha.[\beta]v$ for some value $v$.
\end{lemma}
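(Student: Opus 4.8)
The plan is to argue by structural induction on the term $t$, generalizing over the $\mu$-context $\Delta$ and the type $\rho$ while keeping the $\lambda$-context empty (this emptiness is what makes the classification so clean). The recurring device is that any subterm of a normal form is itself normal, because reduction is the compatible closure of the basic rules; hence as soon as a subterm exposes a redex, $t$ itself fails to be normal. Combining this with Lemma~\ref{muT:typed_values}, which pins down the shape of typed values, I can eliminate every term constructor except $\lm$-abstraction, $0$, and a suitably restricted $\mu$-abstraction.

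I would first dispatch the cheap cases. Since the $\lambda$-context is empty there are no free $\lambda$-variables, so the variable case is vacuous, while $\lm x.r$ and $0$ are already values. For $t \equiv \sucT s$, the subterm $s$ is a normal form of type $\natTypeT$, so by the induction hypothesis it is either a value or of the form $\mu\alpha.[\beta]v$; in the latter case $\sucT(\mu\alpha.c)$ is a $\to_{\mu\sucT}$-redex, contradicting normality, and in the former Lemma~\ref{muT:typed_values} forces $s \equiv \natenc n$, so that $t \equiv \natenc{n+1}$ is again a value.

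For $t \equiv rs$ and $t \equiv \nrecT\ r\ s\ u$ I would show that these cases cannot occur at all. The relevant argument ($r$ of arrow type, respectively $u$ of type $\natTypeT$) is a normal subterm, hence by the induction hypothesis a value or a $\mu$-term. If it is a $\mu$-term, then $t$ is a $\to_{\mu R}$-redex, respectively a $\to_{\mu\natTypeT}$-redex; if it is a value, Lemma~\ref{muT:typed_values} makes it $\lm x.r'$, respectively $\natenc n$, so $t$ is a $\to_\beta$-redex, respectively a $\to_0$- or $\to_\sucT$-redex. In every subcase $t$ is reducible, contradicting normality.

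The only remaining, and most delicate, case is $t \equiv \mu\alpha.c$. The sole command former gives $c \equiv [\beta]s$, so $t \equiv \mu\alpha.[\beta]s$ with $s$ a normal subterm typed in the \emph{extended} context $\Delta,\alpha:\rho$, which is exactly why generalizing over $\Delta$ in the induction is essential. Applying the induction hypothesis to $s$, either $s$ is a value, yielding precisely the claimed form $\mu\alpha.[\beta]v$, or $s \equiv \mu\gamma.[\delta]w$; but then $[\beta]\mu\gamma.[\delta]w$ is a $\to_{\mu i}$-redex, so $t$ is not normal. Hence $s$ must be a value and $t$ has the asserted shape. (Note the classification concerns only the shape; the $\to_{\mu\eta}$-rule merely further restricts which terms $\mu\alpha.[\beta]v$ are actually normal, and is irrelevant here.) The main obstacle is precisely this $\mu$-case: the real content is seeing that the body of the command must be a value, which rests on the $\to_{\mu i}$-rule excluding a nested $\mu$ and on carrying the $\mu$-context generalization correctly through the induction.
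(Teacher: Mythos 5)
Your proof is correct and follows essentially the same route as the paper's: the paper argues by induction on the typing derivation rather than structural induction on $t$, but with an empty $\lambda$-context these coincide, and every case matches yours one-for-one, including dispatching application and $\nrecT$ via Lemma~\ref{muT:typed_values} plus the $\to_\beta$, $\to_{\mu R}$, $\to_0$, $\to_\sucT$, $\to_{\mu\natTypeT}$ redexes, and the decisive (act/pas) case where the $\to_{\mu i}$-rule excludes a nested $\mu$ under the extended context $\Delta, \alpha:\rho$. Your side remark that the $\to_{\mu\eta}$-rule is irrelevant to the shape classification is also accurate.
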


\begin{proof}
By induction on the derivation $\mujudg {} {\Delta} t \rho$.
\begin{enumerate}
\item[(var)] Let $\mujudg {} \Delta x \rho$ with $x : \rho \in \emptyset$. Now we obtain a contradiction since $x : \rho \notin \emptyset$.
\item[($\lambda$)] Let $\mujudg {} \Delta  {\lambda x.r} {\sigma \to \tau}$. Now we are immediately done.
\item[(app)] Let $\mujudg {} \Delta {rs} \tau$ with $\mujudg {} \Delta  r {\sigma \to \tau}$ and 
	$\mujudg {} \Delta  s \sigma$. Now by the induction hypothesis and Lemma~\ref{muT:typed_values} we have $r \equiv \lm x.r'$
	or $r \equiv \mu \alpha.[\beta]v$. But since $rs$ should be in normal form we 
	obtain a contradiction.
\item[(zero)] Let $\mujudg {} \Delta  0 \natTypeT$. Now we are immediately done.
\item[(suc)] Let $\mujudg {} \Delta  {\sucT t} \natTypeT$ with $\mujudg {} \Delta  t \natTypeT$. 
	Now we have $t \equiv \natenc n$ or	$t \equiv \mu\alpha.[\beta]v$ 
	by the induction hypothesis and Lemma~\ref{muT:typed_values}. In the former case
	we are immediately done, in the latter case we obtain a contradiction because
	the $\to_{\mu \sucT}$-rule can be applied.
\item[(nrec)] Let $\mujudg {} \Delta  {\nrecT\ r\ s\ t} \rho$ with $\mujudg {} \Delta  t \natTypeT$. 
	Now we have $t \equiv \natenc n$ or	$t \equiv \mu\alpha.[\beta]v$ by the 
	induction hypothesis and Lemma~\ref{muT:typed_values}.
	But in both cases we obtain a contradiction because the reduction rules $\to_{\mu 0}$, $\to_{\mu \sucT}$ and 
	$\to_{\mu \natTypeT}$ can be applied, respectively.
\item[(act/pas)] Let $\mujudg {} \Delta {\mu \alpha.[\beta]t} \rho$ with
	$\mujudg {} {\Delta, \alpha: \rho} t \tau$ and $\beta : \tau \in (\Delta, \alpha: \rho)$. 
	Now we have that $t$ is a value or $t \equiv \mu\alpha.[\beta]v$ by the induction hypothesis.
	In the former case we are immediately done, in the latter case we obtain a contradiction
	because the $\to_{\mu i}$-rule can be applied. \qedhere
\end{enumerate}
\end{proof}

\begin{theorem}
\label{theorem:muT_nf}
Given a term $t$ that is in normal form and such that \mbox{$\mujudg {} {} t \natTypeT$}, then 
$t \equiv \natenc n$ for some $n \in \nat$.
\end{theorem}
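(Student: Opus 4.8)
The plan is to derive this theorem directly from the structural characterization of open normal forms in Lemma~\ref{lemma:muT_nf_open}, specialized to the empty $\mu$-context. Instantiating that lemma with $\Delta = {}$ (empty) and $\rho = \natTypeT$, I obtain that either $t$ is a value, or $t \equiv \mu\alpha.[\beta]v$ for some value $v$. The first case is immediate: a value of type $\natTypeT$ is a numeral by Lemma~\ref{muT:typed_values}, so $t \equiv \natenc n$ as required. The real work is to rule out the second case, and this is exactly where closedness in the $\mu$-context (not merely in the $\lambda$-context) is essential.

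For the second case I would inspect the typing derivation of \mbox{$\mujudg {} {} {\mu\alpha.[\beta]v} \natTypeT$}. The activate rule forces \mbox{$\musjudg {} {\alpha : \natTypeT} {[\beta]v}$}, and the passivate rule then requires both $\beta : \tau \in (\alpha : \natTypeT)$ and \mbox{$\mujudg {} {\alpha : \natTypeT} v \tau$} for some type $\tau$. Since the only $\mu$-variable in scope is the freshly bound $\alpha$, the membership condition collapses to $\beta = \alpha$ and $\tau = \natTypeT$. Hence $t \equiv \mu\alpha.[\alpha]v$ where $v$ is a value with \mbox{$\mujudg {} {\alpha : \natTypeT} v \natTypeT$}.

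It remains to observe that such a term cannot be in normal form. By Lemma~\ref{muT:typed_values} the value $v$ of type $\natTypeT$ is a numeral $\natenc n$, and numerals $\sucT^n 0$ contain no free $\mu$-variables, so $\alpha \notin \FCV v$. But then the side condition of the $\to_{\mu\eta}$-rule is satisfied, and $\mu\alpha.[\alpha]v \to v$, contradicting the assumption that $t$ is in normal form. Therefore the second case is vacuous and $t \equiv \natenc n$.

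The one delicate point is the elimination of the $\mu\alpha.[\beta]v$ case: the argument hinges on noticing that emptiness of $\Delta$ collapses $\beta$ onto the bound variable $\alpha$, and then that a closed numeral is $\mu$-closed, so that $\mu\eta$ necessarily fires. I expect this to be the main obstacle to state carefully, whereas the value case and the two appeals to Lemmas~\ref{lemma:muT_nf_open} and~\ref{muT:typed_values} are entirely routine.
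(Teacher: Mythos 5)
Your proposal is correct and follows essentially the same route as the paper's own proof: both apply Lemma~\ref{lemma:muT_nf_open}, dispatch the value case via Lemma~\ref{muT:typed_values}, and refute the case $t \equiv \mu\alpha.[\beta]v$ by forcing $\beta = \alpha$ (the paper argues this directly from $\mu$-closedness of $t$, you via inversion of the typing derivation with $\Delta$ empty---the same point in substance) and then firing the $\to_{\mu\eta}$-rule, since the numeral $v$ satisfies $\alpha \notin \FCV v$. Your explicit check of the $\mu\eta$ side condition is a detail the paper leaves implicit, but it changes nothing in the structure of the argument.
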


\begin{proof}
By Lemma~\ref{lemma:muT_nf_open} we obtain that $t \equiv v$ or 
$t \equiv \mu\alpha.[\beta]v$ for some value $v$. In the former case we have
$t \equiv \natenc n$ by Lemma~\ref{muT:typed_values}.
In the latter case we have $\beta = \alpha$ since $t$ is closed for 
$\mu$-variables, so $t \equiv \mu\alpha.[\alpha]\natenc n$ by Lemma~\ref{muT:typed_values}.
But now we obtain a contradiction because we can apply the $\to_{\mu \eta}$-rule.
\end{proof}

\section{CPS-translation of \texorpdfstring{$\lambdamuT$}{lambda-mu-T} into \texorpdfstring{$\lambdaT$}{lambda-T}}
\label{section:muT_cps}
\index{CPS-translation!of $\lambdamuT$ into $\lambdaT$}
In this section we will present a CPS-translation from \lambdamuT{} 
into \lambdaT. We will use this CPS-translation to prove the
main result of this section: the functions that are representable in \lambdamuT{} are exactly
the functions that are provably recursive in first-order arithmetic.

\begin{definition}
\label{definition:muT_negative_translation}
\index{Negative translation}
Let $\neg \rho$ denote $\rho \to \bot$ for a fixed type $\bot$. Given
a type $\rho$, the \emph{negative translation} $\muTCPS \rho$ of $\rho$ 
is mutually inductively defined with $\muTCPSt \rho$ as follows.
\begin{flalign*}
	\muTCPS \rho & \defined \neg\neg\muTCPSt \rho\\ 
	\muTCPSt \natTypeT &\defined \natTypeT \\
	\muTCPSt {(\sigma \to \tau)} &\defined \muTCPS\sigma \to \muTCPS\tau
\end{flalign*}
\end{definition}

\begin{definition}
Given $\lambdaT$-terms $t$ and $r$, \emph{the CPS-application} 
$\muTCPSapp t r$ of $t$ and $r$ is defined as follows.
\[
	\muTCPSapp t r \defined \lm k.t(\lm l.lrk)
\]
\end{definition}

\begin{definition}
Given a $\lambdaT$-term $t$, \emph{the negative} of $t$ is defined as 
follows.
\[
	\natneg t \defined \lm k.kt
\]
\end{definition}

\begin{fact}
\label{fact:muT_cps_app_welltyped}
If $\ljudg \Gamma t {\muTCPS{(\sigma \to \tau)}}$ and 
\mbox{$\ljudg \Gamma r {\muTCPS\sigma}$}, 
then $\ljudg \Gamma {\muTCPSapp t r} {\muTCPS\tau}$.
\end{fact}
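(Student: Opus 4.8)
The plan is to construct the \lambdaT{}-typing derivation for $\muTCPSapp t r \defined \lm k.t(\lm l.lrk)$ directly, unfolding the negative translation and applying the \emph{lambda} and \emph{app} rules of Figure~\ref{fig:GodelsT_typing}. First I would record the two unfoldings coming from Definition~\ref{definition:muT_negative_translation}, namely $\muTCPS{(\sigma \to \tau)} = \neg\neg(\muTCPS\sigma \to \muTCPS\tau)$ and $\muTCPS\tau = \neg\neg\muTCPSt\tau$, since keeping these layers of negation explicit is essentially the whole content of the argument.

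Since the goal type is $\muTCPS\tau = (\neg\muTCPSt\tau) \to \bot$, I would use the \emph{lambda} rule to introduce the bound variable $k : \neg\muTCPSt\tau$, reducing the goal to $\ljudg {\Gamma, k : \neg\muTCPSt\tau}{t(\lm l.lrk)}{\bot}$. Now $t$ has type $\neg\neg(\muTCPS\sigma \to \muTCPS\tau) = (\neg(\muTCPS\sigma \to \muTCPS\tau)) \to \bot$, so by the \emph{app} rule it suffices to type its argument as $\ljudg {\Gamma, k : \neg\muTCPSt\tau}{\lm l.lrk}{\neg(\muTCPS\sigma \to \muTCPS\tau)}$, where $t$ (and below $r$) are brought into the extended context by the evident weakening of $\Gamma$.

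For this inner abstraction I would introduce $l : \muTCPS\sigma \to \muTCPS\tau$ and check that the body $lrk$ has type $\bot$: applying $l$ to the hypothesis $r : \muTCPS\sigma$ gives $lr : \muTCPS\tau = (\neg\muTCPSt\tau) \to \bot$, and applying this to $k : \neg\muTCPSt\tau$ yields $lrk : \bot$, as required. Reading the derivation from the leaves back to the root then delivers $\ljudg \Gamma {\muTCPSapp t r}{\muTCPS\tau}$.

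I do not expect a genuine obstacle here: the statement is a routine type-checking exercise, and the only thing to get right is the double-negation bookkeeping, so that the continuation $k : \neg\muTCPSt\tau$ is exactly what $lr : \neg\neg\muTCPSt\tau$ consumes. The single place one can slip is confusing $\muTCPS\tau$ with $\muTCPSt\tau$, i.e. dropping or adding one layer of $\neg$; making the unfoldings from the first paragraph explicit removes that risk.
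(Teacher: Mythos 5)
Your proposal is correct and matches the paper's own proof exactly: the paper establishes this fact by the very same typing derivation, introducing $k$ of type $\neg\muTCPSt\tau$, applying $t : \neg\neg(\muTCPS\sigma \to \muTCPS\tau)$ to $\lm l.lrk$, and typing the body via $lr : \muTCPS\tau = \neg\neg\muTCPSt\tau$ applied to $k$. Your explicit bookkeeping of the two unfoldings of the negative translation is precisely the content of the paper's derivation tree, so there is nothing to add.
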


\begin{definition}
\label{definition:muT_cps}
Given a $\lambdamuT$-term $t$, then the \emph{CPS-translation} $\muTCPS t$ of $t$
into $\lambdaT$ is inductively defined as follows.
\begin{flalign*}
	\muTCPS x &\defined \lm k.xk \\
	\muTCPS{(\lm x.t)} &\defined \lm k.k (\lm x.\muTCPS t) \\
	\muTCPS{(tr)} &\defined \muTCPSapp {\muTCPS t} {\muTCPS r} \\
	\muTCPS 0 &\defined \natneg 0 \\
	\muTCPS{(\sucT t)} &\defined \lm k.\muTCPS t (\lm l.k(\sucT l)) \\
	\muTCPS{(\nrecT_\rho\ r\ s\ t)} &\defined 
		\lm k.\muTCPS t(\lm l.\nrecT\ \muTCPS r\ s'\ l\ k) \\
		& {\text where }\ s' \defined \lm xp.\muTCPSapp {\muTCPSapp {\muTCPS s} {\natneg x}} p \\
	\muTCPS{(\mu\alpha.c)} &\defined \lm k_\alpha . \muTCPS c \\
	\muTCPS{([\alpha]t)} &\defined \muTCPS t k_\alpha
\end{flalign*}
Here $k_\alpha$ is a fresh $\lambda$-variable for each $\mu$-variable $\alpha$.
\end{definition}

In the translation of $\nrecT_\rho\ r\ s\ t$ we see that we are required to evaluate
$t$ first, simply because it is the only way to obtain a numeral from $t$.

\begin{fact}
\label{fact:muT_cps_natneg_welltyped}
If $\ljudg \Gamma t \natTypeT$, then 
$\ljudg \Gamma {\natneg t} {\muTCPS\natTypeT}$.
\end{fact}

\begin{theorem}
\label{theorem:muT_cps_welltyped}
The translation from $\lambdamuT$ into $\lambdaT$ preserves typing.
That is:
\[
	\mujudg \Gamma \Delta t \rho\text{ in }\lambdamuT 
		\qquad \implies \qquad
	\ljudg {\muTCPS\Gamma,\muTCPS\Delta} {\muTCPS t} {\muTCPS\rho}\text{ in }\lambdaT
\]
where $\muTCPS\Gamma = \{ x : \muTCPS\rho \separator x : \rho \in \Gamma \}$
and $\muTCPS\Delta = \{ k_\alpha : \neg\muTCPSt\rho \separator \alpha : \rho \in \Delta \}$.
\end{theorem}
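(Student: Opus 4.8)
The plan is to prove the statement by a simultaneous induction on the two mutually inductive typing derivations, those of $\mujudg \Gamma \Delta t \rho$ and of $\musjudg \Gamma \Delta c$. Since a command carries no type, and its translation $\muTCPS{([\alpha]t)} \defined \muTCPS t\,k_\alpha$ is a term of type $\bot$, the induction hypothesis must be strengthened into a pair of clauses: (i) if $\mujudg \Gamma \Delta t \rho$ then $\ljudg {\muTCPS\Gamma,\muTCPS\Delta} {\muTCPS t} {\muTCPS\rho}$, and (ii) if $\musjudg \Gamma \Delta c$ then $\ljudg {\muTCPS\Gamma,\muTCPS\Delta} {\muTCPS c} \bot$. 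The two clauses interact only through the \textsf{activate} and \textsf{passivate} rules, so every other case is handled entirely within clause~(i), unfolding the relevant line of Definition~\ref{definition:muT_cps} and checking types directly against $\muTCPS\rho \equiv \neg\neg\muTCPSt\rho$.

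For the structural cases I would argue as follows. The variable case is an $\eta$-expansion: from $x : \muTCPS\rho \in \muTCPS\Gamma$ one reads off $\ljudg{\muTCPS\Gamma,\muTCPS\Delta}{\lm k.xk}{\neg\neg\muTCPSt\rho}$. The $\lambda$-case and the application case are routine, the latter being exactly Fact~\ref{fact:muT_cps_app_welltyped}, while the \textsf{zero} and \textsf{suc} cases use Fact~\ref{fact:muT_cps_natneg_welltyped} together with $\muTCPSt\natTypeT \equiv \natTypeT$. The two $\mu$-rules are where the clauses meet. For \textsf{passivate}, clause~(i) gives $\muTCPS t : \neg\neg\muTCPSt\rho$ and, since $\alpha:\rho\in\Delta$, the environment supplies $k_\alpha : \neg\muTCPSt\rho$, so $\muTCPS t\,k_\alpha : \bot$, establishing clause~(ii). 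For \textsf{activate}, clause~(ii) gives $\muTCPS c : \bot$ under the extra hypothesis $k_\alpha : \neg\muTCPSt\rho$, and abstracting yields $\lm k_\alpha.\muTCPS c : \neg\muTCPSt\rho\to\bot \equiv \muTCPS\rho$, establishing clause~(i).

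The case I expect to be the main obstacle is \textsf{nrec}, because the translation replaces the step function $s$ by the auxiliary term $s' \defined \lm xp.\muTCPSapp {\muTCPSapp {\muTCPS s} {\natneg x}} p$, and one must check that $s'$ has exactly the shape $\natTypeT \to \muTCPS\rho \to \muTCPS\rho$ demanded by the $\lambdaT$-recursor. The subtlety is that the $\lambdaT$ recursor $\nrecT\ \muTCPS r\ s'\ l$ recurses at type $\muTCPS\rho$, so $s'$ must turn a numeral $x : \natTypeT$ and a doubly-negated value $p : \muTCPS\rho$ into an output of type $\muTCPS\rho$, rather than operating on the raw $\lambdamuT$-types. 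To verify this I would, inside $s'$, first use Fact~\ref{fact:muT_cps_natneg_welltyped} to get $\natneg x : \muTCPS\natTypeT$, then apply Fact~\ref{fact:muT_cps_app_welltyped} twice: once to $\muTCPS s : \muTCPS{(\natTypeT\to\rho\to\rho)}$ and $\natneg x : \muTCPS\natTypeT$ to obtain $\muTCPSapp {\muTCPS s}{\natneg x} : \muTCPS{(\rho\to\rho)}$, and once more against $p : \muTCPS\rho$ to obtain type $\muTCPS\rho$. With $s' : \natTypeT\to\muTCPS\rho\to\muTCPS\rho$ established, and with $\muTCPS r : \muTCPS\rho \equiv \neg\neg\muTCPSt\rho$ and $\muTCPS t : \neg\neg\natTypeT$ from the induction hypothesis, the remaining bookkeeping is routine: $l : \natTypeT$, so $\nrecT\ \muTCPS r\ s'\ l : \muTCPS\rho \equiv \neg\muTCPSt\rho\to\bot$, which applied to $k : \neg\muTCPSt\rho$ yields $\bot$, and abstracting over $k$ gives $\muTCPS{(\nrecT_\rho\ r\ s\ t)} : \neg\neg\muTCPSt\rho \equiv \muTCPS\rho$, completing the case and the induction.
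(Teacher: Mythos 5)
Your proposal is correct and takes essentially the same route as the paper: a mutual induction on the typing derivations for terms and commands, with the strengthened clause that command translations have type $\bot$, the routine cases discharged via Facts~\ref{fact:muT_cps_app_welltyped} and~\ref{fact:muT_cps_natneg_welltyped}, and the \textsf{nrec} case singled out as the interesting one. There you derive $s' : \natTypeT \to \muTCPS\rho \to \muTCPS\rho$ and assemble the outer derivation for $\lm k.\muTCPS t(\lm l.\nrecT\ \muTCPS r\ s'\ l\ k)$ exactly as the paper does.
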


\begin{proof}
We prove that we have $\mujudg \Gamma \Delta t \rho$ and $\musjudg \Gamma \Delta c$
by mutual induction on the derivations $\ljudg {\muTCPS\Gamma,\muTCPS\Delta} {\muTCPS t} {\muTCPS\rho}$ and
$\ljudg {\muTCPS\Gamma,\muTCPS\Delta} {\muTCPS t} \bot$, respectively. Most of the
cases are straightforward, so we treat just one interesting case.
\begin{enumerate}
\item[(nrec)] Let $\mujudg \Gamma \Delta {\nrecT_\rho\ r\ s\ t} \rho$ with
	$\mujudg \Gamma \Delta r \rho$, $\mujudg \Gamma \Delta s {\natTypeT\to\rho\to\rho}$ 
	and $\mujudg \Gamma \Delta t \natTypeT$. Now we
	have $\ljudg {\muTCPS\Gamma,\muTCPS\Delta} {\muTCPS r} {\muTCPS\rho}$,
	$\ljudg {\muTCPS\Gamma,\muTCPS\Delta} {\muTCPS s} {\muTCPS{(\natTypeT \to \rho \to \rho)}}$
	and $\ljudg {\muTCPS\Gamma,\muTCPS\Delta} {\muTCPS t} {\muTCPS\natTypeT}$
	by the induction hypothesis. Furthermore we have 
	$s' \equiv \lm xp.\muTCPSapp {\muTCPSapp {\muTCPS s} {\natneg x}} p \typeof \natTypeT \to \muTCPS \rho \to \muTCPS \rho$
	as shown below.
	\begin{prooftree}
	\AXC{$\muTCPS s \typeof \muTCPS{(\natTypeT \to \rho \to \rho)}$}
		\AXC{$x \typeof \natTypeT$}
		\RightLabel{(a)}
		\UIC{$\natneg x \typeof \muTCPS\natTypeT$}
	\RightLabel{(b)}
	\BIC{$\muTCPSapp {\muTCPS s} {\natneg x} \typeof \muTCPS{(\rho \to \rho)}$}
		\AXC{$p \typeof \muTCPS \rho$}
	\RightLabel{(c)}
	\BIC{$\muTCPSapp {\muTCPSapp {\muTCPS s} {\natneg x}} p \typeof \muTCPS \rho$}
	\UIC{$\lm xp.\muTCPSapp {\muTCPSapp {\muTCPS s} {\natneg x}} p \typeof \natTypeT \to \muTCPS \rho \to \muTCPS \rho$}
	\end{prooftree}
	Here, step (a) follows from Fact~\ref{fact:muT_cps_natneg_welltyped} and step (b)
	and (c) follow from  Fact~\ref{fact:muT_cps_app_welltyped}.
	So	\mbox{$\ljudg {\muTCPS\Gamma,\muTCPS\Delta} {\muTCPS{(\nrecT_\rho\ r\ s\ t)}} {\muTCPS\rho}$}
	as shown below.
	\[
	\AXC{$\muTCPS t \typeof \muTCPS\natTypeT$}
			\AXC{$\muTCPS r \typeof \muTCPS \rho$}
			\AXC{$s' \typeof \natTypeT \to \muTCPS \rho \to \muTCPS \rho$}
			\AXC{$l \typeof \natTypeT$}
		\TIC{$\nrecT\ \muTCPS r\ s'\ l \typeof \muTCPS \rho$}
			\AXC{$k \typeof \neg\muTCPSt \rho$}
		\BIC{$\nrecT\ \muTCPS r\ s'\ l\ k \typeof \bot$}
		\UIC{$\lm l.\nrecT\ \muTCPS r\ s'\ l\ k \typeof \neg\natTypeT$}
	\BIC{$\muTCPS t(\lm l.\nrecT\ \muTCPS r\ s'\ l\ k) \typeof \bot$}
	\UIC{$\lm k.\muTCPS t(\lm l.\nrecT\ \muTCPS r\ s'\ l\ k) \typeof \muTCPS\rho$}
	\normalAlignProof
	\DisplayProof
	\]
	\qedafterarray
\end{enumerate}
\end{proof}

\begin{fact}
\label{fact:muT_cps_nat}
For each $n \in \nat$ we have $\muTCPS{\natenc n} \tto \natneg{\natenc n}$.
\end{fact}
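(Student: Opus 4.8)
The plan is to prove the statement by a straightforward induction on $n$, unfolding the definitions of $\muTCPS{(\cdot)}$, $\natenc{(\cdot)}$ and $\natneg{(\cdot)}$ and performing the two $\beta$-steps that the successor clause of the CPS-translation generates.

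For the base case $n = 0$, I would simply observe that the definitions collapse: $\muTCPS{\natenc 0} \equiv \muTCPS 0 \equiv \natneg 0 \equiv \natneg{\natenc 0}$, so the claim holds with zero reduction steps, using reflexivity of $\tto$. For the inductive step I would assume $\muTCPS{\natenc n} \tto \natneg{\natenc n}$ and unfold $\muTCPS{\natenc{n+1}} \equiv \muTCPS{(\sucT\,\natenc n)} \equiv \lm k.\muTCPS{\natenc n}(\lm l.k(\sucT l))$. First I would apply the induction hypothesis to the subterm $\muTCPS{\natenc n}$, rewriting it to $\natneg{\natenc n} \equiv \lm k'.k'\,\natenc n$; since $\tto$ is closed under the compatible (context) closure, this rewriting is legitimate even though it takes place underneath the binder in the context $\lm k.\Box\,(\lm l.k(\sucT l))$, yielding $\lm k.(\lm k'.k'\,\natenc n)(\lm l.k(\sucT l))$. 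Two successive $\beta$-steps then finish the job: contracting the outer application gives $\lm k.(\lm l.k(\sucT l))\,\natenc n$, and contracting the remaining redex gives $\lm k.k(\sucT\,\natenc n) \equiv \lm k.k\,\natenc{n+1} \equiv \natneg{\natenc{n+1}}$, where the middle identity uses $\sucT\,\natenc n \equiv \sucT^{\,n+1}0 \equiv \natenc{n+1}$.

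The argument is entirely routine and I do not anticipate any real obstacle. The only two points deserving care are (i) the appeal to compatibility of $\tto$, which lets the induction hypothesis be applied to the subterm occurring under the $\lm k$ binder, and (ii) the bookkeeping identification $\sucT\,\natenc n \equiv \natenc{n+1}$ needed to read off the final term as exactly $\natneg{\natenc{n+1}}$; a harmless renaming of the bound variable of $\natneg{\natenc n}$ (here written $k'$) keeps the substitutions capture-free.
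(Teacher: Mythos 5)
Your proof is correct and is essentially identical to the paper's: both proceed by induction on $n$, with the base case holding by definitional equality and the successor case applying the induction hypothesis under the compatible closure of $\tto$ followed by the same two $\beta$-steps (the paper merely compresses them into a single $\tto$ line). Your extra remarks about compatibility under the binder and capture-free renaming are sound but routine bookkeeping the paper leaves implicit.
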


\begin{proof}
By induction on $n$. 
\begin{enumerate}
\item Let $n = 0$. We have $\muTCPS{\natenc 0} \equiv \natneg{\natenc 0}$
	by Definition~\ref{definition:muT_cps}.
\item Let $n > 0$. We have $\muTCPS{\natenc n} \tto \natneg{\natenc n}$
	by the induction hypothesis and hence:
	\begin{flalign*}
	\muTCPS{\natenc{n+1}} 
		\equiv{}& \lm k.\muTCPS{\natenc n}(\lm l.k(\sucT l)) \\
		\tto{}& \lm k.(\lm q.q\natenc n)(\lm l.k(\sucT l)) \\
		\tto{}& \lm k.k(\sucT \natenc n)\\
		\equiv{}& \natneg{\natenc{n+1}}
	\end{flalign*}
	\qedafterarray
\end{enumerate}
\end{proof}

\begin{lemma}
\label{lemma:muT_cps_abs}
For each term $t$ we have $\lambda k.\muTCPS t k \to \muTCPS t$.
\end{lemma}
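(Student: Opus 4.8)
The plan is to exploit a simple structural invariant of the translation: for \emph{every} $\lambdamuT$-term $t$, the CPS-translation $\muTCPS t$ is syntactically a $\lambda$-abstraction. Concretely, I would first observe, by inspection of Definition~\ref{definition:muT_cps}, that there is a body $M$ and a $\lambda$-variable (generically $k_0$) with $\muTCPS t \equiv \lm k_0.M$. This requires no induction: each of the defining clauses for a \emph{term} has a right-hand side beginning with $\lm$ --- the binder is literally $k$ in the cases for a variable, an abstraction, an application, $0$, $\sucT t$, and $\nrecT_\rho\ r\ s\ t$, and it is $k_\alpha$ in the case $\muTCPS{(\mu\alpha.c)} \equiv \lm {k_\alpha}.\muTCPS c$. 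Note the invariant is stated only for terms, not commands (the translations $\muTCPS c$ of commands need not be abstractions), but since the lemma concerns only terms this is harmless.

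Granting the invariant, the lemma follows from a single $\beta$-step. Writing $\muTCPS t \equiv \lm k_0.M$, we have
\[
	\lm k.\muTCPS t\,k \equiv \lm k.(\lm k_0.M)\,k \to_\beta \lm k.\subst M {k_0} k.
\]
By the Barendregt convention the variable $k$ is fresh for $M$, so $\lm k.\subst M {k_0} k$ is just the $\alpha$-variant of $\lm k_0.M$, which is $\muTCPS t$ itself. Hence $\lm k.\muTCPS t\,k \to \muTCPS t$ in one step, as claimed. Here the reduction takes place under the outer $\lm k$, which is legitimate because $\to$ is defined as the compatible closure of the reduction rules.

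I do not expect a genuine obstacle: the only point requiring care is the bookkeeping of bound-variable names in the $\mu$-abstraction clause, where the outer binder is $k_\alpha$ rather than $k$, so that an explicit $\alpha$-renaming accompanies the $\beta$-contraction. Conceptually the statement merely records that, although $\lambdaT$ has no $\eta$-rule, the $\eta$-expansion $\lm k.\muTCPS t\,k$ of the already-abstracted term $\muTCPS t$ collapses back to $\muTCPS t$ by $\beta$ alone, precisely because every term translation is in head-abstraction form.
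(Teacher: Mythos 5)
Your proof is correct and follows essentially the same route as the paper's: both observe that, by inspection of Definition~\ref{definition:muT_cps}, the translation $\muTCPS t$ of any \emph{term} is a $\lambda$-abstraction, so the redex collapses by a single $\beta$-step followed by an $\alpha$-renaming. Your version is in fact slightly more careful than the paper's, since you note explicitly that the invariant fails for commands (harmlessly) and that the $\mu$-clause binds $k_\alpha$ rather than $k$.
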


\begin{proof}
This follows immediately from the Definition~\ref{definition:muT_cps} since 
the translation $\muTCPS t$ of a term $t$ is of the shape $\lm l.t'$, 
so $\lm k.(\lm l.t')k \tto \lm k.\subst {t'} l k \equiv \muTCPS t$.
\end{proof}

\begin{lemma}
\label{lemma:muT_cps_nrec_abs}
We have $\lm k.\nrecT\ \muTCPS r\ s'\ \natenc n\ k = \nrecT\ \muTCPS r\ s'\ {\natenc n}$
for $s' \equiv \lm xp.\muTCPSapp {\muTCPSapp {\muTCPS s} {\natneg x}} p$.
\end{lemma}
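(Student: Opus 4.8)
The plan is to exploit that \lambdaT{} (Definition~\ref{definition:GodelsT_reduction}) has \emph{no} $\eta$-rule, so we cannot simply $\eta$-contract $\lm k.\nrecT\ \muTCPS r\ s'\ \natenc n\ k$ to $\nrecT\ \muTCPS r\ s'\ \natenc n$. Instead I would reduce both sides to a common reduct, crucially using that the numerical argument is a numeral $\natenc n$, so that the recursor can fire.

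The key observation is an abstraction-form strengthening of Lemma~\ref{lemma:muT_cps_abs}: if a \lambdaT{}-term $N$ with $k \notin \FV N$ reduces to an abstraction, say $N \tto \lm m.u$, then $\lm k.Nk = N$. Indeed $\lm k.Nk \tto \lm k.(\lm m.u)k \to \lm k.u[m := k] \equiv \lm m.u$, where the final step is $\alpha$-conversion with $k$ chosen fresh, while $N \tto \lm m.u$ directly; hence the two terms are convertible. So it suffices to show that $N \defined \nrecT\ \muTCPS r\ s'\ \natenc n$ reduces to an abstraction.

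I would establish this by a case split on $n$. If $n = 0$, then $N \equiv \nrecT\ \muTCPS r\ s'\ 0 \to \muTCPS r$ by the $0$-rule, and $\muTCPS r$ is an abstraction by the definition of the CPS-translation (Definition~\ref{definition:muT_cps}), all clauses of which produce a term of shape $\lm l.\ldots$. If $n = m+1$, then $\natenc n \equiv \sucT\natenc m$, so by the $\sucT$-rule of \lambdaT{} we get $N \to s'\ \natenc m\ (\nrecT\ \muTCPS r\ s'\ \natenc m)$; unfolding $s' \equiv \lm xp.\muTCPSapp {\muTCPSapp {\muTCPS s} {\natneg x}} p$ with two $\beta$-steps yields $\muTCPSapp {\muTCPSapp {\muTCPS s} {\natneg \natenc m}} {(\nrecT\ \muTCPS r\ s'\ \natenc m)}$, which is of the shape $\lm k.\ldots$ by the definition of CPS-application, i.e.\ an abstraction. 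In both cases $N$ reduces to an abstraction, and combining this with the key observation gives $\lm k.\nrecT\ \muTCPS r\ s'\ \natenc n\ k = \nrecT\ \muTCPS r\ s'\ \natenc n$, as required.

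The main obstacle — and the very reason the statement is phrased for a numeral $\natenc n$ rather than an arbitrary term — is precisely this absence of $\eta$: the equality can only be obtained by letting the recursor compute, and the recursor fires only because its third argument is a numeral, which after one step either exposes an application of $s'$ (an abstraction after $\beta$) or, in the base case, exposes $\muTCPS r$ (an abstraction by construction). No genuine induction on $n$ is needed; a single case split suffices, since in the successor case one recursor step plus the unfolding of $s'$ already reaches abstraction form.
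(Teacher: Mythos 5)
Your proof is correct and takes essentially the same route as the paper: the paper also argues by a case split on $n$ (no induction), settling $n=0$ via Lemma~\ref{lemma:muT_cps_abs} and $n>0$ by firing one $\sucT$-step of the recursor, unfolding $s'$ to a CPS-application (which is an abstraction by construction), and using a $\beta$-step in place of the missing $\eta$-rule before converting back to $\nrecT\ \muTCPS r\ s'\ \natenc n$. Your only departure is cosmetic: you factor the paper's inline conversion chain through the general observation that $\lm k.Nk = N$ whenever $k \notin \FV N$ and $N$ reduces to an abstraction, which is a tidy repackaging of the same computation.
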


\begin{proof}
We distinguish the following cases.
\begin{enumerate}
\item Let $n = 0$. The result follows from Lemma~\ref{lemma:muT_cps_abs}.
\item Let $n > 0$. Now we have the following.
	\begin{flalign*}
	\lm k.\nrecT\ \muTCPS r\ s'\ \natenc n\ k
		\tto{}& \lm k.s'\ \natenc{n-1}\ (\nrecT\ \muTCPS r\ s'\ \natenc{n-1})\ k \\
		\tto{}& \lm k.(\muTCPSapp {\muTCPSapp {\muTCPS s} {\natneg {\natenc{n-1}}}} {\nrecT\ \muTCPS r\ s'\ \natenc{n-1}})\ k \\
		\equiv{}& \lm k.(\lm k_2.(\muTCPSapp {\muTCPS s} {\natneg {\natenc{n-1}}})\ (\lm l.l\ (\nrecT\ \muTCPS r\ s'\ \natenc{n-1})\ k_2))\ k \\
		\tto{}& \lm k.(\muTCPSapp {\muTCPS s} {\natneg {\natenc{n-1}}})\ (\lm l.l\ (\nrecT\ \muTCPS r\ s'\ \natenc{n-1})\ k) \\
		\equiv{}& \muTCPSapp {\muTCPSapp {\muTCPS s} {\natneg {\natenc{n-1}}}} {\nrecT\ \muTCPS r\ s'\ \natenc{n-1}} \\
		={}& s'\ \natenc{n-1}\ (\nrecT\ \muTCPS r\ s'\ \natenc{n-1})\\
		={}& \nrecT\ \muTCPS r\ s'\ \natenc n
	\end{flalign*}
	\qedafterarray
\end{enumerate}
\end{proof}

\begin{lemma}
\label{lemma:muT_cps_subst}
The translation from $\lambdamuT$ into $\lambdaT$ preserves (structural) 
substitution. That is:
\begin{enumerate}
\item $\subst {\muTCPS t} x {\muTCPS r} \tto \muTCPS{(\subst t x r)}$
\item $\muTCPS{(\subst t \alpha {\beta\ \Box})} \equiv \subst {\muTCPS t} {k_\alpha} {k_\beta}$
\item $\muTCPS{(\subst t \alpha {\beta\ (\sucT \Box)})} \tto \subst {\muTCPS t} {k_\alpha} {\lm l.k_\beta(\sucT l)}$
\item $\muTCPS{(\subst t \alpha {\beta\ (\Box s)})} \tto \subst {\muTCPS t} {k_\alpha} {\lm l.l \muTCPS s k_\beta}$
\item $\muTCPS{(\subst t \alpha {\beta\ (\nrecT\ r\ s\ \Box)})} \tto \subst {\muTCPS t} {k_\alpha} {\lm l.\nrecT\ \muTCPS r\ s'\ l\ k_\beta}$
\end{enumerate}
\end{lemma}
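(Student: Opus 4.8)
The plan is to prove all five statements by simultaneous induction on the structure of the $\lambdamuT$-term $t$ (with the mutually-inducting case for commands folded in), since each clause asserts that the CPS-translation commutes—up to $\tto$ or $\equiv$—with a particular instance of structural substitution $\subst{t}{\alpha}{\beta E}$. The unifying observation is that each of (2)–(5) corresponds to a specific singular context $E^s$: clause (2) is $E = \Box$, clause (3) is $E = \sucT\Box$, clause (4) is $E = \Box s$, and clause (5) is $E = \nrecT\ r\ s\ \Box$. In every case the defining recursion of $\muTCPS{(\cdot)}$ pushes through every term-former transparently; the only clauses of Definition~\ref{definition:muT_cps} that interact non-trivially with $\mu$-variables are $\muTCPS{(\mu\gamma.c)} \defined \lm k_\gamma.\muTCPS c$ and $\muTCPS{([\alpha]t)} \defined \muTCPS t\,k_\alpha$, so the real content lives in the passivate case $[\alpha]t$.

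First I would dispatch clause (1), the $\lambda$-substitution case, on its own: it is the standard CPS substitution lemma, proved by induction on $t$, where the $\beta$-redexes introduced by $\muTCPS x \defined \lm k.xk$ account for the $\tto$ rather than $\equiv$. Then I would handle (2)–(5) together. In the induction, every structural case ($\lambda$, application, $\sucT$, $\nrecT$, $0$, $x$, and $\mu\gamma.c$ with $\gamma\neq\alpha$) follows by pushing the CPS-translation inward, applying the induction hypotheses to the immediate subterms, and using that the $\lambda$-substitution $\subst{\muTCPS t}{k_\alpha}{\cdots}$ distributes over the term-formers because $k_\alpha$ is fresh and never bound inside $\muTCPS{(\cdot)}$ except at a matching $\mu\alpha$. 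The crucial case is the command $[\alpha]t$: here $\subst{([\alpha]t)}{\alpha}{\beta E} \equiv [\beta]\cctx E{\subst t\alpha{\beta E}}$, so on the left we must compute $\muTCPS{([\beta]\cctx E{t'})} \equiv \muTCPS{(\cctx E{t'})}\,k_\beta$ where $t' \equiv \subst t\alpha{\beta E}$, while on the right the substitution replaces the variable $k_\alpha$ sitting at the corresponding $\muTCPS{([\alpha]t)} \equiv \muTCPS t\,k_\alpha$ by the continuation term specified in the statement.

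The main obstacle, therefore, will be verifying that $\muTCPS{(\cctx E{t'})}\,k_\beta$ reduces to $\muTCPS{t'}$ applied to the prescribed continuation—i.e.\ that the CPS-translation of filling a singular context $E^s$ corresponds, after translation, to wrapping the continuation $k_\beta$ appropriately. For $E = \Box$ this is an identity, giving the $\equiv$ in clause (2); for $E = \sucT\Box$ one unfolds $\muTCPS{(\sucT t')} \equiv \lm k.\muTCPS{t'}(\lm l.k(\sucT l))$ and applies it to $k_\beta$, yielding after one $\beta$-step the continuation $\lm l.k_\beta(\sucT l)$; for $E = \Box s$ one uses $\muTCPS{(t's)} \equiv \muTCPSapp{\muTCPS{t'}}{\muTCPS s} \equiv \lm k.\muTCPS{t'}(\lm l.l\,\muTCPS s\,k)$; and for $E = \nrecT\ r\ s\ \Box$ one unfolds the $\nrecT$-clause, obtaining the continuation $\lm l.\nrecT\ \muTCPS r\ s'\ l\ k_\beta$. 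In each case the key computation is a single application-and-$\beta$-reduction of the continuation-consuming head, which explains why clauses (3)–(5) give $\tto$ rather than $\equiv$.

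Concretely, for the inductive step on $[\alpha]t$ I would chain: $\muTCPS{(\subst{([\alpha]t)}\alpha{\beta E})} \equiv \muTCPS{(\cctx E{t'})}\,k_\beta \tto \muTCPS{t'}\,K_E \tto \subst{\muTCPS t}{k_\alpha}{K_E}$, where $K_E$ is the continuation named in the respective clause and the last step uses the induction hypothesis for the \emph{same} clause applied to $t$. The subtlety to check carefully is that the freshness of $k_\alpha$ guarantees that $\subst{(\muTCPS t\,k_\alpha)}{k_\alpha}{K_E} \equiv \subst{\muTCPS t}{k_\alpha}{K_E}\,K_E$ matches correctly—one must confirm that $k_\alpha$ occurs in $\muTCPS{([\alpha]t)}$ exactly at the displayed head position and at the recursively-translated occurrences inside $\muTCPS t$, so that the single textual substitution on the right faithfully tracks the structural substitution on the left. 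Once this bookkeeping for the passivate case is pinned down, the remaining cases are routine and the five reductions follow.
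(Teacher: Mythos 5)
Your proposal is correct and follows exactly the paper's (unexpanded) proof, which is simply an induction on the structure of $t$, mutually with commands; your identification of clauses (2)--(5) with the singular contexts $\Box$, $\sucT\Box$, $\Box s$, $\nrecT\ r\ s\ \Box$ and your isolation of the passivate case $[\alpha]t$ as the only non-trivial step is precisely the content the paper leaves implicit. One cosmetic slip: in your displayed chain the final term should be $\subst{\muTCPS t}{k_\alpha}{K_E}\,K_E$ (i.e.\ $\subst{(\muTCPS t\,k_\alpha)}{k_\alpha}{K_E}$) rather than $\subst{\muTCPS t}{k_\alpha}{K_E}$, but your closing paragraph already states the correct identity, so the argument stands.
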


\begin{proof}
These results are proven by induction on the structure of $t$.
\end{proof}

\begin{lemma}
\label{lemma:muT_cps_sound}
The translation from $\lambdamuT$ into $\lambdaT$ preserves convertibility.
That is, if $t_1 = t_2$, then $\muTCPS{t_1} = \muTCPS{t_2}$.
\end{lemma}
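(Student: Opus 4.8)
The plan is to reduce the statement to a single reduction step. Since convertibility in \lambdaT{} is an equivalence relation and $=$ on \lambdamuT{} is by definition the reflexive/symmetric/transitive closure of $\to$, it suffices to prove that $t \to t'$ implies $\muTCPS t = \muTCPS{t'}$; convertibility of an arbitrary $t_1 = t_2$ then follows by transporting each step and chaining the resulting equalities in \lambdaT{}. Because reduction can take place inside commands as well as terms, I would in fact prove the mutually inductive pair of statements ``$t \to t'$ implies $\muTCPS t = \muTCPS{t'}$'' and ``$c \to c'$ implies $\muTCPS c = \muTCPS{c'}$'' simultaneously, by induction on the derivation of the compatible closure.

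The congruence (inductive) cases I expect to be routine. Reduction in \lambdaT{} is defined as a compatible closure, so \lambdaT{}-conversion is itself a congruence; and an inspection of Definition~\ref{definition:muT_cps} shows that every term- and command-former of \lambdamuT{} is translated into a \lambdaT{}-expression built compositionally out of the translations of its immediate subterms (for $\nrecT\ r\ s\ t$ the helper $s' \defined \lm xp.\muTCPSapp {\muTCPSapp {\muTCPS s} {\natneg x}} p$ depends only on $\muTCPS s$). Hence if an immediate subterm reduces, the induction hypothesis together with congruence of \lambdaT{}-conversion immediately yields the required equality, and no computation is needed.

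The real content lies in the eight base cases, one per reduction rule, each of which I would verify by an explicit \lambdaT{}-computation. For $\to_\beta$ I unfold $\muTCPSapp {\muTCPS{(\lm x.t)}} {\muTCPS r}$, perform the administrative $\beta$-steps to reach $\lm k.\subst {\muTCPS t} x {\muTCPS r}\,k$, and then close using Lemma~\ref{lemma:muT_cps_subst}(1) and Lemma~\ref{lemma:muT_cps_abs}. The case $\to_0$ uses $\muTCPS 0 \equiv \natneg 0$ and the \lambdaT{}-rule $(0)$ for $\nrecT$, again finishing with Lemma~\ref{lemma:muT_cps_abs}; and $\to_{\mu\eta}$ is essentially Lemma~\ref{lemma:muT_cps_abs} itself, since $\muTCPS{(\mu\alpha.[\alpha]t)} \equiv \lm k_\alpha.\muTCPS t\,k_\alpha$ with $k_\alpha \notin \FV{\muTCPS t}$ because $\alpha \notin \FCV t$. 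The four $\mu$-rules $\to_{\mu R}$, $\to_{\mu\sucT}$, $\to_{\mu\natTypeT}$ and $\to_{\mu i}$ all follow the same pattern: a single administrative $\beta$-step exposes a substitution of the continuation variable $k_\alpha$, which is matched exactly by the corresponding clause of Lemma~\ref{lemma:muT_cps_subst} (parts (4), (3), (5) and (2), respectively), modulo an $\alpha$-renaming of the bound continuation.

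The step I expect to be the main obstacle is the restricted recursor rule $\to_\sucT$, namely $\nrecT\ r\ s\ (\sucT \natenc n) \to s\ \natenc n\ (\nrecT\ r\ s\ \natenc n)$, since both sides must be normalized before they can be compared. Here I would first use Fact~\ref{fact:muT_cps_nat} to rewrite $\muTCPS{(\sucT\natenc n)}$ to $\natneg{\natenc{n+1}}$, so that the left-hand translation reduces to $\lm k.\nrecT\ \muTCPS r\ s'\ \natenc{n+1}\ k$; firing the \lambdaT{}-rule $(\sucT)$ then produces $\lm k.s'\ \natenc n\ (\nrecT\ \muTCPS r\ s'\ \natenc n)\,k$. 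On the right-hand side I would recognise $\nrecT\ \muTCPS r\ s'\ \natenc n$ as convertible to $\muTCPS{(\nrecT\ r\ s\ \natenc n)}$ via Lemma~\ref{lemma:muT_cps_nrec_abs} and Fact~\ref{fact:muT_cps_nat}, and unfold $s'\ \natenc n$ into the CPS-application $\muTCPSapp {\muTCPS s} {\natneg{\natenc n}}$, which matches $\muTCPS{(s\ \natenc n)}$ again using Fact~\ref{fact:muT_cps_nat}. Reconciling the two and finally stripping the outer $\lm k.(-)\,k$ with Lemma~\ref{lemma:muT_cps_abs} yields $\muTCPS{(s\ \natenc n\ (\nrecT\ r\ s\ \natenc n))}$, as required. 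Keeping the bookkeeping of continuation variables straight across these rewrites is the one place where genuine care is needed.
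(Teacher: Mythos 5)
Your proposal is correct and follows essentially the same route as the paper: induction on the derivation of $t_1 \to t_2$, dispatching the congruence cases by compositionality of the translation, handling $\to_\beta$ and the $\mu$-rules via Lemma~\ref{lemma:muT_cps_abs} and the matching clauses of Lemma~\ref{lemma:muT_cps_subst}, and treating the one genuinely delicate case $\to_{\sucT}$ exactly as the paper does, by combining Fact~\ref{fact:muT_cps_nat} with Lemma~\ref{lemma:muT_cps_nrec_abs}. Your explicit observation that $\alpha \notin \FCV t$ guarantees $k_\alpha \notin \FV{\muTCPS t}$ in the $\to_{\mu\eta}$ case is a side condition the paper leaves implicit, but it introduces no divergence from its argument.
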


\begin{proof}
By induction on the derivation of $t_1 \to t_2$. Most of the
cases are straightforward, so we treat just one interesting case.
\begin{enumerate}
\item Let $\nrecT\ r\ s\ (\sucT \natenc n) \to s\ \natenc n\ (\nrecT\ r\ s\ \natenc n)$. Now:
	\begin{flalign*}
	\muTCPS{(\nrecT\ r\ s\ (\sucT \natenc n))} 
		\equiv{}& \lm k.\muTCPS{(\sucT \natenc n)}\, (\lm l.\nrecT\ \muTCPS r\ s'\ l\ k) \\
		\tto{}& \lm k.\natneg{\sucT \natenc n}\ (\lm l.\nrecT\ \muTCPS r\ s'\ l\ k) \tag{a}\\
		\tto{}& \lm k.\nrecT\ \muTCPS r\ s'\ (\sucT \natenc n)\ k \\
		\to{}& \lm k.s'\ \natenc n\ (\nrecT\ \muTCPS r\ s'\ \natenc n)\ k\\
		\tto{}& \lm k.(\muTCPSapp {\muTCPSapp {\muTCPS s} {\natneg{\natenc n}}} {\nrecT\ \muTCPS r\ s'\ \natenc n})\ k\\ \displaybreak[1]
		={}& \lm k.(\lm k_2.({\muTCPSapp {\muTCPS s} {\natneg{\natenc n}}})\
			(\lm l.l\ (\nrecT\ \muTCPS r\ s'\ \natenc n)\; k_2))\ k \\
		={}& \lm k.(\muTCPSapp {\muTCPS s} {\natneg {\natenc n}})\ 
			(\lm l.l\ (\nrecT\ \muTCPS r\ 's\ {\natenc n})\ k)\\
		={}& \muTCPSapp {\muTCPSapp {\muTCPS s} {\natneg {\natenc n}}} {\nrecT\ \muTCPS r\ s'\ {\natenc n}}\\
		={}& \muTCPSapp {\muTCPSapp {\muTCPS s} {\natneg {\natenc n}}} {\lm k_2.\nrecT\ \muTCPS r\ s'\ {\natenc n}\ k_2} \tag{b}\\
		={}& \muTCPSapp {\muTCPSapp {\muTCPS s} {\natneg {\natenc n}}} {\lm k_2.\natneg{\natenc n}\ (\lm l.\nrecT\ \muTCPS r\ s'\ l\ k_2)}\\
		={}& \muTCPSapp {\muTCPSapp {\muTCPS s} {\natneg {\natenc n}}} {\lm k_2.\muTCPS{\natenc n}\, (\lm l.\nrecT\ \muTCPS r\ s'\ l\ k_2)} \tag{c} \\
		\equiv{}& \muTCPS{(s\ \natenc n\ (\nrecT\ r\ s\ \natenc n))}
	\end{flalign*}
	Here, step (a) holds by Fact~\ref{fact:muT_cps_nat}, step (b) holds
	by Lemma~\ref{lemma:muT_cps_nrec_abs} and step (c) holds by Fact~\ref{fact:muT_cps_nat}. \qedhere
\end{enumerate}
\end{proof}

\begin{theorem}
\label{theorem:muT_representable}
Each function $f : \nat^n \to \nat$ that is representable
in $\lambdamuT$ is representable in $\lambdaT$. That is, if a
term $t$ with $\mujudg {} {} t {\natTypeT^n \to \natTypeT}$ represents the function 
$f$ in $\lambdamuT$, then there exists a term $t'$ with 
$\ljudg {} {t'} {\natTypeT^n \to \natTypeT}$ that represents 
the function $f$ in $\lambdaT$.
\end{theorem}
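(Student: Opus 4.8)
The plan is to define $t'$ as a direct-style extraction from the CPS-translation $\muTCPS t$, exploiting the freedom to choose the target type $\bot$ of the negative translation. Concretely, I would set $\bot \defined \natTypeT$, so that $\neg\natTypeT \equiv \natTypeT \to \natTypeT$ and the identity $\lm y.y$ is a closed continuation of type $\neg\natTypeT$. Writing $u_0 \defined \muTCPS t$ and $u_i \defined \muTCPSapp {u_{i-1}} {\natneg{x_i}}$ for $1 \le i \le n$, I would then take
\[
	t' \defined \lm x_1 \ldots x_n.\, u_n\,(\lm y.y).
\]

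First I would check that $t'$ has the required type. Since $t$ is closed, Theorem~\ref{theorem:muT_cps_welltyped} gives $\ljudg {} {\muTCPS t} {\muTCPS{(\natTypeT^n \to \natTypeT)}}$, and Fact~\ref{fact:muT_cps_natneg_welltyped} gives $\ljudg {x_i : \natTypeT} {\natneg{x_i}} {\muTCPS\natTypeT}$. Unfolding $\muTCPS{(\natTypeT^n \to \natTypeT)} \equiv \neg\neg(\muTCPS\natTypeT \to \muTCPS{(\natTypeT^{n-1}\to\natTypeT)})$ and applying Fact~\ref{fact:muT_cps_app_welltyped} once per argument shows $u_n : \muTCPS\natTypeT \equiv \neg\neg\natTypeT$; with $\bot = \natTypeT$ the continuation $\lm y.y$ has type $\neg\natTypeT$, so $u_n\,(\lm y.y) : \natTypeT$ and hence $\ljudg {} {t'} {\natTypeT^n \to \natTypeT}$.

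Next I would verify that $t'$ represents $f$. Fixing $m_1, \ldots, m_n$, beta-reduction gives that $t'\,\natenc{m_1} \ldots \natenc{m_n}$ reduces to $u_n\,(\lm y.y)$ with each $x_i$ replaced by $\natenc{m_i}$; since each $x_i$ occurs only inside the subterm $\natneg{x_i}$, this substituted $u_n$ is the $n$-fold CPS-application of the closed term $\muTCPS t$ to $\natneg{\natenc{m_1}}, \ldots, \natneg{\natenc{m_n}}$. By Fact~\ref{fact:muT_cps_nat} we have $\muTCPS{\natenc{m_i}} \tto \natneg{\natenc{m_i}}$, so — unfolding $\muTCPS{(tr)} \equiv \muTCPSapp {\muTCPS t} {\muTCPS r}$ — this term is convertible to $\muTCPS{(t\,\natenc{m_1} \ldots \natenc{m_n})}$. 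The representability hypothesis gives $t\,\natenc{m_1} \ldots \natenc{m_n} = \natenc{f(m_1,\ldots,m_n)}$ in $\lambdamuT$, so Lemma~\ref{lemma:muT_cps_sound} yields $\muTCPS{(t\,\natenc{m_1} \ldots \natenc{m_n})} = \muTCPS{\natenc{f(m_1,\ldots,m_n)}}$, which by Fact~\ref{fact:muT_cps_nat} converts to $\natneg{\natenc{f(m_1,\ldots,m_n)}} \equiv \lm k.k\,\natenc{f(m_1,\ldots,m_n)}$. Feeding in the identity continuation, $t'\,\natenc{m_1} \ldots \natenc{m_n} = (\lm k.k\,\natenc{f(m_1,\ldots,m_n)})(\lm y.y) \to \natenc{f(m_1,\ldots,m_n)}$ in $\lambdaT$, as required.

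The conceptual crux — and the only place where a genuine design choice is made — is the instantiation $\bot \defined \natTypeT$ together with the use of the identity as the final continuation: this is what lets us collapse the doubly-negated answer type $\neg\neg\natTypeT$ back to $\natTypeT$ and read off an actual numeral. Everything else is routine bookkeeping already packaged into the earlier typing facts (Facts~\ref{fact:muT_cps_app_welltyped}, \ref{fact:muT_cps_natneg_welltyped} and Theorem~\ref{theorem:muT_cps_welltyped}) and the soundness and numeral lemmas (Lemma~\ref{lemma:muT_cps_sound}, Fact~\ref{fact:muT_cps_nat}). I expect the only mild friction to be notational: writing the $n$-fold CPS-application cleanly and tracking that the substitution of numerals touches only the subterms $\natneg{x_i}$ and leaves the closed term $\muTCPS t$ untouched.
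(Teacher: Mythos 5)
Your proposal is correct and is essentially the paper's own proof: you construct the very same term $t'$ (the $n$-fold CPS-application of $\muTCPS t$ to $\natneg{x_1},\ldots,\natneg{x_n}$ fed the identity continuation, with $\bot$ instantiated to $\natTypeT$), establish its type via Theorem~\ref{theorem:muT_cps_welltyped} and Facts~\ref{fact:muT_cps_natneg_welltyped} and~\ref{fact:muT_cps_app_welltyped}, and verify representability by exactly the paper's conversion chain using Fact~\ref{fact:muT_cps_nat}, the definition of $\muTCPS{(tr)}$, and Lemma~\ref{lemma:muT_cps_sound}.
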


\begin{proof}
Suppose that $t : \natTypeT^n \to \natTypeT$ represents $f : \nat^n \to \nat$ in $\lambdamuT$.
That means that $\natenc{f(m_1, \ldots, m_n)} = t\; \natenc {m_1} \ldots \natenc {m_n}$.
Now define a term $t'$ as follows.
\[
	t' \defined \lm x_1 : \natTypeT \ldots \lm x_n : \natTypeT\ .\ (
		\muTCPSapp {\muTCPSapp {\muTCPSapp {\muTCPS t} {\natneg{x_1}}} \ldots} {\natneg{x_n}})\ 
		(\lm x : \natTypeT \ .\ x)
\]
Now we have $\muTCPS t : \muTCPS{(\natTypeT^n \to \natTypeT)}$ by Theorem~\ref{theorem:muT_cps_welltyped},
$x_i : \muTCPS{\natTypeT}$ by Fact~\ref{fact:muT_cps_natneg_welltyped} and 
therefore $\muTCPSapp {\muTCPSapp {\muTCPSapp {\muTCPS t} {\natneg{x_1}}} \ldots} {\natneg{x_n}} : \muTCPS{\natTypeT}$
by Fact~\ref{fact:muT_cps_app_welltyped}. Hence by setting $\bot = \natTypeT$ we
have $t' : \natTypeT$. Now it remains to prove that $\natenc{f(m_1, \ldots, m_n)} = t'\; \natenc {m_1} \ldots \natenc {m_n}$.
\begin{flalign*}
	t'\; \natenc {m_1} \ldots \natenc {m_n} 
		&= (\muTCPSapp {\muTCPSapp {\muTCPSapp {\muTCPS t} {\natneg{\natenc{m_1}}}} \ldots} {\natneg{\natenc{m_n}}})\ \lm x.x \\
		&= (\muTCPSapp {\muTCPSapp {\muTCPSapp {\muTCPS t} {\muTCPS{\natenc{m_1}}}} \ldots} {\muTCPS{\natenc{m_n}}})\ \lm x.x \tag{a}\\
		&= \muTCPS{(t\; \natenc{m_1} \ldots \natenc{m_n})}\ \lm x.x \\
		&= \muTCPS{(\natenc{f(m_1, \ldots, m_n)})}\ \lm x.x \tag{b}\\
		&= \natneg{\natenc{f(m_1, \ldots, m_n)}}\ \lm x.x \tag{c}\\
		&= \natenc{f(m_1, \ldots, m_n)} 
\end{flalign*}
Here, step (a) holds by Fact~\ref{fact:muT_cps_nat}, step (b) holds
by Lemma~\ref{lemma:muT_cps_sound} and step (c) holds by Fact~\ref{fact:muT_cps_nat}.
\end{proof}

\begin{corollary}
The functions representable in $\lambdamuT$ are exactly those that are
provably recursive in first-order arithmetic.
\end{corollary}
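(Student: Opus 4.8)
The plan is to prove the two inclusions separately, chaining the two representability results already at our disposal. The forward inclusion is immediate: if $f : \nat^n \to \nat$ is representable in $\lambdamuT$, then by Theorem~\ref{theorem:muT_representable} it is representable in $\lambdaT$, and by Theorem~\ref{theorem:godelsT_representable} every function representable in $\lambdaT$ is provably recursive in first-order arithmetic. Composing these two implications yields that $f$ is provably recursive.

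For the converse inclusion I would argue that every function representable in $\lambdaT$ is already representable in $\lambdamuT$, after which Theorem~\ref{theorem:godelsT_representable} supplies a $\lambdaT$-representing term for any provably recursive $f$. First I would observe that the grammar and typing rules of $\lambdaT$ embed verbatim into $\lambdamuT$: every $\lambdaT$-term is a $\lambdamuT$-term containing no $\mu$-abstractions or commands, and every $\lambdaT$-typing derivation $\ljudg {} t {\natTypeT^n \to \natTypeT}$ reads off as a $\lambdamuT$-derivation $\mujudg {} {} t {\natTypeT^n \to \natTypeT}$ with empty $\mu$-context. Thus the representing term $t$ transports to a well-typed $\lambdamuT$-term, and it only remains to check that the defining equation $t\, \natenc{m_1} \ldots \natenc{m_n} = \natenc{f(m_1, \ldots, m_n)}$ still holds in $\lambdamuT$.

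The crux here is that $\lambdamuT$, on pure $\lambdaT$-terms, differs from $\lambdaT$ in exactly one rule: its $\sucT$-rule fires only when the numerical argument is already a numeral, whereas the $\lambdaT$ rule unfolds $\nrecT\ r\ s\ (\sucT t)$ for arbitrary $t$. Since no $\mu$-redex can ever appear in a pure $\lambdaT$-term, this restriction is the only obstacle. I would remove it by appealing to confluence and strong normalization of $\lambdaT$: a closed $\lambdaT$-term of type $\natTypeT$ has a numeral as its unique normal form, and that normal form can always be reached by a \cbv{} evaluation order that reduces the third argument of each $\nrecT$ to a numeral before unfolding the recursor. Every step taken by such a strategy applies the $\sucT$-rule only to numerals, and is therefore a legal $\lambdamuT$-reduction; since $\lambdaT$ is strongly normalizing this reduction terminates, and since $\lambdaT$ is confluent its endpoint is the unique numeral normal form, namely $\natenc{f(m_1, \ldots, m_n)}$. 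Hence $t\, \natenc{m_1} \ldots \natenc{m_n} \tto \natenc{f(m_1, \ldots, m_n)}$ in $\lambdamuT$ as well, so $t$ represents $f$ in $\lambdamuT$.

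I expect the main obstacle to be precisely this reduction-simulation argument: making rigorous that the numeral-restricted $\sucT$-rule loses no computational power on closed $\lambdaT$-terms of base type. The cleanest route is to fix the \cbv{} strategy as a concrete reduction order and verify, using subject reduction in $\lambdaT$ to keep each recursor argument at type $\natTypeT$, that it only ever encounters $\nrecT$-redexes whose numerical argument has already been driven to a numeral; confluence of $\lambdaT$ then identifies the resulting numeral with the expected $\natenc{f(m_1, \ldots, m_n)}$.
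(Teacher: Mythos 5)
Your proof is correct, and its skeleton is exactly the paper's: the corollary is obtained by chaining Theorem~\ref{theorem:godelsT_representable} with Theorem~\ref{theorem:muT_representable}. The difference is in what you make explicit. The paper's proof is a one-liner (``follows immediately from'' the two theorems), which silently treats the converse inclusion --- provably recursive $\Rightarrow$ representable in \lambdaT{} $\Rightarrow$ representable in \lambdamuT{} --- as trivial transport along the syntactic embedding of \lambdaT{} into \lambdamuT{}. You correctly identify that this transport is \emph{not} literally trivial: since \lambdamuT{} restricts the $\to_\sucT$-rule to numeral arguments, \lambdaT{}-reduction (and hence \lambdaT{}-convertibility) is not verbatim contained in that of \lambdamuT{}, so the defining equation $t\,\natenc{m_1}\cdots\natenc{m_n} = \natenc{f(m_1,\ldots,m_n)}$ must be re-established. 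Your \cbv{}-simulation argument closes this gap and is sound, though its one informal point (that a closed term of type $\natTypeT$ stuck for the restricted rules is already a numeral) deserves a lemma. Note that the paper's own machinery gives a shorter closure: $t\,\natenc{m_1}\cdots\natenc{m_n}$ is a closed well-typed \lambdamuT{}-term of type $\natTypeT$, so by Theorem~\ref{theorem:muT_sn} and Theorem~\ref{theorem:muT_nf} it \lambdamuT{}-reduces to some numeral $\natenc k$; since \lambdamuT{}-reduction on $\mu$-free terms is a subset of \lambdaT{}-reduction, confluence of \lambdaT{} forces $\natenc k \equiv \natenc{f(m_1,\ldots,m_n)}$. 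In short, you prove slightly more than the paper writes down, and what you add is a genuine (if small) gap in the paper's ``immediately''.
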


\begin{proof}
This result follows immediately from Theorem~\ref{theorem:godelsT_representable} 
and~\ref{theorem:muT_representable}.
\end{proof}

\section{Confluence of \texorpdfstring{\lambdamuT}{lambda-mu-T}}
\label{section:muT_confluence}
\index{Confluence!for \lambdamuT}
\index{Parallel reduction}
\index{Complete development}
To prove confluence one typically uses the notion of \emph{parallel
  reduction}, as introduced by Tait and Martin-L\"of. Intuitively, a
parallel reduction relation $\Rightarrow$ allows to contract a number
of redexes in a term simultaneously. Following Takahashi~\cite{takahashi1995},
$\Rightarrow$ can be defined by induction over the term
structure, making it easy to prove that it is preserved under
substitution. Then one proves that $\Rightarrow$ satisfies:
\begin{itemize}
\item The \emph{diamond property}: if $t_1 \Rightarrow t_2$ and 
	$t_1\Rightarrow t_3$, then there exists a $t_4$	such that $t_2 \Rightarrow t_4$ 
	and $t_3 \Rightarrow t_4$, in a diagram:
\[\xymatrix{
                & t_1 \ar@{=>}[ld] \ar@{=>}[rd] & \\
t_2 \ar@{:>}[rd]&                               & t_3 \ar@{:>}[ld] \\
                & t_4                           &
}\]
\item $\mathrm{\Rightarrow} \subset \mathrm{\tto}$: if $t_1 \Rightarrow t_2$, then $t_1 \tto t_2$.
\item $\mathrm{\tto} \subset \mathrm{\Rightarrow^*}$: if $t_1 \tto t_2$, then $t_1 \Rightarrow^* t_2$.
\end{itemize}
Thus one obtains confluence of $\to$.
To streamline proving the diamond property of $\Rightarrow$ one can
define the \emph{complete development} of a term $t$, notation $\MPRED t$, 
which is obtained by contracting all redexes in $t$. Now it suffices
to prove that $t_1 \Rightarrow t_2$ implies $t_2 \Rightarrow \MPRED {t_1}$. 
Unfortunately, as observed in~\cite{fujita1997, baba2001},
adopting the notion of parallel reduction in a standard way
does not work for \lambdamu{}. The resulting parallel reduction
relation will only be weakly confluent and not confluent.

In this section we will focus on resolving this problem for
\lambdamuT{}.  For an extensive discussion of parallel reduction and
its application to various systems we refer to~\cite{takahashi1995}.
A simple-minded parallel reduction relation, 
obtained by extending Parigot's parallel reduction~\cite{parigot1992} 
to \lambdamuT{}, would have the follow rules:
\begin{enumerate}
\item[\em (t6.1)] If $c \Rightarrow c'$, then $\mu \alpha.c \Rightarrow \mu \alpha.c'$.
\item[\em (t6.2)] If $c \Rightarrow c'$ and $s \Rightarrow s'$,
	then $(\mu \alpha.c)s \Rightarrow \mu\alpha.\subst {c'} \alpha {\alpha\ (\Box s')}$.
\item[\em (t6.3)] If $c \Rightarrow c'$, 
	then $\sucT (\mu\alpha.c) \Rightarrow \mu\alpha.\subst {c'} \alpha {\alpha\ (\sucT\Box)}$.
\item[\em (t6.4)] If $r \Rightarrow r'$, $s \Rightarrow s'$ and $c \Rightarrow c'$,
	then \\ $\nrecT\ r\ s\ \mu\alpha.c \Rightarrow \mu\alpha.\subst {c'} \alpha {\alpha\ (\nrecT\ r'\ s'\ \Box)}$.
\item[\em (t7)] If $t \Rightarrow t'$ and $\alpha\notin \FCV t$, 
	then $\mu \alpha.[\alpha]t \Rightarrow t'$.
\item[\em (c1)] If $t \Rightarrow t'$, then $[\alpha]t \Rightarrow [\alpha]t'$.
\item[\em (c2)] If $c \Rightarrow c'$, then $[\alpha]\mu\beta.c \Rightarrow \subst {c'} \beta {\alpha \Box}$.
\end{enumerate}
As has been observed in~\cite{fujita1999}, Parigot's original parallel
reduction relation is not confluent. Similarly, the
parallel reduction as defined above for \lambdamuT{} is not confluent.
Let us (as in~\cite{baba2001}) consider the term $(\mu
\alpha.[\alpha]\mu\gamma.[\alpha]x)y$, this term contains both a
(t6.2) and a (c2)-redex. However, after contracting the (t6.2)-redex,
we obtain the term $\mu \alpha.[\alpha](\mu\gamma.[\alpha]xy)y$, in
which the (c2)-redex is blocked.
\[\xymatrix{
& (\mu \alpha.[\alpha]\mu\gamma.[\alpha]x)y \ar@{=>}[ld] \ar@{=>}[rd]
  & \\ \mu \alpha.[\alpha](\mu\gamma.[\alpha]xy)y \ar@{:>}[d] & & (\mu
  \alpha.[\alpha]x)y \ar@{:>}[d] \\ \mu
  \alpha.[\alpha]\mu\gamma.[\alpha]xy \ar@{:>}[rr] & & \mu
  \alpha.[\alpha]xy }\] Although it is possible to prove that this
relation is weakly confluent, weak confluence is not quite
satisfactory. Of course, since \lambdamuT{} is strongly normalizing
(Theorem~\ref{theorem:muT_sn}), it would give confluence by Newman's
lemma. However, an untyped version of \lambdamuT{} is of course not
strongly normalizing, hence we do not obtain confluence for raw terms
this way.

Baba, Hirokawa and Fujita~\cite{baba2001} noticed that this problem
could be repaired by allowing a $\mu \beta$ to ``jump over a whole
context'' to its corresponding $[\alpha]$. Their version of the (c2)-rule 
is as follows.
\begin{enumerate}
\item[\em (c2)] If $c \Rightarrow c'$ and $E \Rightarrow E'$, then $[\alpha]\cctx E {\mu\beta.c} \Rightarrow \subst {c'} \beta {\alpha E'}$.
\end{enumerate}
Here $E$ and $E'$ are contexts and parallel reduction on contexts is
defined by reducing all its components in parallel. This (c2)-rule
performs ``deep'' structural substitutions and renaming in one step
and thus covers and extends the original rules (t6.1-4) and (c2)

Baba \etal{}~\cite{baba2001} have shown that their relation
$\Rightarrow$ is confluent for \lambdamu{} without the (t7) rule. It
is not confluent if the (t7) rule is included. Let us (as in
\cite{baba2001}) consider the term $\mu
\alpha.[\alpha](\mu\beta.[\gamma]x)yz$.
\[\xymatrix{
& \mu \alpha.[\alpha](\mu\beta.[\gamma]x)yz \ar@{=>}[ldd] \ar@{=>}[rd]
  & \\ & & (\mu \beta.[\gamma]x)yz \ar@{:>}[d] \\ \mu \alpha.[\gamma]x
  & & (\mu \beta.[\gamma]x)z \ar@{:>}[ll] }\] In the conclusion of
their work they suggest that this problem can be repaired by
considering a series of structural substitutions (t6.1-4) as one
step. This approach has been carried out successfully by Nakazawa for
a \cbv{} variant of \lambdamu{}~\cite{nakazawa2003}. However,
Nakazawa did not use the notion of complete development. We will
follow the approach suggested by Baba \etal{} for \lambdamuT{} and
use the notion of complete development.

\begin{definition}
\label{definition:muT_pred}
\emph{Parallel reduction $t \Rightarrow t'$ on terms} 
is mutually inductively defined with \emph{parallel reduction $c \Rightarrow c'$ 
on commands} 
and \emph{parallel reduction $E \Rightarrow E'$ on contexts} 
as follows.
\begin{enumerate}
\item[\em (t1)] $x \Rightarrow x$
\item[\em (t2)] $0 \Rightarrow 0$
\item[\em (t3)] If $t \Rightarrow t'$, then $\lm x.t \Rightarrow \lm x.t'$.
\item[\em (t4)] If $t \Rightarrow t'$ and $E^s \Rightarrow E'$, then $\cctx {E^s} t \Rightarrow \cctx {E'} {t'}$.
\item[\em (t5)] If $t \Rightarrow t'$ and $r \Rightarrow r'$, then $(\lm x.t)r \Rightarrow \subst {t'} x {r'}$.
\item[\em (t6)] If $c \Rightarrow c'$ and $E \Rightarrow E'$, then $\cctx E {\mu \alpha.c} \Rightarrow \mu\alpha.\subst {c'} \alpha {\alpha E'}$.
\item[\em (t7)] If $t \Rightarrow t'$ and $\alpha\notin \FCV t$, then $\mu \alpha.[\alpha]t \Rightarrow t'$.
\item[\em (t8)] If $r \Rightarrow r'$, then $\nrecT\ r\ s\ 0 \Rightarrow r'$.
\item[\em (t9)] If $r \Rightarrow r'$ and $s \Rightarrow s'$, then $\nrecT\ r\ s\ (\sucT \natenc n) \Rightarrow s'\ \natenc n\ (\nrecT\ r'\ s'\ \natenc n) $.
\end{enumerate}

\begin{enumerate}
\item[\em (c1)] If $t \Rightarrow t'$, then $[\alpha]t \Rightarrow [\alpha]t'$.
\item[\em (c2)] If $c \Rightarrow c'$ and $E \Rightarrow E'$, then $[\alpha]\cctx E {\mu\beta.c} \Rightarrow \subst {c'} \beta {\alpha E'}$.
\end{enumerate}

\begin{enumerate}
\item[\em (E1)] $\Box \Rightarrow \Box$
\item[\em (E2)] If $E \Rightarrow E'$ and $t\Rightarrow t'$, then $Et \Rightarrow E't'$.
\item[\em (E3)] If $E \Rightarrow E'$, then $\sucT E \Rightarrow \sucT E'$.
\item[\em (E4)] If $E \Rightarrow E'$, $r\Rightarrow r'$ and $s\Rightarrow s'$, then $\nrecT\ r\ s\ E \Rightarrow \nrecT\ r'\ s'\ E'$.
\end{enumerate}
Furthermore, $\Rightarrow^*$ denotes the transitive closure of $\Rightarrow$.
\end{definition}

For conciseness of presentation, we specify most of the forthcoming
lemmas just for terms. Yet they can always be mutually
stated and mutually inductively proven for commands and contexts.

\begin{lemma}
\label{lemma:muT_pred_refl}
Parallel reduction is reflexive. That is, $t \Rightarrow t$ for all terms $t$.
\end{lemma}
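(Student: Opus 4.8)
The plan is to prove the three statements $t \Rightarrow t$, $c \Rightarrow c$, and $E \Rightarrow E$ simultaneously, by mutual structural induction on terms, commands, and contexts, mirroring the mutual structure of Definition~\ref{definition:muT_pred}. In each case I select the \emph{congruence} instance of a parallel-reduction rule — the instance that contracts no redex — and close it with the induction hypotheses on the immediate subterms.

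The base cases are immediate: $x \Rightarrow x$ by (t1) and $0 \Rightarrow 0$ by (t2), while $\Box \Rightarrow \Box$ is (E1). For an abstraction $\lm x.t$ I use (t3) with the hypothesis $t \Rightarrow t$. For the context formers $Et$, $\sucT E$, and $\nrecT\ r\ s\ E$ I use (E2), (E3), and (E4) respectively, feeding in the induction hypotheses for each component. Commands $[\alpha]t$ are closed by (c1) from $t \Rightarrow t$.

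The three remaining term formers $ts$, $\sucT t$, and $\nrecT\ r\ s\ t$ are all handled uniformly via rule (t4): each is of the form $\cctx {E^s} {t_0}$ for the singular context $E^s$ equal to $\Box s$, $\sucT \Box$, or $\nrecT\ r\ s\ \Box$, and a subterm $t_0$. The induction hypothesis gives $t_0 \Rightarrow t_0$, and reflexivity of the singular context $E^s \Rightarrow E^s$ follows from the context cases (E1)--(E4) applied to the smaller components. Rule (t4) then yields $\cctx {E^s} {t_0} \Rightarrow \cctx {E^s} {t_0}$.

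The one case needing a small separate observation is the $\mu$-binder $\mu \alpha.c$. Here I invoke rule (t6) with the trivial context $E = \Box$: from $c \Rightarrow c$ and $\Box \Rightarrow \Box$ it produces $\mu \alpha.c \Rightarrow \mu \alpha.\subst c \alpha {\alpha \Box}$. To conclude $\mu \alpha.c \Rightarrow \mu \alpha.c$ I therefore need the identity $\subst c \alpha {\alpha \Box} \equiv c$, i.e. that structural substitution of the name $\alpha$ together with the empty context $\Box$ for $\alpha$ leaves every command unchanged. This is the only real content of the proof, and I expect it to be the sole obstacle; it follows by a routine induction on $c$ directly from Definition~\ref{lemma:mu_strucsubst}, the crucial clause being $\subst {([\alpha]t)} \alpha {\alpha \Box} \equiv [\alpha]\cctx \Box {\subst t \alpha {\alpha \Box}} \equiv [\alpha]\subst t \alpha {\alpha \Box}$, which collapses to $[\alpha]t$ once the inductive hypothesis rewrites the body. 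With this identity in hand, (t6) delivers reflexivity at $\mu \alpha.c$ and the mutual induction is complete.
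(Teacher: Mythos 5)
Your proof is correct and takes essentially the same route as the paper, whose one-line proof likewise proceeds by induction on the structure of $t$ using exactly the rules (t1--t4), (t6), (c1) and (E1--E4). Your explicit handling of the $\mu$-case --- instantiating (t6) with $E \equiv \Box$ and verifying the identity $\subst c \alpha {\alpha\ \Box} \equiv c$ by a routine induction on the definition of structural substitution --- is precisely the detail the paper leaves implicit, since Definition~\ref{definition:muT_pred} has no direct congruence rule for $\mu\alpha.c$.
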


\begin{proof}
By induction on $t$. We use the rules (t1-4), (t6), (c1) and (E1-4).
\end{proof}

\begin{lemma}
\label{lemma:muT_pred_cctx_subst}
If $E \Rightarrow E'$ and $t \Rightarrow t'$, then $\cctx {E} t \Rightarrow \cctx {E'} {t'}$.
\end{lemma}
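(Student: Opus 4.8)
The plan is to induct on the derivation of $E \Rightarrow E'$, which by rules (E1)--(E4) coincides with induction on the structure of the context $E$. The base case is (E1), where $E \equiv E' \equiv \Box$; then $\cctx{\Box}{t} \equiv t$ and $\cctx{\Box}{t'} \equiv t'$, so the desired $\cctx{E}{t} \Rightarrow \cctx{E'}{t'}$ is exactly the hypothesis $t \Rightarrow t'$.

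For the three inductive cases the key observation is that parallel reduction on terms has no standalone congruence rules for application, $\sucT$, or $\nrecT$: the only way to propagate a reduction through such a constructor without contracting a redex there is rule (t4), which glues a filler reduction to a \emph{singular}-context reduction. So in each case I peel the outermost constructor of $E$ off as a singular context, apply the induction hypothesis to the remaining sub-context, and recombine with (t4). Concretely, in case (E2) we have $E \equiv Fs$ and $E' \equiv F's'$ with $F \Rightarrow F'$ and $s \Rightarrow s'$; since $\cctx{(Fs)}{t} \equiv (\cctx{F}{t})\,s$, the induction hypothesis gives $\cctx{F}{t} \Rightarrow \cctx{F'}{t'}$, the singular context reduces by $\Box s \Rightarrow \Box s'$ (from (E1) and (E2)), and (t4) yields $(\cctx{F}{t})\,s \Rightarrow (\cctx{F'}{t'})\,s'$. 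Cases (E3) and (E4) have the same shape, using the singular contexts $\sucT\Box$ and $\nrecT\ r\ s\ \Box$, whose reductions come from (E3) and (E4) respectively (the latter also consuming $r \Rightarrow r'$ and $s \Rightarrow s'$).

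The only point that needs care — and the closest thing to an obstacle — is to notice that (t4) remains applicable even when the filler $\cctx{F}{t}$ (or its reduct $\cctx{F'}{t'}$) happens to be a $\mu$-abstraction, so that the junction is a genuine (t6)-style redex. This is legitimate precisely because parallel reduction is permitted to leave a redex uncontracted: we are not obliged to fire it here, and (t4) imposes no side condition ruling this out. Once this is granted, every inductive case is a one-line application of (t4) to the induction hypothesis together with the appropriate (E1)/(E3)/(E4) context step, and no variable-capture issues arise since none of the context constructors bind variables at the hole.
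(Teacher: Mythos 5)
Your proof is correct and follows the same route as the paper, which simply says ``by induction on the derivation of $E \Rightarrow E'$'' and leaves the details implicit. Your elaboration --- peeling off the outermost context constructor as a singular context and recombining via rule (t4) together with an (E1)--(E4) step, noting that (t4) carries no side condition even when the filler is a $\mu$-abstraction --- is exactly how those details play out.
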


\begin{proof}
By induction on the derivation of $E \Rightarrow E'$.
\end{proof}

\begin{lemma}
If $E^s$ is singular and $E^s \Rightarrow E'$, then $E'$ is singular.
\end{lemma}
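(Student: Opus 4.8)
The plan is to argue by case analysis on the three possible shapes of a singular context, combined with inversion on the derivation of $E^s \Rightarrow E'$. The crucial preliminary observation is that the empty context $\Box$ can only reduce to itself: among the rules (E1)--(E4) for parallel reduction on contexts, only (E1) has $\Box$ as its left-hand side, since (E2), (E3) and (E4) require the top-level form to be an application-context $Et$, a successor-context $\sucT E$, or a recursor-context $\nrecT\ r\ s\ E$, respectively. Hence whenever $\Box \Rightarrow G$ we must have $G \equiv \Box$.

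With this in hand, I would treat each singular shape in turn. If $E^s \equiv \Box t$, then its top-level form is an application-context, so the only rule whose conclusion can match is (E2); inversion yields $\Box \Rightarrow G$ and $t \Rightarrow t'$ with $E' \equiv G\,t'$. By the observation above $G \equiv \Box$, so $E' \equiv \Box t'$, which is again singular. If $E^s \equiv \sucT \Box$, then only (E3) applies, giving $\Box \Rightarrow G$ and $E' \equiv \sucT G$; again $G \equiv \Box$, so $E' \equiv \sucT \Box$ is singular. If $E^s \equiv \nrecT\ r\ s\ \Box$, then only (E4) applies, giving $\Box \Rightarrow G$, $r \Rightarrow r'$, $s \Rightarrow s'$ and $E' \equiv \nrecT\ r'\ s'\ G$; once more $G \equiv \Box$, so $E' \equiv \nrecT\ r'\ s'\ \Box$ is singular.

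There is essentially no obstacle here: the whole argument turns on the single inversion fact that $\Box$ is a normal form for context reduction, together with the syntactic disjointness of the three context-formers, which guarantees that each singular shape selects exactly one applicable rule. The only point requiring a little care is to verify that no rule other than the one matching the top-level former could have produced $E'$, so that the inversion is genuinely deterministic; this follows immediately from the fact that the left-hand sides of (E2)--(E4) are structurally distinct.
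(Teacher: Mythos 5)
Your proof is correct and coincides with the paper's (one-line) argument: a case analysis on the derivation of $E^s \Rightarrow E'$, which in your presentation is driven by the shape of $E^s$ and the inversion fact that $\Box \Rightarrow G$ forces $G \equiv \Box$ via rule (E1). The extra care you take in checking that the rules (E2)--(E4) have structurally disjoint left-hand sides is exactly the content the paper leaves implicit, so nothing further is needed.
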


\begin{proof}
By a case analysis on the derivation of $E^s \Rightarrow E'$.
\end{proof}

\begin{lemma}
\label{lemma:muT_pred_fv}
If $t \Rightarrow t'$, then $\FV{t'} \subseteq \FV t$ and 
$\FCV{t'} \subseteq \FCV t$.
\end{lemma}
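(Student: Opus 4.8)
The plan is to prove all three statements—for terms $t \Rightarrow t'$, commands $c \Rightarrow c'$, and contexts $E \Rightarrow E'$—simultaneously by induction on the derivation of parallel reduction (Definition~\ref{definition:muT_pred}). For contexts the claim reads $\FV{E'} \subseteq \FV E$ and $\FCV{E'} \subseteq \FCV E$, where the free variables of a context are those of its constituent subterms.

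Most rules are immediate from the induction hypotheses together with elementary facts about free variables. For (t1) and (t2) both sides coincide. For (t3), (t8), (t9), (c1) and (E1)--(E4) the free variables of the contractum are obtained from those of the parallel-reduced subterms by union, possibly discarding some sets (as with the dropped $s$ in (t8), or the bound $x$ in (t3)), so the inclusions follow directly from the induction hypotheses. The $\beta$-case (t5) uses the standard inclusions $\FV{\subst {t'} x {r'}} \subseteq (\FV{t'} \setminus \{x\}) \cup \FV{r'}$ and $\FCV{\subst {t'} x {r'}} \subseteq \FCV{t'} \cup \FCV{r'}$ for ordinary substitution, after which the induction hypotheses on $t \Rightarrow t'$ and $r \Rightarrow r'$ give the result. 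In case (t7) the side condition $\alpha \notin \FCV t$ is exactly what is needed: $\FCV{\mu\alpha.[\alpha]t} = \FCV t$, and $\FCV{t'} \subseteq \FCV t$ by the induction hypothesis.

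The real work is in the two cases that perform structural substitution, namely (t6), where $\cctx E {\mu\alpha.c} \Rightarrow \mu\alpha.\subst {c'} \alpha {\alpha E'}$, and (c2), where $[\alpha]\cctx E {\mu\beta.c} \Rightarrow \subst {c'} \beta {\alpha E'}$. For these I would first establish an auxiliary lemma describing how structural substitution acts on free variables, proved by induction on $t$ following Definition~\ref{lemma:mu_strucsubst}:
\[
\FV{\subst t \alpha {\beta E}} \subseteq \FV t \cup \FV E,
\qquad
\FCV{\subst t \alpha {\beta E}} \subseteq (\FCV t \setminus \{\alpha\}) \cup \{\beta\} \cup \FCV E.
\]
The only subtle clause is $\subst {([\alpha]t)} \alpha {\beta E} \equiv [\beta]\cctx E {\subst t \alpha {\beta E}}$, which is precisely where the label $\alpha$ is consumed and replaced by $\beta$ and where the context $E$ injects its free variables. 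With this lemma in hand, case (t6) is routine bookkeeping: applying it with the target $\mu$-variable equal to $\alpha$, and using $\FV{\cctx E {\mu\alpha.c}} = \FV E \cup \FV c$ and $\FCV{\cctx E {\mu\alpha.c}} = \FCV E \cup (\FCV c \setminus \{\alpha\})$ on the left, the outer binder $\mu\alpha$ removes $\alpha$ again, so both inclusions drop out from the induction hypotheses $\FV{c'} \subseteq \FV c$, $\FCV{c'} \subseteq \FCV c$, $\FV{E'} \subseteq \FV E$, $\FCV{E'} \subseteq \FCV E$. Case (c2) is identical except that $\beta$ and $\alpha$ are now distinct: the lemma contributes the singleton $\{\alpha\}$ to $\FCV{\subst {c'} \beta {\alpha E'}}$ and removes $\beta$, which matches $\FCV{[\alpha]\cctx E {\mu\beta.c}} = \{\alpha\} \cup \FCV E \cup (\FCV c \setminus \{\beta\})$ after the induction hypotheses.

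I expect the main obstacle to be stating and proving the auxiliary structural-substitution lemma with the exact bookkeeping of which $\mu$-variable is discarded, which is introduced, and how the free variables of $E$ are threaded in; once that inclusion is pinned down, every case of the main induction reduces to a direct computation with $\cup$ and $\setminus$.
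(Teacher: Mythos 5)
Your proposal is correct and follows the same route as the paper, which proves this lemma simply ``by induction on the derivation of $t \Rightarrow t'$'' (mutually for terms, commands and contexts, per the paper's standing convention); your elaboration of the structural-substitution cases (t6) and (c2) via the auxiliary inclusion $\FCV{\subst t \alpha {\beta E}} \subseteq (\FCV t \setminus \{\alpha\}) \cup \{\beta\} \cup \FCV E$ is exactly the bookkeeping the paper leaves implicit, and it checks out.
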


\begin{proof}
By induction on the derivation of $t \Rightarrow t'$.
\end{proof}

\begin{lemma}
\label{lemma:muT_pred_subst}
\label{lemma:muT_pred_strucsubst}
Parallel reduction is preserved under (structural) substitution. 
\begin{enumerate}
\item If $t \Rightarrow t'$ and $s \Rightarrow s'$, then 
	$\subst t x s \Rightarrow \subst {t'} x {s'}$.
\item If $t \Rightarrow t'$ and $E \Rightarrow E'$, then 
	$\subst t \alpha {\beta E} \Rightarrow \subst {t'} \alpha {\beta E'}$.
\end{enumerate}
\end{lemma}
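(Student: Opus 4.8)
The plan is to prove both statements simultaneously by mutual induction on the derivations of $t \Rightarrow t'$, $c \Rightarrow c'$ and $E \Rightarrow E'$, following exactly the shape of the parallel reduction relation. Before running this induction I would isolate the arithmetic of substitution that each step relies on, establishing four syntactic identities (each by a routine induction on term/command structure, using the composition fact $\cctx E {\cctx F t} \equiv \cctx {EF} t$): that ordinary and structural substitution both distribute over context application, e.g. $\subst {(\cctx E t)} x s \equiv \cctx {(\subst E x s)} {\subst t x s}$; the ordinary substitution lemma $\subst {(\subst t y r)} x s \equiv \subst {(\subst t x s)} y {\subst r x s}$ (for $y \neq x$, $y \notin \FV s$); the commutation of ordinary with structural substitution, $\subst {(\subst t \alpha {\beta E})} x s \equiv \subst {(\subst t x s)} \alpha {\beta (\subst E x s)}$; and the commutation of structural with structural substitution, $\subst {(\subst c \gamma {\gamma F})} \alpha {\beta E} \equiv \subst {(\subst c \alpha {\beta E})} \gamma {\gamma (\subst F \alpha {\beta E})}$ (for $\alpha, \beta \neq \gamma$). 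Here both forms of substitution act on a context componentwise, leaving the hole untouched.

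With these identities in hand, most inductive cases are mechanical. For the congruence rules (t1)--(t4), (c1) and (E1)--(E4) I push the substitution inside using the distributivity identities and apply the induction hypotheses, with reflexivity (Lemma~\ref{lemma:muT_pred_refl}) supplying the components that do not reduce. The genuinely computational cases are (t5), (t6) and (c2). In case (t5), $(\lm y.t)r \Rightarrow \subst {t'} y {r'}$, applying the substitution turns the left-hand side into another (t5)-redex $(\lm y.\subst t x s)(\subst r x s)$, and the ordinary substitution lemma identifies the two contracta. In case (t6), $\cctx E {\mu \alpha.c} \Rightarrow \mu \alpha.\subst {c'} \alpha {\alpha E'}$, the distributivity identities expose a fresh (t6)-redex, and reconciling the two results reduces to commutation of ordinary with structural substitution for part (1), and to commutation of structural with structural substitution for part (2) (the variable being substituted is, by the Barendregt convention, distinct from the bound $\alpha$). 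Case (c2) is handled just like (t6). For (t8) and (t9) I note that numerals $\natenc n$ contain no commands and are therefore fixed by both forms of substitution, so the matched shape $\sucT \natenc n$ is preserved and the rule reapplies verbatim.

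The step I expect to be the main obstacle is case (t6), and its twin (c2), in part (2), since it rests entirely on the structural-with-structural commutation identity. Proving that identity is the delicate point: its interesting case is a command $[\gamma]q$, where the inner structural substitution fires and rewrites $[\gamma]q$ to $[\gamma]\cctx F {q'}$, after which the outer substitution $\subst{\,}\alpha{\beta E}$ must be pushed through the freshly inserted context $F$; making the two sides coincide needs precisely the context-composition fact together with the side conditions $\alpha, \beta \neq \gamma$ and $\gamma \notin \FCV E$ guaranteed by the variable convention. A secondary point of care is the (t7) case of part (2): I must check that $\gamma \notin \FCV r$ entails $\gamma \notin \FCV{\subst r \alpha {\beta E}}$, which follows from $\gamma \neq \beta$ and $\gamma \notin \FCV E$ together with Lemma~\ref{lemma:muT_pred_fv}, so that the freshness premise of (t7) survives the substitution and the rule can be reapplied.
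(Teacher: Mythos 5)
Your proposal is correct and follows essentially the same route as the paper: the paper also proves both parts by induction on the derivation of $t \Rightarrow t'$ (mutually with commands and contexts), with the only interesting cases being (t6) and (c2), which it closes by an (unstated) commutation lemma for structural substitutions whose side conditions are discharged exactly as you do, via the Barendregt convention and Lemma~\ref{lemma:muT_pred_fv}. Your version is in fact more explicit than the paper's, since you spell out the four substitution identities the paper leaves implicit; just note that for (c2) the structural-with-structural identity is needed in the slightly more general form where the inner substitution retargets $\gamma$ to $\alpha$, producing a composed context $\beta\,E(\subst F \alpha {\beta E})$ on the right-hand side, a routine variant of what you stated.
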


\begin{proof}
By induction on the derivation of $t \Rightarrow t'$. We treat some cases.
\begin{enumerate}
\item[(t6)] Let $\cctx F {\mu\gamma.c} \Rightarrow \mu\gamma.\subst {c'} \gamma {\gamma F'}$
	with $c \Rightarrow c'$ and $F \Rightarrow F'$. Now we have 
	$\subst c \alpha {\beta E} \Rightarrow \subst {c'} \alpha {\beta E'}$ and 
	$\subst F \alpha {\beta E} \Rightarrow \subst {F'} \alpha {\beta E'}$ by 
	the induction hypothesis. Therefore we have the following.
	\begin{flalign*}
	\subst {(\cctx F {\mu\gamma.c})} \alpha {\beta E}
		\equiv{} & \cctx {(\subst F \alpha {\beta E})} {\mu\gamma.\subst c \alpha {\beta E}} \\
		\Rightarrow{}& \mu\gamma.\subst {\subst {c'} \alpha {\beta E'}} \gamma {\gamma (\subst {F'} \alpha {\beta E'})} \\
		\equiv{} & \mu\gamma.\subst{\subst {c'} \gamma {\gamma F'}} \alpha {\beta E'} \\
		\equiv{} & \subst{(\mu\gamma.\subst {c'} \gamma {\gamma F'})} \alpha {\beta E'}
	\end{flalign*}
	In the before last step we use a substitution lemma. This is possible because
	$\gamma \notin \FCV{E}$ by the Barendregt convention and thus $\gamma \notin \FCV{E'}$ 
	by Lemma~\ref{lemma:muT_pred_fv}.
\item[(c2)] Let $[\alpha]\cctx F {\mu\gamma.c} \Rightarrow \subst {c'} \gamma {\alpha F'}$
	with $c \Rightarrow c'$ and $F \Rightarrow F'$. Now we have 
	$\subst c \alpha {\beta E} \Rightarrow \subst {c'} \alpha {\beta E'}$ and 
	$\subst F \alpha {\beta E} \Rightarrow \subst {F'} \alpha {\beta E'}$ by 
	the induction hypothesis. Therefore we have the following.
	\begin{flalign*}
	\subst {([\alpha]\cctx F {\mu\gamma.c})} \alpha {\beta E}
		\equiv{} & [\beta]\cctx {E(\subst F \alpha {\beta E})} {\mu\gamma.\subst c \alpha {\beta E}} \\
		\Rightarrow{}& \subst {\subst {c'} \alpha {\beta E'}} \gamma {\beta E'(\subst {F'} \alpha {\beta E'})} \\
		\equiv{} & \subst{\subst {c'} \gamma {\alpha F'}} \alpha {\beta E'} \\
		\equiv{} & \subst{(\subst {c'} \gamma {\alpha F'})} \alpha {\beta E'}
	\end{flalign*}
	In the before last step we use a substitution lemma. This is possible because
	$\gamma \notin \FCV{E}$ by the Barendregt convention and thus $\gamma \notin \FCV{E'}$ 
	by Lemma~\ref{lemma:muT_pred_fv}. 
	\qedhere
\end{enumerate}
\end{proof}

A crucial property of a parallel reduction is that a one step
reduction is an instance of a parallel reduction and that a parallel
reduction is an instance of a multi-step reduction.

\begin{lemma}
\label{lemma:muT_mpred_equiv_multi}
Parallel reduction enjoys the intended behavior. That is:
\begin{enumerate}
\item If $t\to t'$, then $t\Rightarrow t'$.
\item If $t\Rightarrow t'$, then $t\tto t'$.
\end{enumerate}
\end{lemma}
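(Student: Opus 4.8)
The plan is to prove both parts by induction, mutually with the corresponding statements for commands and for contexts; the context statement I would phrase as ``if $E \Rightarrow E'$ then $\cctx E s \tto \cctx{E'} s$ for every term $s$'', which is exactly what makes the $\Rightarrow$-rules that carry a context ((t4), (t6), (c2) and (E2)--(E4)) go through. For part~(1) I would induct on the derivation of $t \to t'$, viewing $\to$ as a compatible closure, so the base cases are the eight basic reduction rules and the inductive cases are the one-hole congruences. For part~(2) I would induct on the derivation of $t \Rightarrow t'$, translating each $\Rightarrow$-rule into a concrete $\tto$-sequence.

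For part~(1) the point is that the rules of Definition~\ref{definition:muT_pred} were tailored so that each basic redex is a single instance of $\Rightarrow$, once every immediate subterm is replaced by itself using reflexivity (Lemma~\ref{lemma:muT_pred_refl}). Concretely, $\beta$ is (t5); the rules $\mu R$, $\mu\sucT$ and $\mu\natTypeT$ are all instances of (t6) with $E$ the singular context $\Box s$, $\sucT\Box$, or $\nrecT\ r\ s\ \Box$ respectively (cf.\ Fact~\ref{fact:muT_red_singular_context}); $\mu\eta$ is (t7); $\mu i$ is (c2) with $E \equiv \Box$; and $0$, $\sucT$ are (t8), (t9). The congruence cases are handled uniformly: $\lm x.\_$ by (t3), $[\alpha]\_$ by (c1), and all of application, $\sucT$ and the three $\nrecT$-positions by (t4) together with (E2)--(E4), since in (t4) the unchanged material sits inside the singular context and is reduced (reflexively) by the context rules. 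Reduction under $\mu\alpha.\_$ is the one case needing care: it is obtained from (t6) with $E \equiv \Box$, using the identity $\subst c \alpha {\alpha\Box} \equiv c$ (the same identity already implicit in the proof of Lemma~\ref{lemma:muT_pred_refl}).

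For part~(2) most rules are routine: (t1)--(t3), (c1) and the $\nrecT$-rules (t8), (t9) follow by applying the induction hypotheses to the premises, closing $\to$ under the relevant compatibility rule, firing the matching basic rule, and using transitivity of $\tto$; (t5) is ``reduce both sides, then fire $\beta$''. The case (t7) needs one extra observation: after $\mu\alpha.[\alpha]t \tto \mu\alpha.[\alpha]t'$ we may fire $\mu\eta$ because $\FCV{t'} \subseteq \FCV t$ by Lemma~\ref{lemma:muT_pred_fv}, so the side condition $\alpha \notin \FCV{t'}$ still holds. The genuinely non-trivial cases are (t6) and (c2), where a whole ``context jump'' happens in one parallel step. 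Here I would use Fact~\ref{fact:muT_lift_context}: for (t6), $\cctx E {\mu\alpha.c} \tto \cctx{E'}{\mu\alpha.c'} \tto \mu\alpha.\subst{c'}\alpha{\alpha E'}$, the first step by the context induction hypothesis and compatibility, the second by Fact~\ref{fact:muT_lift_context}.

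For (c2) the same strategy gives $[\alpha]\cctx E{\mu\beta.c} \tto [\alpha]\mu\beta.\subst{c'}\beta{\beta E'} \to_{\mu i} \subst{(\subst{c'}\beta{\beta E'})}\beta{\alpha\Box}$, and it remains to recognise this last term as $\subst{c'}\beta{\alpha E'}$. This is exactly where the main obstacle lies: matching the outcome of ``lift the context, then fire $\mu i$'' with the single deep structural substitution produced by (c2) requires the structural-substitution composition identity $\subst{(\subst{c}\beta{\beta E})}\beta{\alpha\Box} \equiv \subst c \beta{\alpha E}$, an instance of the same substitution lemma already invoked in Lemma~\ref{lemma:muT_pred_subst}. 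I expect establishing and correctly instantiating this identity --- together with keeping the $\mu$-variable freshness conditions straight so that the compositions are legal --- to be the fiddliest part of the argument.
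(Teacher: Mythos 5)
Your proposal is correct and takes essentially the same route as the paper's (very terse) proof: part~(1) by induction on the derivation of $t \to t'$ using reflexivity of $\Rightarrow$ (Lemma~\ref{lemma:muT_pred_refl}), and part~(2) by induction on the derivation of $t \Rightarrow t'$, where your combination of Fact~\ref{fact:muT_lift_context} with the composition identity $\subst{(\subst{c}{\beta}{\beta E})}{\beta}{\alpha\ \Box} \equiv \subst{c}{\beta}{\alpha E}$ is precisely the ``obvious substitution lemma for $\tto$'' that the paper leaves implicit. You also correctly flag the two genuine subtleties the paper glosses over: recovering the $\mu$-congruence from (t6) with $E \equiv \Box$ via $\subst{c}{\alpha}{\alpha\ \Box} \equiv c$, and using Lemma~\ref{lemma:muT_pred_fv} to keep the side condition $\alpha \notin \FCV{t'}$ available when firing $\mu\eta$ in case (t7).
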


\begin{proof}
The first property is proven by induction on the derivation of $t \to t'$ using that 
parallel reduction is reflexive (Lemma~\ref{lemma:muT_pred_refl}). The second by induction on the 
derivation of $t \Rightarrow t'$ using an obvious substitution lemma for $\tto$.
\end{proof}

To define the complete development of a term $t$, we need to decide
which redexes to contract. This job is non-trivial because
$\Rightarrow$ is very strong: In one step it is able to move a subterm
that is located very deeply in the term to the outside. For example,
consider the command $e$:
\begin{equation}	
	e  \equiv  \cctx {E_n} {\mu\alpha_n.[\alpha_n]\ldots \cctx {E_1} {\mu\alpha_1.[\alpha_1]\cctx {E_0} {\mu\alpha_0.c}} \ldots} 
	\label{equation:muT_pred_strong}
\end{equation}
where all the $\mu \alpha_i.[\alpha_i]E_{i-1}$ are
$\mu\eta$-redexes. That is, $\alpha_i \notin \FCV{E_j}$ for all 
\mbox{$0 \le j < i \le n$} and $\alpha_i \notin \FCV c$ 
for all \mbox{$0 \le i \le n$}. Intuitively one would be urged to contract the
\mbox{(t7)-redexes} immediately. That yields:
$$\cctx {E'_n} {\ldots \cctx {E'_1}{\cctx {E'_0} {\mu\alpha_0.c'}}}$$
given complete developments $E'_i$ of $E_i$ and $c'$ of $c$.  However,
this is not the complete development of $e$. We have
$[\alpha_{i+1}]\cctx {E_i} {\mu\alpha_i.d} \Rightarrow d$ for each
$i$ such that $0 \le i < n$, hence the whole command $e$ reduces to
$c'$. As this example indicates, it
is impossible to determine whether a (t7)-redex should be contracted
without looking more deeply into the term. In order to define the complete
development we introduce a special kind of context consisting of a series of 
nested (t7)-redexes, as in (\ref{equation:muT_pred_strong}). Furthermore, 
we define a case distinction on terms.

\begin{definition}
\label{definition:muT_eta_contexts}
A \lambdamuT{} \emph{$\eta$-context} (or simply: an \emph{$\eta$-context}) 
is defined as follows.
\[
	H \inductive \Box \separator \cctx E {\mu\alpha.[\alpha]H}
		\quad\text{provided that }\alpha \notin \FCV H
\]
\end{definition}

The operation of substitution of a term for the hole in an
\mbox{$\eta$-context} is defined in the usual way. However, since
these contexts contain $\mu$-binders it is important that this
operation is \emph{capture avoiding} for $\mu$-variables. Note also
that---in general---an $\eta$-context is not a context in the sense
of Definition~\ref{definition:context}.

\begin{lemma}
\label{lemma:muT_term_classification}
Each term $t$ is of exactly one of the following shapes.
\begin{itemize}
\newcounter{saveenum}
\item[variable]
	\begin{enumerate}
	\item $x$ 
	\setcounter{saveenum}{\value{enumi}}
	\end{enumerate}
\item[value]
	\begin{enumerate}
	\setcounter{enumi}{\value{saveenum}}
	\item $\natenc n$ 
	\item $\lm x.s$ 
	\setcounter{saveenum}{\value{enumi}}
	\end{enumerate}
\item[redex] 
	\begin{enumerate}	
	\setcounter{enumi}{\value{saveenum}}
	\item $(\lm x.s)r$ 
	\item $\nrecT\ r\ s\ \natenc n$
	\item $\cctx H r$	with $H \not\equiv \Box$ and $r \equiv \cctx E {\lambda x.s}$, 
		$r \equiv \cctx E 0$ or $r \equiv \cctx E x$
	\item $\cctx H {\cctx E {\mu\beta.c}}$ with $c \equiv [\gamma]s$ and $\gamma \neq \beta$,
		or $c \equiv [\beta]s$ and $\beta \in \FCV s$
	\setcounter{saveenum}{\value{enumi}}
	\end{enumerate}
\item[other] 
	\begin{enumerate}
	\setcounter{enumi}{\value{saveenum}}
	\item $sr$ with $s \not\equiv \cctx E {\mu\beta.c}$ and $s \not\equiv \lambda x.t$
	\item $\nrecT\ r\ s\ u$ with $u \not\equiv \cctx E {\mu\beta.c}$ and $u \not\equiv \natenc n$
	\item $\sucT u$ with $u \not\equiv \cctx E {\mu\beta.c}$ and $u \not\equiv \natenc n$
	\end{enumerate}
\end{itemize}
\end{lemma}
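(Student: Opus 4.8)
The plan is to prove the two directions of ``exactly one'' separately---\emph{existence} (every $t$ matches at least one shape) and \emph{exclusivity} (no $t$ matches two)---and to base both on two unique-decomposition lemmas that I would establish first. The first is a \emph{context--head decomposition}: every term $t$ is uniquely $\cctx E w$, where $E$ is a \lambdamuT{}-context (Definition~\ref{definition:context}) and the head $w$ is one of $x$, $0$, $\lm x.s$ or $\mu\alpha.c$. This follows by a short induction on $t$, descending along the unique hole-direction of each constructor (the function of an application, the argument of $\sucT$, the last argument of $\nrecT$) until a non-context head is reached; determinism of the descent gives uniqueness. The second is a \emph{maximal $\eta$-context decomposition}: every $t$ is uniquely $\cctx H{t_0}$ with $H$ an $\eta$-context (Definition~\ref{definition:muT_eta_contexts}) chosen so that the context--head head of $t_0$ is not a $\mu\eta$-redex. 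I would obtain it by greedily peeling: decompose $t \equiv \cctx E w$; if $w \equiv \mu\alpha.[\alpha]u$ with $\alpha \notin \FCV u$, absorb $\cctx E{\mu\alpha.[\alpha]\Box}$ into $H$ and recurse on the proper subterm $u$; otherwise stop. This terminates and each step is forced, so $H$ and $t_0$ are unique.

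For existence I would do structural induction, splitting on the outermost constructor. Leaves ($x$, $0$, $\lm x.s$), genuine redexes ($(\lm x.s)r$ and $\nrecT\ r\ s\ \natenc n$) and numerals ($\sucT\natenc n \equiv \natenc{n+1}$) are immediate. In the remaining cases a $\mu$-head may be exposed in a singular position: for $t \equiv sr$, $\nrecT\ r\ s\ u$ or $\sucT u$ with the relevant subterm of the form $\cctx E{\mu\beta.c}$ I absorb the surrounding context, landing in shape~(7) when $\mu\beta.c$ is not a $\mu\eta$-redex and peeling when it is; if no such head appears and the subterm is neither a numeral nor a $\lambda$, the term falls under shapes~(8)--(10). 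For $t \equiv \mu\beta.[\gamma]s$ I land in shape~(7) with $H \equiv E \equiv \Box$ unless $\gamma = \beta$ and $\beta \notin \FCV s$, in which case $t$ is itself a $\mu\eta$-redex and I peel. To reassemble after a peel I use that $\eta$-contexts compose: if the recursion returns $s \equiv \cctx{H'}{r}$ in shape~(6) or~(7), then $\cctx H s$ is again $\cctx{H''}{r}$ for an $\eta$-context $H''$ built by plugging $H'$ into $H$, whose side condition reduces to $\beta \notin \FCV{H'}$, following from the peel condition $\beta \notin \FCV s$ and $\FCV{H'} \subseteq \FCV s$.

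Exclusivity then reads off the decompositions. The context--head head of $t$ is a $\mu$-abstraction exactly when $t$ is of shape~(6) or~(7), and lies in $\{x,0,\lm x.s\}$ for every other shape; this single invariant separates $\{6,7\}$ from $\{1,\ldots,5,8,\ldots,10\}$. Within $\{6,7\}$ the \emph{core} head of the maximal $\eta$-decomposition decides between them: a $\lambda$, $0$ or variable gives~(6), a $\mu$-abstraction that is not a $\mu\eta$-redex gives~(7), and maximality guarantees the core head is never a $\mu\eta$-redex. Within the remaining shapes, all of which force $H \equiv \Box$, the outermost constructor together with the side conditions (is the numeric argument a numeral? is the function a $\lambda$? is the immediate subterm $\mu$-headed?) pins down a single case.

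I expect the main obstacle to be the $\eta$-context bookkeeping rather than any single hard idea: establishing the maximal $\eta$-context decomposition---its termination is easy, but arguing that the peel step is genuinely forced, hence the decomposition unique, requires ``being a $\mu\eta$-redex'' to be a well-defined property of the unique context--head head---and checking that shapes~(6) and~(7) neither overlap nor collide with the ``other'' shapes~(8)--(10), whose side conditions $s \not\equiv \cctx E{\mu\beta.c}$ and $u \not\equiv \cctx E{\mu\beta.c}$ exist precisely to exclude an absorbable $\mu$-head. A final delicate point is that numerals decompose as $\natenc n \equiv \cctx{\sucT^n\Box} 0$, so a numeral sitting at the foot of a non-trivial $\eta$-context is correctly placed in shape~(6) via the $r \equiv \cctx E 0$ alternative; this does not clash with the bare-numeral shape~(2), since~(2) forces $H \equiv \Box$ whereas~(6) forces $H \not\equiv \Box$.
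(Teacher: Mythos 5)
Your proposal is correct and takes essentially the same approach as the paper, whose entire proof consists of a structural induction for existence together with the remark that the shapes are non-overlapping, so that uniqueness is ``immediate''. Your two unique-decomposition lemmas (the context--head decomposition and the maximal $\eta$-context decomposition, with the capture-avoidance bookkeeping you flag) are exactly the machinery needed to discharge what the paper leaves implicit, rather than a divergence from its argument.
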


\begin{proof}
We prove that $t$ is always of one of the given shapes by induction on the
structure of $t$. Furthermore, because these shapes are non-overlapping it
is immediate that $t$ is always of exactly one of the given shapes.
\end{proof}

\begin{definition}
\index{Complete development}
\label{definition:muT_complete_development}
The \emph{complete development} $\MPRED t$ of a term $t$ is defined (using the
case distinction established in Lemma~\ref{lemma:muT_term_classification}) as:
\begin{enumerate}
\item $\MPRED x \defined x$
\item $\MPRED {\natenc n} \defined \natenc n$
\item $\MPRED{(\lm x.s)} \defined \lm x.\MPRED s$
\item $\MPRED{((\lm x.s)r)} \defined \subst {\MPRED s} x {\MPRED r}$
\item $\MPRED{(\nrecT\ r\ s\ 0)} \defined \MPRED r$
\item $\MPRED{(\nrecT\ r\ s\ (\sucT \natenc n))} 
	\defined \MPRED s\ \natenc n\ (\nrecT\ \MPRED r\ \MPRED s\ \natenc n)$
\item $\MPRED{(\cctx H r )} \defined \cctx {\MPRED H} {\MPRED r}$
	
	provided that $H \not\equiv \Box$ and $r \equiv \cctx E {\lambda x.s}$, $r \equiv \cctx E 0$	or $r \equiv \cctx E x$.
	\label{item:muT_complete_development_mu1}
\item $\MPRED{(\cctx H {\cctx E {\mu\beta.c}})} \defined \mu\beta.\subst {\MPRED c} \beta {\beta {\MPRED H \MPRED E}}$
	
	provided that $c \equiv [\gamma]s$ and $\gamma \neq \beta$,
	or $c \equiv [\beta]s$ and $\beta \in \FCV s$.
	\label{item:muT_complete_development_mu2}
\item $\MPRED{(sr)} \defined {\MPRED s} {\MPRED r}$

	provided that $s \not\equiv \cctx E {\mu\beta.c}$ and $s \not\equiv \lambda x.t$
\item $\MPRED{(\nrecT\ r\ s\ u)} \defined \nrecT\ {\MPRED r}\ {\MPRED s}\ {\MPRED u}$

 	provided that $u \not\equiv \cctx E {\mu\beta.c}$ and $u \not\equiv \natenc n$
\item $\MPRED{(\sucT u)} \defined \sucT {\MPRED u}$

	provided that $u \not\equiv \cctx E {\mu\beta.c}$ and $u \not\equiv \natenc n$
\end{enumerate}
with the complete development $\MPRED c$ of a command $c$ defined as:
\begin{enumerate}
\item $\MPRED{([\alpha]\cctx E {\mu\beta.c})} \defined \subst {\MPRED c} \beta {\alpha \MPRED E}$
\item $\MPRED{([\alpha]t)} \defined [\alpha]\MPRED t$

	provided that $t \not\equiv \cctx E {\mu\beta.c}$
\end{enumerate}
the complete development $\MPRED E$ of a context $E$ defined as:
\begin{enumerate}
\item $\MPRED{\Box} \defined \Box$
\item $\MPRED{(Et)} \defined \MPRED E \MPRED t$
\item $\MPRED{(\sucT E)} \defined \sucT \MPRED E$
\item $\MPRED{(\nrecT\ r\ s\ E)} \defined \nrecT\ \MPRED r\ \MPRED s\ \MPRED E$
\end{enumerate}
and the complete development $\MPRED H$ of an $\eta$-context $H$ defined as:
\begin{enumerate}
\item $\MPRED{\Box} \defined \Box$
\item $\MPRED{(\cctx E {\mu\alpha.[\alpha]H})} \defined \MPRED E \MPRED H$
\end{enumerate}
\end{definition}

Towards a proof of confluence, we now want to prove the
following property: if $t\Rightarrow t'$, then $t'\Rightarrow \MPRED t$. 
This is proven by induction on the structure of $t$; the most interesting
cases are when $t\equiv \cctx H r $ (case~\ref{item:muT_complete_development_mu1}
of Definition~\ref{definition:muT_complete_development}) or 
$t\equiv \cctx H {\cctx E {\mu\beta.c}}$ (case~\ref{item:muT_complete_development_mu2}
of Definition~\ref{definition:muT_complete_development}). For these cases we need some 
special lemmas.

\begin{lemma}
\label{lemma:muT_confluence_mu1_help}
Let $r$ be a term such that $r \equiv \cctx E {\lambda x.s}$, $r \equiv \cctx E 0$ 
or $r \equiv \cctx E x$, and $H$ an $\eta$-context. If
$[\alpha]\cctx H r \Rightarrow c$ with $\alpha \notin \FCV{\cctx H r}$, then 
$c\equiv [\alpha]s$ with $\cctx H r \Rightarrow s$ and $\alpha \notin \FCV s$.
\end{lemma}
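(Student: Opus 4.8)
The plan is to prove the statement by induction on the structure of the $\eta$-context $H$, analysing in each case which of the command rules (c1) and (c2) of Definition~\ref{definition:muT_pred} can have produced the reduction $[\alpha]\cctx H r \Rightarrow c$. The structural observation that drives everything is that $r$, being of the form $\cctx E {\lm x.s}$, $\cctx E 0$ or $\cctx E x$, cannot itself be written as $\cctx F {\mu\beta.d}$: descending the spine of a term --- through the function position of an application, the argument of $\sucT$, and the numeral position of $\nrecT$ --- reaches a unique terminal that is not an application, successor or recursor, and for $r$ this terminal is $\lm x.s$, $0$ or $x$, never a $\mu$-abstraction. Hence a $\mu$-abstraction can occur on the spine of $\cctx H r$ only when $H \not\equiv \Box$, and then only as the outermost $\mu$ supplied by $H$.

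In the base case $H \equiv \Box$ we have $\cctx H r \equiv r$, which by the above is not of the form $\cctx F {\mu\beta.d}$; therefore rule (c2) is inapplicable and the reduction must be an instance of (c1), giving $c \equiv [\alpha]s$ with $r \Rightarrow s$. Since $\alpha \notin \FCV r$, Lemma~\ref{lemma:muT_pred_fv} yields $\alpha \notin \FCV s$, as required. The same (c1) subcase also settles part of the inductive step at once: whenever the reduction of $[\alpha]\cctx H r$ is by (c1) we obtain $c \equiv [\alpha]s$ with $\cctx H r \Rightarrow s$ directly, and $\alpha \notin \FCV s$ again by Lemma~\ref{lemma:muT_pred_fv}.

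The interesting case is (c2) in the inductive step, where $H \equiv \cctx {E_0} {\mu\gamma.[\gamma]H'}$ with $\gamma \notin \FCV{H'}$, so that $\cctx H r \equiv \cctx {E_0}{\mu\gamma.[\gamma]\cctx{H'}r}$. By the spine analysis the only decomposition of this term as $\cctx F {\mu\beta.d}$ is $F = E_0$, $\beta = \gamma$ and $d = [\gamma]\cctx{H'}r$, so (c2) must fire as $[\alpha]\cctx{E_0}{\mu\gamma.[\gamma]\cctx{H'}r} \Rightarrow \subst{d'}{\gamma}{\alpha E_0'}$ with $[\gamma]\cctx{H'}r \Rightarrow d'$ and $E_0 \Rightarrow E_0'$. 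Using the Barendregt convention ($\gamma \notin \FCV r$) together with $\gamma \notin \FCV{H'}$ we get $\gamma \notin \FCV{\cctx{H'}r}$, so the induction hypothesis applies to the smaller $\eta$-context $H'$ and gives $d' \equiv [\gamma]s'$ with $\cctx{H'}r \Rightarrow s'$ and $\gamma \notin \FCV{s'}$. Unfolding the structural substitution (Definition~\ref{lemma:mu_strucsubst}), and using that it acts as the identity on a term in which the replaced $\mu$-variable does not occur free, we compute $\subst{([\gamma]s')}{\gamma}{\alpha E_0'} \equiv [\alpha]\cctx{E_0'}{s'}$, so $c \equiv [\alpha]s$ with $s \equiv \cctx{E_0'}{s'}$. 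Finally $\mu\gamma.[\gamma]\cctx{H'}r \Rightarrow s'$ by rule (t7) (legitimate since $\gamma \notin \FCV{\cctx{H'}r}$), whence $\cctx{E_0}{\mu\gamma.[\gamma]\cctx{H'}r} \Rightarrow \cctx{E_0'}{s'}$ by Lemma~\ref{lemma:muT_pred_cctx_subst}; that is, $\cctx H r \Rightarrow s$, and $\alpha \notin \FCV s$ follows from $\alpha \notin \FCV{\cctx H r}$ by Lemma~\ref{lemma:muT_pred_fv}.

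The main obstacle I expect is the bookkeeping around the spine decomposition: one must argue carefully that the only $\mu$-abstraction reachable along the spine of $\cctx H r$ is the outermost one contributed by $H$, so that both the choice of rule (base case) and the choice of (c2)-decomposition (inductive step) are forced, and no other reducts $c$ are possible. Once this uniqueness is pinned down, the remaining work --- recognising $[\gamma]\cctx{H'}r$ as a smaller instance of the lemma and checking that the structural substitution collapses because $\gamma \notin \FCV{s'}$ --- is routine. A minor point to handle cleanly is the freshness side conditions, in particular that the $\mu$-bound $\gamma$ in $H$ is distinct from the free $\mu$-variables of $r$, which is precisely what the capture-avoiding reading of $\eta$-context hole-filling together with the Barendregt convention provide.
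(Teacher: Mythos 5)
Your proof is correct and takes the same route as the paper, which proves this lemma simply ``by induction on the structure of $H$'' with no further detail: your case analysis on the command rules (c1)/(c2), the spine argument forcing the unique decomposition at the outermost $\mu$ of $H$, and the use of rule (t7) together with Lemmas~\ref{lemma:muT_pred_cctx_subst} and~\ref{lemma:muT_pred_fv} are exactly the details that induction requires. In effect you have supplied a complete version of the argument the paper leaves implicit, including the freshness bookkeeping for $\gamma$ that makes the structural substitution collapse.
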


\begin{proof}
By induction on the structure of $H$.
\end{proof}

\begin{lemma}
\label{lemma:muT_confluence_mu1}
Let $r$ be a term such that $r \equiv \cctx E {\lambda x.s}$, $r \equiv \cctx E 0$ 
or $r \equiv \cctx E x$, and $H$ an $\eta$-context such that $H \not\equiv \Box$. 
If $\cctx H r \Rightarrow t$ and for every strict subexpression $e$ of 
$\cctx H r$ we have $e \Rightarrow e'$ implies $e'\Rightarrow \MPRED e$, then 
$t \Rightarrow \cctx {\MPRED H} {\MPRED r}$.
\end{lemma}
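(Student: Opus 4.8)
The plan is to give a \emph{direct} proof---not by induction on $H$, but by case analysis on the final rule of the derivation $\cctx H r \Rightarrow t$---exploiting the fact that the hypothesis ``$e \Rightarrow e'$ implies $e' \Rightarrow \MPRED e$'' for strict subexpressions $e$ is already strong enough to discharge every recursive obligation in a single step. First I would unfold the $\eta$-context: since $H \not\equiv \Box$, Definition~\ref{definition:muT_eta_contexts} gives $H \equiv \cctx{E_0}{\mu\alpha.[\alpha]H_0}$ with $\alpha \notin \FCV{H_0}$, so $\cctx H r \equiv \cctx{E_0}{\mu\alpha.[\alpha]\cctx{H_0}{r}}$. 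Writing $u \equiv \cctx{H_0}{r}$, the Barendregt convention gives $\alpha \notin \FCV r$, hence $\alpha \notin \FCV u$, and likewise $\alpha \notin \FCV{E_0}$. Two instances of the hypothesis carry the whole argument: $u$ is a strict subexpression of $\cctx H r$, so (reading off $\MPRED u$ from Lemma~\ref{lemma:muT_term_classification}) we get $u \Rightarrow s \implies s \Rightarrow \MPRED u = \cctx{\MPRED{H_0}}{\MPRED r}$; and the components of $E_0$ are strict subexpressions, giving the context-level statement $E_0 \Rightarrow E_0' \implies E_0' \Rightarrow \MPRED{E_0}$ (assembled component-wise).

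Since $\cctx H r \equiv \cctx{E_0}{\mu\alpha.c}$ with $c \equiv [\alpha]u$, the only applicable term rules are (t7) (only when $E_0 \equiv \Box$), (t6) (always), and (t4) (only when $E_0 \not\equiv \Box$); rules (t5), (t8), (t9) are excluded because the head along the context spine is a $\mu$-abstraction rather than a $\lambda$, $0$, or $\sucT\natenc n$. In the (t7) case $E_0 \equiv \Box$ and $t \equiv s$ with $u \Rightarrow s$, and the target is $\cctx{\MPRED H}{\MPRED r} = \cctx{\MPRED{H_0}}{\MPRED r}$, which the subexpression hypothesis on $u$ yields at once. In the (t4) case I would write the nonempty $E_0 \equiv E^s E'$ with $E^s$ the outermost singular layer, so that $\cctx H r \equiv \cctx{E^s}{\cctx{E'}{\mu\alpha.c}}$ and the rule gives $t \equiv \cctx{F}{v'}$ with $E^s \Rightarrow F$ (so $F$ is singular) and $\cctx{E'}{\mu\alpha.c} \Rightarrow v'$. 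The point is that $\cctx{E'}{\mu\alpha.c} \equiv \cctx{H'}{r}$ with $H' \equiv \cctx{E'}{\mu\alpha.[\alpha]H_0}$ is itself a strict subexpression of $\cctx H r$ (as $E^s$ is a proper singular context), so the hypothesis gives $v' \Rightarrow \MPRED{(\cctx{H'}{r})} = \cctx{\MPRED{E'}}{\cctx{\MPRED{H_0}}{\MPRED r}}$, while $F \Rightarrow \MPRED{E^s}$ component-wise. Lemma~\ref{lemma:muT_pred_cctx_subst} then assembles $\cctx{F}{v'} \Rightarrow \cctx{\MPRED{E^s}}{\cctx{\MPRED{E'}}{\cctx{\MPRED{H_0}}{\MPRED r}}}$, which equals $\cctx{\MPRED H}{\MPRED r}$ once one records that complete development distributes over context composition, $\MPRED{(E^s E')} \equiv \MPRED{E^s}\MPRED{E'}$.

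The remaining and most delicate case is (t6), where $t \equiv \mu\alpha.\subst{c'}{\alpha}{\alpha E_0'}$ with $c \equiv [\alpha]u \Rightarrow c'$ and $E_0 \Rightarrow E_0'$. Here the helper Lemma~\ref{lemma:muT_confluence_mu1_help}, applied to the command $[\alpha]\cctx{H_0}{r}$ (whose side conditions---$r$ of the stated value-like shape, $H_0$ an $\eta$-context, $\alpha \notin \FCV u$---are exactly what we have), pins down $c' \equiv [\alpha]s$ with $u \Rightarrow s$ and $\alpha \notin \FCV s$. Unfolding the structural substitution and using $\alpha\notin\FCV s$ gives $t \equiv \mu\alpha.[\alpha]\cctx{E_0'}{s}$. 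The crucial observation is that \emph{this reduct is again a $\mu\eta$-redex}: since $\alpha\notin\FCV s$ and $\alpha\notin\FCV{E_0'}$ (from $\alpha\notin\FCV{E_0}$ and Lemma~\ref{lemma:muT_pred_fv}), rule (t7) fires on $t$ itself, giving $t \Rightarrow z$ for any $z$ with $\cctx{E_0'}{s} \Rightarrow z$. Taking $z \equiv \cctx{\MPRED{E_0}}{\cctx{\MPRED{H_0}}{\MPRED r}} = \cctx{\MPRED H}{\MPRED r}$ works, because $E_0' \Rightarrow \MPRED{E_0}$ and $s \Rightarrow \cctx{\MPRED{H_0}}{\MPRED r}$ by the two subexpression hypotheses above, so Lemma~\ref{lemma:muT_pred_cctx_subst} delivers $\cctx{E_0'}{s} \Rightarrow \cctx{\MPRED{E_0}}{\cctx{\MPRED{H_0}}{\MPRED r}}$. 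This case subsumes $E_0 \equiv \Box$ as the instance $E_0' \equiv \Box$.

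I expect the (t6) case to be the main obstacle. The complete development $\cctx{\MPRED H}{\MPRED r}$ has already collapsed the outer $\mu\eta$-redex of $H$, whereas a (t6)-reduct retains a \emph{residual} $\mu\eta$-redex $\mu\alpha.[\alpha]\cctx{E_0'}{s}$; the resolution is to notice that this residual redex can be contracted by (t7) within the single further $\Rightarrow$-step the lemma allots, and that the helper lemma is exactly what guarantees the residual command has the shape $[\alpha]s$ that (t7) requires. The only remaining ingredients are routine: the component-wise lifting of the subexpression hypothesis to contexts, the distribution of $\MPRED{(\cdot)}$ over context composition, and the tracking of the $\alpha\notin\FCV{\cdot}$ side conditions via Lemma~\ref{lemma:muT_pred_fv} and the Barendregt convention.
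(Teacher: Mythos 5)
Your proposal is correct and follows essentially the same route as the paper's own proof: both proceed by case analysis on the last rule of the derivation $\cctx H r \Rightarrow t$ (your (t4), (t6), (t7) cases correspond exactly to the paper's three cases, with your $E_0 \equiv E^s E'$ decomposition matching the paper's $E_s E_r$ split), both invoke Lemma~\ref{lemma:muT_confluence_mu1_help} in the (t6) case to pin down $c' \equiv [\alpha]s$ with $\alpha \notin \FCV s$, and both resolve the residual $\mu\eta$-redex $\mu\alpha.[\alpha]\cctx{E_0'}{s}$ by firing rule (t7) together with Lemma~\ref{lemma:muT_pred_cctx_subst}. Your write-up is somewhat more explicit than the paper's about the excluded rules, the distribution of $\MPRED{(\cdot)}$ over context composition, and the $\FCV$ side conditions, but these are elaborations of the same argument rather than a different one.
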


\begin{proof}
We have to consider three cases for the reduction $\cctx H r \Rightarrow t$.
\begin{enumerate}
\item[(t4)] Let $\cctx H r \equiv \cctx {E_s} {\cctx {E_r} {\mu \beta. [\beta]\cctx {H_1} r}}
		\Rightarrow \cctx {E_s'} s$ with $E_s$ a singular context such that 
	$E_s \Rightarrow E_s'$, and $\cctx {E_r} {\mu \beta.[\beta]{\cctx {H_1} r}} \Rightarrow s$.
	By assumption we have \mbox{$E_s' \Rightarrow \MPRED {E_s}$} and 
	\mbox{$s \Rightarrow \MPRED{(\cctx {E_r} {\mu \beta. [\beta]{\cctx {H_1} r}})} 
	  \equiv \cctx {\MPRED {E_r}} {\cctx {\MPRED {H_1}} {\MPRED r}}$}. Therefore,
	by Lemma~\ref{lemma:muT_pred_cctx_subst}, we obtain that
	\mbox{$\cctx {E_s'} s \Rightarrow \cctx {\MPRED{E_s}} {\cctx {\MPRED{E_r}}
	  {\cctx {\MPRED{H_1}} {\MPRED r}}} \equiv \MPRED{(\cctx H r)}$}.
\item[(t6)] Let $\cctx H r \equiv \cctx E {\mu \beta. [\beta]\cctx {H_1} r}
		\Rightarrow \mu \beta. \subst c \beta {\beta E'}$ with $E \Rightarrow E'$ and moreover
	\mbox{$[\beta]{\cctx {H_1} r} \Rightarrow c$}. By Lemma
	\ref{lemma:muT_confluence_mu1_help}, we know that $c\equiv [\beta]s$ with 
	$\cctx {H_1} r \Rightarrow s$ and $\beta\notin\FCV s$. So we are in the
	situation
	\[
		\cctx H r \equiv \cctx E {\mu \beta. [\beta] \cctx {H_1} r} \Rightarrow \mu\beta. [\beta] {\cctx {E'} s}
	\]
	with $E\Rightarrow E'$ and $\cctx {H_1} r \Rightarrow s$.  Now $E' \Rightarrow \MPRED E$ 
	and $s \Rightarrow \MPRED{(\cctx {H_1} r)} \equiv \cctx{\MPRED {H_1}}{\MPRED r}$ by
	assumption. Therefore $\mu\beta. [\beta] {\cctx {E'} s} \Rightarrow \cctx{\MPRED E}{\cctx
	{\MPRED {H_1}} {\MPRED r}} \equiv \MPRED{(\cctx H r)}$ by
	Lemma~\ref{lemma:muT_pred_cctx_subst} and rule (t7).
\item[(t7)] Let $\cctx H r \equiv \mu\beta. [\beta]{\cctx {H_1} r} \Rightarrow s$
	with $\cctx {H_1} r \Rightarrow s$. By assumption we have 
	$s \Rightarrow \MPRED{(\cctx {H_1} r)} \equiv \cctx{\MPRED {H_1}}{\MPRED r}$. 
	Therefore $s \Rightarrow \cctx {\MPRED {H_1}} {\MPRED r} \equiv \MPRED{(\cctx H r)}$.
	\qedhere
\end{enumerate}
\end{proof}

\begin{lemma}
\label{lemma:muT_confluence_mu2}
Let $E$ be a context, $H$ an $\eta$-context, $\gamma$ a $\mu$-variable, and let $d$ be 
a command such that $d \equiv [\beta]s$ with $\beta \neq \gamma$ or 
$d \equiv [\gamma]s$ with $\gamma \in \FCV s$. If 
$\cctx H {\cctx E {\mu\gamma.d}} \Rightarrow t$ and for
every strict subexpression $e$ of $\cctx H {\cctx E {\mu \gamma.d}}$ we have
$e \Rightarrow e'$ implies $e' \Rightarrow \MPRED e$, then 
$t \Rightarrow \mu\alpha.\subst {\MPRED d} \gamma {\alpha \MPRED H \MPRED  E}$. 
\end{lemma}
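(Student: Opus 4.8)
The plan is to follow the proof of Lemma~\ref{lemma:muT_confluence_mu1} closely, doing a case analysis on the last rule of the derivation $\cctx H {\cctx E {\mu\gamma.d}} \Rightarrow t$. The last rule must be among (t4), (t6) and (t7): the head constructor of $\cctx H {\cctx E {\mu\gamma.d}}$ is either a $\mu$-abstraction (when the outermost ordinary context is empty) or an application, $\sucT$ or $\nrecT$ whose spine leads down to the outermost $\mu$, and rules (t5), (t8), (t9) are excluded because that spine ends in $\mu\gamma.d$ rather than in a $\lambda$, a $0$ or a $\sucT\natenc n$. It is cleanest to first split on whether $H \equiv \Box$. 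If $H \equiv \Box$ the term is $\cctx E {\mu\gamma.d}$ and only (t4) and (t6) can fire, since the hypothesis on $d$ excludes (t7). If $H \not\equiv \Box$ then $H \equiv \cctx {E_0} {\mu\delta.[\delta]H_1}$, so the term is $\cctx {E_0} {\mu\delta.[\delta]P}$ where $P \equiv \cctx {H_1} {\cctx E {\mu\gamma.d}}$; here the Barendregt convention together with $\delta \notin \FCV{H_1}$ gives $\delta \notin \FCV P$, so (t7) becomes available as well. Throughout, the recursion is carried by the given hypothesis that every strict subexpression $e$ satisfies $e \Rightarrow e'$ implies $e' \Rightarrow \MPRED e$, applied to the peeled-off pieces; by Lemma~\ref{lemma:muT_term_classification} these pieces again fall under the redex shape of case~\ref{item:muT_complete_development_mu2}, so their complete developments are once more $\mu$-abstractions of the form $\mu\gamma.\subst {\MPRED d} \gamma {\gamma \MPRED{H'} \MPRED{E'}}$.

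For the \textbf{(t7)} case ($E_0 \equiv \Box$), the reduct $t$ is simply a reduct of the strict subexpression $P$, so $t \Rightarrow \MPRED P \equiv \mu\gamma.\subst {\MPRED d} \gamma {\gamma \MPRED{H_1} \MPRED E}$; since $\MPRED H \equiv \MPRED{H_1}$ when $E_0 \equiv \Box$, this is the required target up to renaming the bound $\mu$-variable. For the \textbf{(t4)} case the term is $\cctx {E^s} Q$ with $E^s$ the outermost singular layer and $Q$ a strict subexpression again of shape~\ref{item:muT_complete_development_mu2}; writing the two premises as $E^s \Rightarrow G$ and $Q \Rightarrow Q'$, the subexpression hypothesis gives $G \Rightarrow \MPRED{E^s}$ and $Q' \Rightarrow \MPRED Q$, hence $t \equiv \cctx G {Q'} \Rightarrow \cctx {\MPRED{E^s}} {\MPRED Q}$ by Lemma~\ref{lemma:muT_pred_cctx_subst}. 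As $\MPRED Q$ is a $\mu$-abstraction, one further (t6) step absorbs $\MPRED{E^s}$ into its structural substitution, and collapsing the two resulting nested structural substitutions, together with the fact that $\MPRED{\cdot}$ commutes with context composition, yields the target.

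The \textbf{(t6)} case is the decisive one, and it is in fact simpler than its counterpart in Lemma~\ref{lemma:muT_confluence_mu1}. Here the reduction is $\cctx {E^*} {\mu\nu.c^*} \Rightarrow \mu\nu.\subst {c'} \nu {\nu G^*}$ with $c^* \Rightarrow c'$ and $E^* \Rightarrow G^*$, where $(E^*,\mu\nu.c^*)$ is $(E,\mu\gamma.d)$ when $H \equiv \Box$ and $(E_0,\mu\delta.[\delta]P)$ otherwise. Crucially, the command $c^*$ is itself a strict subexpression, so the subexpression hypothesis applies to it \emph{directly}, giving $c' \Rightarrow \MPRED{c^*}$; and because the outer $\mu\nu$ is \emph{retained} in the target, we never need to reconstruct a (t7)-redex, so no analogue of Lemma~\ref{lemma:muT_confluence_mu1_help} is required. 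When $H \not\equiv \Box$ the one extra ingredient is the identity $\MPRED{([\delta]P)} \equiv \subst {\MPRED d} \gamma {\delta \MPRED{H_1} \MPRED E}$, proven by a short induction on $H_1$ that unfolds the command clauses of Definition~\ref{definition:muT_complete_development} and uses $\MPRED{(\cctx {E_1} {\mu\epsilon.[\epsilon]H_1'})} \equiv \MPRED{E_1} \MPRED{H_1'}$. Combining $c' \Rightarrow \MPRED{c^*}$ and $G^* \Rightarrow \MPRED{E^*}$ through Lemma~\ref{lemma:muT_pred_strucsubst} then gives $\mu\nu.\subst {c'} \nu {\nu G^*} \Rightarrow \mu\nu.\subst {\MPRED{c^*}} \nu {\nu \MPRED{E^*}}$, from which the target follows.

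The main obstacle I anticipate is the bookkeeping of \emph{iterated} structural substitutions and their interaction with context composition. One repeatedly needs substitution-composition identities of the shape $\subst {(\subst t \gamma {\gamma F_2})} \gamma {\gamma F_1} \equiv \subst t \gamma {\gamma (F_1 F_2)}$ and, across distinct $\mu$-variables, $\subst {(\subst t \gamma {\delta F_2})} \delta {\delta F_1} \equiv \subst t \gamma {\delta (F_1 F_2)}$, valid whenever $\delta \notin \FCV t$. Discharging the side conditions on these, namely that the freshly chosen $\mu$-variables $\delta$ and $\epsilon$ do not occur free in the relevant subterms, is exactly where the Barendregt convention and the fact that parallel reduction creates no new free variables (Lemma~\ref{lemma:muT_pred_fv}) are used, precisely as in the proof of Lemma~\ref{lemma:muT_pred_strucsubst}. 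Getting these compositions to line up with the composite context $\MPRED H \MPRED E$ in the statement, uniformly across the $H \equiv \Box$ and $H \not\equiv \Box$ splits, is the one genuinely fiddly part; everything else is a routine re-assembly via Lemmas~\ref{lemma:muT_pred_cctx_subst} and~\ref{lemma:muT_pred_strucsubst}.
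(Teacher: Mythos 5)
Your handling of the (t6) and (t7) cases is correct, and it takes a genuinely different route from the paper: where the paper strengthens the lemma to three simultaneous statements (involving an arbitrary ambient context $E_2$ and command forms) proved by induction on the length of $H$, you apply the strict-subexpression hypothesis directly to the command $[\delta]P$ and then compute its complete development via the identity $\MPRED{([\delta]\cctx{H_1}{\cctx E {\mu\gamma.d}})} \equiv \subst {\MPRED d} \gamma {\delta \MPRED{H_1} \MPRED E}$, established by induction on $H_1$ using the composition identities and freshness conditions you correctly flag. That identity is exactly the content of the paper's auxiliary property (3) (``if $[\alpha]\cctx H {\cctx E {\mu\gamma.d}} \Rightarrow c$ then $c \Rightarrow \subst {\MPRED d} \gamma {\alpha \MPRED H \MPRED E}$''), so for (t6) and (t7) your repackaging is legitimate and arguably lighter.

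The (t4) case, however, contains a genuine gap --- and it is the case the paper's strengthening exists to handle. You derive $t \equiv \cctx G {Q'} \Rightarrow \cctx {\MPRED{E^s}} {\MPRED Q}$ and then take ``one further (t6) step'' to absorb $\MPRED{E^s}$. That produces \emph{two} chained parallel steps, whereas the lemma asserts a \emph{single} step $t \Rightarrow \mu\alpha.\subst {\MPRED d} \gamma {\alpha \MPRED H \MPRED E}$. Parallel reduction is not transitive, and the single-step form cannot be relaxed: Theorem~\ref{theorem:muT_maxpred_confluent} needs $t_2 \Rightarrow \MPRED{t_1}$ in one step to yield the diamond property of $\Rightarrow$ (Corollary~\ref{corollary:muT_pred_confluent}); with $\Rightarrow^*$ in the conclusion one only obtains a local-joinability statement, which does not give confluence on raw terms, since these are not strongly normalizing and Newman's lemma is unavailable. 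Nor can the two steps be fused by a generic absorption principle of the form ``if $u \Rightarrow \mu\gamma.c$ and $F \Rightarrow F'$ then $\cctx F u \Rightarrow \mu\gamma.\subst c \gamma {\gamma F'}$'': this is false in general; for instance $(\lm x.\mu\gamma.[\gamma]x)y \Rightarrow \mu\gamma.[\gamma]y$ by (t5), yet every one-step parallel reduct of $(\lm x.\mu\gamma.[\gamma]x)y\,z$ is again an application, because at the root only (t4) applies (the spine contains no $\mu$ and the head is not a $\lambda$), so no single step reaches a $\mu$-abstraction. In your specific situation a merged single step does exist --- every reduct of a term of shape~\ref{item:muT_complete_development_mu2} is again of the form $\cctx F {\mu\nu.c}$ --- but establishing the premises of that one (t6) derivation requires tracking how $F$ and $c$ arise through the derivation $Q \Rightarrow Q'$, i.e.\ an induction whose hypothesis already carries the surrounding context. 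This is precisely the paper's property (1): for every context $E_2$, $\cctx {E_2} t \Rightarrow \mu\alpha.\subst {\MPRED d} \gamma {\alpha \MPRED{E_2} \MPRED H \MPRED E}$, whose (t4) case closes by instantiating the ambient context so that the absorbed $\MPRED{E_s}$ is built into one derivation rather than post-composed as an extra step. Without this (or an equivalent) strengthening, your (t4) case does not establish the lemma.
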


\begin{proof}
We prove this result by simultaneously proving the following three properties 
by induction on the length of $H$.
\begin{enumerate}
\item If $\cctx H {\cctx E {\mu\gamma.d}} \Rightarrow t$, then $\cctx
  {E_2} t \Rightarrow \mu\alpha.\subst {\MPRED d} \gamma {\alpha {\MPRED {E_2}}
    \MPRED H \MPRED E}$. 
\item If $\cctx H {\cctx E {\mu\gamma.d}} \Rightarrow t$, then $[\alpha] \cctx
  {E_2} t \Rightarrow \subst {\MPRED d} \gamma {\alpha {\MPRED {E_2}}
    \MPRED H \MPRED E}$. 
\item If $[\alpha]\cctx H {\cctx E {\mu\gamma.d}} \Rightarrow c$, then
  $c \Rightarrow \subst {\MPRED d} \gamma {\alpha \MPRED H \MPRED E}$.
\end{enumerate}	

The base case is where $H \equiv \Box$. We only treat a number of
instances for the step case, so let $H \equiv \cctx {E_1} {\mu
  \beta.[\beta] H_1}$.
\begin{enumerate}
\item Let $\cctx {E_1} {\mu \beta.[\beta]\cctx {H_1} {\cctx E  {\mu\gamma.d}}} \Rightarrow t$. 
	Analyzing the possible steps we prove that for every context $E_2$ we have:
	\[
		\cctx {E_2} t 
			\Rightarrow 
		\mu\alpha.\subst {\MPRED d} \gamma {\alpha \MPRED {E_2}\MPRED {E_1} \MPRED {H_1} \MPRED E}.
	\]
	\begin{itemize}
	\item[(t4)] Let $E_1 \equiv E_s E_r$ where $E_s$ is a singular context
		and let $t \equiv \cctx {E_s'}{s}$ with $E_s \Rightarrow E_s'$ and 
		$\cctx {E_r}{\mu \beta.[\beta] \cctx {H_1} {\cctx E {\mu \gamma.d}}} \Rightarrow s$. 
		We can apply the induction  hypothesis for property (1) to $\cctx {E_r}{\mu \beta.[\beta] H_1}$. 
		Now we find that for every context $E_2$ we have:
		\[
			\cctx {E_2}{\cctx {E_s'} s} 
				\Rightarrow 
			\mu\alpha.\subst {\MPRED d} \gamma {\alpha \MPRED {E_2}\MPRED {E_s}\MPRED {E_r} \MPRED {H_1} \MPRED E}.
		\]
	\item[(t6)] Let $\cctx {E_1} {\mu \beta.[\beta] \cctx {H_1}{\cctx{E}{\mu
		  \gamma.d}}} \Rightarrow \mu \beta. \subst c \beta {\beta E_1'}$
		with $E_1 \Rightarrow E_1'$ and $[\beta] \cctx {H_1}{\cctx E{\mu \gamma.d}} \Rightarrow c$. 
		The induction hypothesis for property (3) yields 
		\mbox{$c \Rightarrow \subst{\MPRED d}\gamma {\beta \MPRED {H_1}\MPRED E}$}. 
		Using the substitution Lemma~\ref{lemma:muT_pred_subst} and the rule (t6), 
		we conclude that for any context $E_2$ we have:
		\[
			\cctx {E_2}{\mu \beta. \subst c \beta {\beta E_1'}} 
				\Rightarrow 
			\mu \beta.\subst{\MPRED d}\gamma {\beta \MPRED {E_2} \MPRED {E_1} \MPRED {H_1}\MPRED E}.
		\]
	\item[(t7)] Let $E_1 \equiv \Box$ and 
		$\mu \beta.[\beta] \cctx {H_1}{\cctx E {\mu \gamma.d}} \Rightarrow s$ with 
		$\cctx {H_1}{\mu \gamma.d} \Rightarrow s$. The induction hypothesis for 
		property (1) applied to $\cctx {H_1}{\cctx E {\mu \gamma.d}}$ tells us that for any 
		context $E_2$ we have:
		\[
			\cctx {E_2} s 
				\Rightarrow 
			\mu \gamma . \subst{\MPRED d}\gamma {\gamma \MPRED{E_2}\MPRED{H_1}\MPRED E}.
		\]
	\end{itemize}
\item A similar argument to the one used for (1) also proves (2). 
\item Let $[\alpha]\cctx {E_1}{\mu \beta.[\beta] \cctx {H_1} {\cctx E {\mu \gamma .d}}} \Rightarrow c$. 
	Analyzing the possible steps we prove that we have:
	\[
		c\Rightarrow \subst {\MPRED d} \gamma {\alpha \MPRED {E_1}\MPRED {H_1} \MPRED E}.
	\]
	\begin{itemize}
	\item[(c1)] Let $[\alpha]\cctx {E_1}{\mu \beta.[\beta] \cctx {H_1}
		  {\cctx E {\mu \gamma .d}}} \Rightarrow [\alpha]s$ with 
		\mbox{$\cctx {E_1}{\mu \beta.[\beta] \cctx {H_1} {\cctx E {\mu \gamma .d}}} \Rightarrow s$}.
		To close this case, we have to make a finer case analysis of the
		possible steps that have led to $s$. This is similar to what we have
		done for property (1) above. To close the case we also need the induction
		hypothesis for property (1) and property (2).

	\item[(c2)] Let $[\alpha]\cctx {E_1}{\mu \beta.[\beta] \cctx {H_1}
		  {\cctx E {\mu \gamma .d}}} \Rightarrow \subst c \beta {\alpha E_1'}$
		with $E_1 \Rightarrow E_1'$ and $[\beta] \cctx {H_1} {\cctx E {\mu \gamma .d}} \Rightarrow c$. 
		We apply the induction hypothesis for property (3) to conclude that 
		$c\Rightarrow \subst {\MPRED d} \gamma {\beta \MPRED {H_1} \MPRED E}$. 
		Therefore we have \mbox{$\subst c \beta {\alpha E_1'} \Rightarrow 
		  \subst {\MPRED d} \gamma {\alpha \MPRED{E_1} \MPRED {H_1} \MPRED E}$}
		by the substitution Lemma~\ref{lemma:muT_pred_subst} and we are done. \qedhere
	\end{itemize}
\end{enumerate}
\end{proof}

\begin{theorem}
\label{theorem:muT_maxpred_confluent}
If $t_1 \Rightarrow t_2$, then $t_2 \Rightarrow \MPRED{t_1}$.
\end{theorem}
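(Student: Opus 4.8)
The plan is to prove this triangle property by mutual structural induction on $t_1$, simultaneously stated for commands and contexts in accordance with the convention following Definition~\ref{definition:muT_pred}. I would organise the induction around the exhaustive, non-overlapping case distinction of Lemma~\ref{lemma:muT_term_classification}, which is precisely the distinction by which $\MPRED{t_1}$ is defined in Definition~\ref{definition:muT_complete_development}; this way, in every case I already know the exact shape of the target $\MPRED{t_1}$. The induction hypothesis says that for every \emph{strict} subexpression $e$ of $t_1$, $e \Rightarrow e'$ implies $e' \Rightarrow \MPRED e$ --- which is exactly the auxiliary hypothesis demanded by Lemmas~\ref{lemma:muT_confluence_mu1} and~\ref{lemma:muT_confluence_mu2}.

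For the two genuinely hard cases, $t_1 \equiv \cctx{H}{r}$ (case~\ref{item:muT_complete_development_mu1}) and $t_1 \equiv \cctx{H}{\cctx{E}{\mu\beta.c}}$ (case~\ref{item:muT_complete_development_mu2}), there is essentially nothing left to do: these are verbatim the conclusions of Lemmas~\ref{lemma:muT_confluence_mu1} and~\ref{lemma:muT_confluence_mu2}, and their second hypothesis is supplied directly by the induction hypothesis. The crux of the whole argument --- letting a $\mu\beta$ jump past an entire $\eta$-context to its matching $[\alpha]$ in a single development, which is exactly what the simple-minded parallel reduction could not do --- has thus already been isolated into those lemmas, so here I just invoke them.

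The remaining cases are routine and proceed by inspecting which $\Rightarrow$-rules can fire at the top. For a variable or numeral only (t1)/(t2) apply and reflexivity (Lemma~\ref{lemma:muT_pred_refl}) closes the case; for $\lm x.s$ only (t3) applies and the induction hypothesis gives $s' \Rightarrow \MPRED s$. For a $\beta$-redex $(\lm x.s)r$ the two possibilities are the contracting rule (t5), closed by the induction hypothesis and the substitution Lemma~\ref{lemma:muT_pred_subst}, and the congruence rule (t4) giving $(\lm x.s')r'$, which I then fire (t5) on to reach $\subst{\MPRED s}{x}{\MPRED r}$. The recursor redexes $\nrecT\ r\ s\ \natenc 0$ and $\nrecT\ r\ s\ (\sucT\natenc n)$ are handled identically using (t8)/(t9) together with (t4) and the induction hypothesis, and the command cases are symmetric: $[\alpha]t$ with $t$ not a $\mu$-context admits only (c1), while $[\alpha]\cctx{E}{\mu\beta.c}$ admits (c1) or (c2), both closed by the induction hypothesis (and, for (c2), Lemma~\ref{lemma:muT_pred_subst}).

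The one point requiring care is the \emph{other} cases $sr$, $\nrecT\ r\ s\ u$ and $\sucT u$, where the side-conditions of Lemma~\ref{lemma:muT_term_classification} guarantee that no top-level redex is present, so the only applicable rule is the non-contracting congruence (t4) through a singular context. The reduct is then $s'r'$ (etc.), and I close the case by reducing $s'r' \Rightarrow \MPRED s\,\MPRED r \equiv \MPRED{(sr)}$ via (t4) again. The subtle part is that reduction inside the head may have \emph{exposed} a fresh redex (e.g. $s$ an application that parallel-reduces to a $\lm$); but since the complete development deliberately leaves freshly created redexes uncontracted, using the non-contracting (t4) rather than a contracting rule is precisely what makes $t_2 \Rightarrow \MPRED{t_1}$ go through. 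I expect the only real obstacle in the theorem itself to be this bookkeeping --- confirming that the classification makes the applicable top-level rules disjoint in each case so that the analysis is both exhaustive and unambiguous --- since all of the heavy structural-substitution reasoning already lives in the two auxiliary lemmas.
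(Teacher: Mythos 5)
Your proposal is correct and follows essentially the same route as the paper: mutual structural induction organised by the classification of Lemma~\ref{lemma:muT_term_classification}, with the two hard cases $\cctx{H}{r}$ and $\cctx{H}{\cctx{E}{\mu\beta.c}}$ discharged by Lemmas~\ref{lemma:muT_confluence_mu1} and~\ref{lemma:muT_confluence_mu2} (whose auxiliary hypothesis is exactly the induction hypothesis on strict subexpressions), and the remaining cases closed by rule inspection, reflexivity, and the substitution Lemma~\ref{lemma:muT_pred_subst}. Your handling of the ``other'' cases via the non-contracting (t4) rule likewise matches the paper's treatment.
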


\begin{proof}
We prove this result by mutual induction on the structure of
terms, commands and contexts.
We use the case distinction made in Lemma~\ref{lemma:muT_term_classification}. 
We consider some interesting cases.
\begin{enumerate}
\item Let $t_1 \equiv x$. In this case just reduction (t1) is possible,
	so $x \Rightarrow \MPRED x \equiv x$.
\item Let $t_1 \equiv (\lm x.s_1)r_1$. In this case the following reductions are possible.
	\begin{enumerate}
	\item[(t4)] $(\lm x.s_1)r_1 \Rightarrow (\lm x.s_2)r_2$ with
		$s_1 \Rightarrow s_2$ and $r_1 \Rightarrow r_2$. Now we have 
		\mbox{$s_2 \Rightarrow \MPRED{s_1}$} and $r_2 \Rightarrow \MPRED{r_1}$ by the 
		induction hypothesis. Therefore we have
		\mbox{$(\lm x.s_2)r_2 \Rightarrow \MPRED{((\lm x.s_1)r_1)} \equiv \subst {\MPRED{s_1}} x {\MPRED{r_1}}$}.
	\item[(t5)] $(\lm x.s_1)r_1 \Rightarrow \subst {s_2} x {r_2}$ with
		$s_1 \Rightarrow s_2$ and $r_1 \Rightarrow r_2$. Now we have 
		$s_2 \Rightarrow \MPRED{s_1}$ and $r_2 \Rightarrow \MPRED{r_1}$ by the 
		induction hypothesis. Therefore
		\mbox{$\subst {s_2} x {r_2} \Rightarrow \MPRED{((\lm x.s_1)r_1)} \equiv \subst {\MPRED{s_1}} x {\MPRED{r_1}}$}
		by Lemma~\ref{lemma:muT_pred_subst}.
	\end{enumerate}
\item Let $t_1 \equiv \cctx {H_1} {r_1}$ with $H_1 \neq \Box$ 
	and $r_1 \equiv \cctx E {\lambda x.s}$, $r_1 \equiv \cctx E 0$ or 
	$r_1 \equiv \cctx E x$. Suppose $t_1 \Rightarrow t_2$.
        Then $t_2 \Rightarrow \cctx {\MPRED{H_1}}
        {\MPRED{r_1}} \equiv \MPRED{t_1}$ by Lemma~\ref{lemma:muT_confluence_mu1}.
\item Let $t_1 \equiv \cctx {H_1} {\cctx {E_1} {\mu\beta.c_1}}$ with
	$c_1 \equiv [\gamma]s$ and $\gamma \neq \beta$, or $c_1 \equiv [\beta]s$ and 
	$\beta \in \FCV s$. 
	Suppose $t_1 \Rightarrow t_2$, then $t_2 \Rightarrow \mu
        \alpha.\subst {\MPRED{c_1}} \beta {\alpha \MPRED{H_1}
          \MPRED{E_1}} \equiv \MPRED{t_1}$ by Lemma
       ~\ref{lemma:muT_confluence_mu2}.
\item Let $t_1 \equiv s_1 r_1$ with $s_1 \not\equiv \cctx E {\mu\alpha.c}$ and
	$s_1 \not\equiv \lambda x.s$. In this case just reduction (t4) is possible, so 
	$s_1 r_1 \Rightarrow s_2 r_2$ with \mbox{$s_1 \Rightarrow s_2$} and $r_1 \Rightarrow r_2$.
	Now $s_1 \Rightarrow \MPRED{s_2}$ and $r_2 \Rightarrow \MPRED{r_1}$ 
	by the induction hypothesis, so $s_2 r_2 \Rightarrow \MPRED{(s_1 r_1)} \equiv \MPRED{s_1}\MPRED{r_1}$. \qedhere
\end{enumerate}
\end{proof}

\begin{corollary}
\index{Confluence!for \lambdamuT{}}
\label{corollary:muT_pred_confluent}
Parallel reduction satisfies the diamond property. That is, if $t_1 \Rightarrow t_2$ and 
\mbox{$t_1 \Rightarrow t_3$}, then there exists a term $t_4$ such that $t_2 \Rightarrow t_4$
and $t_3 \Rightarrow t_4$.
\end{corollary}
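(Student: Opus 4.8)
The plan is to derive this diamond property as an immediate consequence of Theorem~\ref{theorem:muT_maxpred_confluent}, which is really the stronger \emph{triangle property}: every parallel reduct of a term $t_1$ itself parallel-reduces to the single canonical term $\MPRED{t_1}$. The key observation is that the complete development $\MPRED{t_1}$ depends only on $t_1$, and not on the particular reduct chosen, so it serves as the common lower vertex of the diamond for \emph{every} pair of parallel reducts of $t_1$ simultaneously. This is exactly Takahashi's~\cite{takahashi1995} method of deducing the diamond property from a canonical ``maximal'' development.

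Concretely, given $t_1 \Rightarrow t_2$ and $t_1 \Rightarrow t_3$, I would set $t_4 \defined \MPRED{t_1}$. Applying Theorem~\ref{theorem:muT_maxpred_confluent} to the first hypothesis yields $t_2 \Rightarrow \MPRED{t_1} \equiv t_4$, and applying it to the second yields $t_3 \Rightarrow \MPRED{t_1} \equiv t_4$. This closes the diamond with no further work.

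There is essentially no obstacle at this stage: all the difficulty has already been absorbed into the proof of Theorem~\ref{theorem:muT_maxpred_confluent}, together with its supporting Lemmas~\ref{lemma:muT_confluence_mu1} and~\ref{lemma:muT_confluence_mu2}, where the delicate interaction of the $\mu\eta$-rule with structural substitution over $\eta$-contexts had to be handled. The passage from the triangle property to the diamond property is by contrast purely formal, which is precisely why it is stated here as a corollary.

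Finally, to obtain confluence of $\to$ itself I would note that the diamond property makes $\Rightarrow$ confluent, and hence its transitive closure $\Rightarrow^*$ is confluent as well. By Lemma~\ref{lemma:muT_mpred_equiv_multi} we have $\to\,\subseteq\,\Rightarrow\,\subseteq\,\tto$, and since $\Rightarrow$ is reflexive (Lemma~\ref{lemma:muT_pred_refl}), taking closures gives $\Rightarrow^*\,=\,\tto$. Confluence of $\to$ then follows directly.
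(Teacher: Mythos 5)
Your proof is correct and coincides with the paper's: both take $t_4 \defined \MPRED{t_1}$ and apply Theorem~\ref{theorem:muT_maxpred_confluent} to each hypothesis, so the complete development serves as the common reduct. The closing remarks on deducing confluence of $\to$ go beyond the corollary itself but match the paper's subsequent theorem.
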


\begin{proof}
Let $t_4 = \MPRED{t_1}$. Now we have $t_2 \Rightarrow \MPRED{t_1}$ and
$t_3 \Rightarrow \MPRED{t_1}$ by Theorem~\ref{theorem:muT_maxpred_confluent}.
\end{proof}

\begin{theorem}
Reduction on \lambdamuT{} is confluent. That is, if $t_1 \tto t_2$ and 
$t_1 \tto t_3$, then there exists a term $t_4$ such that $t_2 \tto t_4$
and $t_3 \tto t_4$.
\end{theorem}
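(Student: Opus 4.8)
The plan is to run the standard Tait--Martin-L\"of argument, now that all the genuine work has been packaged into the diamond property of parallel reduction. The key observation is that confluence of $\tto$ reduces to confluence of $\Rightarrow^*$, because these two relations coincide.

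First I would establish $\tto \;=\; \Rightarrow^*$. By Lemma~\ref{lemma:muT_mpred_equiv_multi} we have $\to\,\subseteq\,\Rightarrow$ and $\Rightarrow\,\subseteq\,\tto$. Taking reflexive/transitive closures of the first inclusion yields $\tto\,\subseteq\,\Rightarrow^*$ (recall that $\tto$ is by definition the reflexive/transitive closure of $\to$, and that $\Rightarrow^*$ is already reflexive since $\Rightarrow$ is, by Lemma~\ref{lemma:muT_pred_refl}); closing the second inclusion yields $\Rightarrow^*\,\subseteq\,\tto$. Hence the two relations are equal. Next I would lift the single-step diamond property of $\Rightarrow$, namely Corollary~\ref{corollary:muT_pred_confluent}, to its transitive closure $\Rightarrow^*$. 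This is the familiar diagram-tiling argument: given $t_1 \Rightarrow^* t_2$ and $t_1 \Rightarrow^* t_3$, one lays the two reduction sequences along the two sides of a rectangular grid and fills each unit cell using the diamond property, so that the completed grid exhibits a common reduct $t_4$ with $t_2 \Rightarrow^* t_4$ and $t_3 \Rightarrow^* t_4$. Formally this is a double induction on the lengths of the two sequences (equivalently, a strip lemma followed by a second induction). Since $\tto \,=\, \Rightarrow^*$, confluence of $\Rightarrow^*$ is precisely the confluence of $\tto$ that we must prove.

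There is no real obstacle remaining at this stage: the entire difficulty of the confluence argument resides in the analysis of complete developments culminating in Theorem~\ref{theorem:muT_maxpred_confluent} and Corollary~\ref{corollary:muT_pred_confluent}. The only step calling for any care is the tiling argument, and even there the single point one must verify is that the diamond property is genuinely a one-step-to-one-step property---which holds because $\Rightarrow$ is reflexive, so a degenerate side of a diamond is absorbed by $t \Rightarrow t$. Everything else is a routine manipulation of reflexive/transitive closures.
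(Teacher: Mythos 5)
Your proposal is correct and is essentially the paper's own proof: the paper likewise concludes confluence from Corollary~\ref{corollary:muT_pred_confluent} together with the equivalence $\tto\;=\;\Rightarrow^*$ furnished by Lemma~\ref{lemma:muT_mpred_equiv_multi}, leaving the standard tiling of diamonds implicit. Your only additions—spelling out the closure manipulations and noting that reflexivity of $\Rightarrow$ (Lemma~\ref{lemma:muT_pred_refl}) makes the transitive closure behave as a reflexive/transitive one—are exactly the routine details the paper elides.
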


\begin{proof}
By Corollary~\ref{corollary:muT_pred_confluent} and the fact that $t
\Rightarrow^* t'$ if and only if $t \tto t'$, which follows
immediately from Lemma~\ref{lemma:muT_mpred_equiv_multi}.
\end{proof}

\section{Strong normalization of \texorpdfstring{\lambdamuT}{lambda-mu-T}}
\label{section:muT_sn}
\index{Strong normalization!for \lambdamuT}

In this section we prove that the \lambdamucal{T} is strongly normalizing. 
Unfortunately we cannot use the CPS-translation as defined in 
Section~\ref{section:muT_cps} to prove this result. Our CPS-translation 
merely preserves typing and convertibility whereas it does not preserve 
reduction. Defining a CPS-translation that is strictly reduction preserving 
(each reduction step corresponds to one or more reduction steps under the 
translation) is already non-trivial for the \lambdamucal{}, as Ikeda and 
Nakazawa~\cite{ikeda2006} have shown. We failed to extend their approach to 
\lambdamuT{} due to difficulties translating the $\nrecT$ construct.

Instead we prove strong normalization by defining two reductions $\to_A$ and 
$\to_B$ such that $\to\, =\, \to_{AB}\, \defined\, \to_A \cup \to_B$. In 
Section~\ref{section:muT_snA} we prove, using the reducibility method, 
that $\to_A$ is strongly normalizing. In Section~\ref{section:muT_snB}
we prove that $\to_B$ is strongly normalizing and that both reductions 
commute in a way that we can obtain strong normalization for $\to_{AB}$.

To prove strong normalization of the second order call-by-value \lambdamucal{},
Nakazawa~\cite{nakazawa2003} characterizes reductions whose strictness is 
preserved by a modified CPS-translation. Nakazawa also uses a
postponement argument, but the proof is very different from ours.

\begin{definition}
Let $\to_A$ denote the compatible closure of the reduction rules $\to_\beta$, 
$\to_{\mu\sucT}$, $\to_{\mu R}$, $\to_0$, $\to_\sucT$ and $\to_{\mu\natTypeT}$. 
Let $\to_B$ denote the compatible closure of the reduction rules $\to_{\mu\eta}$ 
and $\to_{\mu i}$.
\end{definition}

\begin{definition}
\label{definition:muT_sn}
Given a notion of reduction $\to_X$ (e.g. $\to_A$ or $\to_B$), the set of 
 \emph{strongly normalizing terms}, notation $\SNmuT[X]$, is inductively defined
as follows.
\begin{enumerate}
\item If for all terms $t'$ with $t \to_X t'$ we have $t' \in \SNmuT[X]$, then
	 $t \in \SNmuT[X]$.
\end{enumerate}
\end{definition}

\begin{fact}
\label{fact:muT_nf_sn}
If $t$ is in $\to_X$-normal form, then $t \in \SNmuT[X]$.
\end{fact}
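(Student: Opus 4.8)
The plan is to observe that this is an immediate consequence of the single clause in the inductive Definition~\ref{definition:muT_sn} of $\SNmuT[X]$, read with attention to its vacuous case. Concretely, that clause says: if every $\to_X$-reduct of $t$ already lies in $\SNmuT[X]$, then $t \in \SNmuT[X]$. I want to apply this clause to the given $t$.

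First I would unfold the hypothesis that $t$ is in $\to_X$-normal form: by definition this means there is \emph{no} term $t'$ with $t \to_X t'$. Consequently the universally quantified premise ``for all terms $t'$ with $t \to_X t'$ we have $t' \in \SNmuT[X]$'' ranges over an empty collection of reducts and is therefore vacuously true. Applying the clause of Definition~\ref{definition:muT_sn} to this vacuously satisfied premise yields $t \in \SNmuT[X]$ directly.

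No induction and no case analysis on the shape of $t$ is needed here, and the argument is uniform in the choice of reduction $\to_X$ (so it applies equally to $\to_A$, $\to_B$, and $\to_{AB}$). The only point requiring care, and the closest thing to an obstacle, is the purely logical one of recognising that the inductive definition's sole introduction rule already covers normal forms through its vacuous instance; once that is spelled out, there is nothing further to compute.
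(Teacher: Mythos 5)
Your proof is correct and is precisely the intended justification: the paper states this as an unproven \emph{Fact}, and the vacuous instance of the single clause of Definition~\ref{definition:muT_sn} is exactly why it is immediate. Your observation that the argument is uniform in $\to_X$ matches the paper's use of the fact for $\to_A$, $\to_B$, and $\to_{AB}$ alike.
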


\begin{fact}
\label{fact:muT_steps_sn}
If $t \in \SNmuT[X]$ and $t \tto_X t'$, then $t' \in \SNmuT[X]$.
\end{fact}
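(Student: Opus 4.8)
The plan is to reduce the multi-step statement to its single-step counterpart and then iterate. First I would establish the one-step version: if $t \in \SNmuT[X]$ and $t \to_X t'$, then $t' \in \SNmuT[X]$. This is immediate by inversion on Definition~\ref{definition:muT_sn}: the only way to have $t \in \SNmuT[X]$ is via its single clause, which asserts precisely that every one-step $\to_X$-reduct of $t$ lies in $\SNmuT[X]$. Instantiating this with the given reduct $t'$ yields $t' \in \SNmuT[X]$ directly.

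With the one-step closure in hand, I would prove the stated fact by induction on the number $n$ of $\to_X$-steps witnessing $t \tto_X t'$, that is, on the length of a reduction sequence $t \equiv t_0 \to_X t_1 \to_X \cdots \to_X t_n \equiv t'$. In the base case $n = 0$ we have $t' \equiv t$, so $t' \in \SNmuT[X]$ holds by assumption. In the step case we factor the sequence as $t \to_X t_1 \tto_X t'$; the one-step version gives $t_1 \in \SNmuT[X]$, and the induction hypothesis applied to the shorter sequence $t_1 \tto_X t'$ then gives $t' \in \SNmuT[X]$.

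This argument is entirely routine, and I do not expect any real obstacle; the only point requiring a little care is that it rests on the accessibility-style formulation of $\SNmuT[X]$ in Definition~\ref{definition:muT_sn} (``all reducts are strongly normalizing'') rather than on any syntactic analysis of terms. In particular, the proof uses none of the specific reduction rules of \lambdamuT{} and goes through uniformly for an arbitrary notion of reduction $\to_X$, which is exactly why the fact is stated at this level of generality: it can then be applied to $\to_A$, $\to_B$, and $\to_{AB}$ alike in the sequel.
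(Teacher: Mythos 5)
Your proof is correct and is exactly the argument the paper leaves implicit (the statement is given as a \emph{Fact} without proof): inversion on the accessibility-style Definition~\ref{definition:muT_sn} gives one-step closure, and induction on the length of the reduction sequence lifts it to $\tto_X$. Nothing more is needed, and your observation that the argument is uniform in $\to_X$ matches how the fact is later applied to $\to_A$, $\to_B$, and $\to_{AB}$.
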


\subsection{Strong normalization of \texorpdfstring{\(\to_{A}\)}{(A)}}
\label{section:muT_snA}
In this subsection we prove that $\to_A$-reduction is strongly normalizing 
using the reducibility method. Our proof is inspired by Parigot's 
proof of strong normalization for the \lambdamucal{}~\cite{parigot1997}.

Since we only consider
$\to_A$-reduction we will omit subscripts from all notations. Moreover, 
for conciseness of notation we specify most of the forthcoming 
lemmas only for terms and not for commands. 

\index{Reducibility method}
The reducibility method is originally due to Tait~\cite{tait1967}, who 
proposed the following interpretation for $\to$-types.
\begin{flalign*}
\muTintp{\alpha} &\defined \SNmuT \\
\muTintp{\sigma \to \tau} &\defined \{ t \separator \forall s \in \muTintp\sigma\ .\ ts \in \muTintp\tau \}
\end{flalign*}
This interpretation makes it possible to prove strong normalization of $\lambda\Arrow$ 
in a very short and elegant way~\cite[for example]{geuvers2008}. Instead of 
proving that a term $t$ of type $\rho$ is strongly normalizing one proves a slight generalization, 
namely $t \in \muTintp\rho$. This method also extends to 
\lambdaT{}~\cite[for example]{girard1989}.

Unfortunately, for \lambdamu{} it becomes more complicated. If a term of the shape 
$\lambda x.r$ consumes an argument, the $\lambda$-abstraction vanishes. However, if a term 
of the shape $\mu\alpha.c$ consumes an argument the $\mu$-abstraction remains, hence it is
not possible to predict how many arguments $\mu\alpha.c$ will consume. To repair this issue
Parigot has proposed a way to switch between a term that is a member of a certain 
 \emph{reducibility candidate} and one that is strongly normalizing when applied to 
a certain set of sequences of arguments. 

In \lambdamuT{} a term of the shape $\mu\alpha.c$ is not only able to 
consume arguments on its right hand side, but is also able to consume an unknown
number of $\sucT$'s and $\nrecT$'s. Therefore we generalize Parigot's idea to 
contexts so that we are able to switch between a term that is a member of a certain 
reducibility candidate and one that is strongly normalizing in a certain set of contexts.

Before going into the details of the proof we state some facts.

\begin{fact}
\label{fact:muT_redA_bound}
If $t \in \SNmuT$, then we have that the length of each $\to_A$-reduction 
sequence starting at $t$ is bounded. We use the notation $\MRP t$ to denote this bound.
\end{fact}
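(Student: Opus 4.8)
The plan is to prove the statement by well-founded induction on the inductive definition of $\SNmuT$ (Definition~\ref{definition:muT_sn}), using as the one essential extra ingredient that $\to_A$ is \emph{finitely branching}. Recall that the clause defining $\SNmuT$ furnishes exactly an induction principle ``on $t \in \SNmuT$'': to establish a property of every $t \in \SNmuT$ it suffices to prove it for $t$ under the assumption that it holds for every $t'$ with $t \to_A t'$. So the overall shape of the argument is clear; the real work lies in making the numerical bound well defined.

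First I would establish finite branching: for every term $t$ the set $\{\, t' \mid t \to_A t'\,\}$ is finite. This holds because $t$ has only finitely many subterm occurrences, each occurrence matches at most one of the left-hand sides of the $\to_A$-rules ($\to_\beta$, $\to_{\mu\sucT}$, $\to_{\mu R}$, $\to_0$, $\to_\sucT$, $\to_{\mu\natTypeT}$), and each redex has a single contractum, substitution and structural substitution being deterministic operations under the Barendregt convention. Hence $t$ has at most as many one-step $\to_A$-reducts as it has redex occurrences, which is finite. The rules $\to_{\mu\eta}$ and $\to_{\mu i}$ belong to $\to_B$ and play no role here.

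With finite branching in hand I would define $\MRP t$ along the same induction and simultaneously check that it bounds reduction length. For $t \in \SNmuT$, every reduct $t'$ with $t \to_A t'$ again lies in $\SNmuT$ by Definition~\ref{definition:muT_sn}, so by the induction hypothesis $\MRP{t'} \in \nat$ is already defined; I then set $\MRP t \defined \max\{\, \MRP{t'} + 1 \mid t \to_A t'\,\}$, with the convention that the maximum of the empty set is $0$ (the normal-form case, consistent with Fact~\ref{fact:muT_nf_sn}). Finiteness of the reduct set is precisely what makes this maximum an actual natural number rather than a supremum that could escape to infinity. That $\MRP t$ bounds the length of every $\to_A$-reduction sequence from $t$ is then immediate: any nonempty sequence $t \to_A t' \to_A \cdots$ has a tail from $t'$ of length at most $\MRP{t'} \le \MRP t - 1$ by the induction hypothesis, so the whole sequence has length at most $\MRP t$.

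The main obstacle is exactly this finite-branching step, and it is worth isolating because it is the classic subtlety: strong normalization in the ``no infinite reduction sequence'' sense does \emph{not} by itself yield a uniform numerical bound for a relation that branches infinitely. An equivalent packaging of the same idea, which I would mention as an alternative, is to apply K\"onig's Lemma to the tree of finite $\to_A$-reduction sequences issuing from $t$: this tree is finitely branching by the step above and has no infinite path since $t \in \SNmuT$, hence is finite, and $\MRP t$ may be taken to be its height. The only further points demanding care are that reducts are counted up to $\alpha$-conversion and that each contraction is genuinely a function of its redex position, both of which are guaranteed by our conventions on (structural) substitution.
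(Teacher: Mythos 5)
Your proof is correct and takes essentially the same route as the paper, whose entire proof is the one-line observation that $\to_A$-reduction is finitely branching (i.e.\ K\"onig's Lemma applied to the reduction tree). You have simply spelled out the details the paper leaves implicit---verifying finite branching from the finitely many redex occurrences and defining $\MRP t$ by recursion along the inductive characterization of $\SNmuT$---which is exactly the intended argument.
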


\begin{proof}
The result holds because $\to_A$-reduction is finitely branching.
\end{proof}

\begin{fact}
If $t \in \SNmuT$ and $t \to t'$, then $\MRP {t'} < \MRP t$.
\end{fact}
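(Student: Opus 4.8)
The plan is to read $\MRP t$ as the length of a \emph{longest} $\to$-reduction sequence out of $t$ (recall that in this subsection $\to$ abbreviates $\to_A$). This is a well-defined natural number whenever $t \in \SNmuT$: since $\to$ is finitely branching and $t$ is strongly normalizing, König's lemma makes the tree of reduction sequences out of $t$ finite, so the bound $\MRP t$ of Fact~\ref{fact:muT_redA_bound} is realized by an actual reduction sequence. The whole argument then rests on one elementary observation: the step $t \to t'$ turns any reduction sequence out of $t'$ into a strictly longer one out of $t$, simply by prepending $t \to t'$.

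First I would record that $\MRP{t'}$ is meaningful at all: from $t \in \SNmuT$ and $t \to t'$ we obtain $t' \in \SNmuT$ by Fact~\ref{fact:muT_steps_sn} (instantiated with the single step $t \to t'$), so Fact~\ref{fact:muT_redA_bound} applies to $t'$ and $\MRP{t'}$ is defined. Next I would pick a longest reduction sequence out of $t'$,
\[
	t' \to t_1 \to \cdots \to t_{\MRP{t'}},
\]
which exists by the realization remark above. Prepending the given step produces the sequence
\[
	t \to t' \to t_1 \to \cdots \to t_{\MRP{t'}}
\]
out of $t$, of length $\MRP{t'} + 1$. Since $\MRP t$ bounds the length of every reduction sequence out of $t$, this yields $\MRP t \ge \MRP{t'} + 1$, that is, $\MRP{t'} < \MRP t$, as required.

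There is essentially no hard step here; the only points needing care are conceptual rather than technical. One is making sure $t'$ is known to be strongly normalizing before one speaks of $\MRP{t'}$, which Fact~\ref{fact:muT_steps_sn} handles. The other is the reading of $\MRP$: if one prefers to avoid invoking realization of the bound, the inequality can be obtained uniformly by noting that every reduction sequence out of $t'$ of length $n$ prepends to one out of $t$ of length $n+1 \le \MRP t$, whence $n \le \MRP t - 1$ for all such $n$, and therefore $\MRP{t'} \le \MRP t - 1 < \MRP t$ (using $\MRP t \ge 1$, since the step $t \to t'$ already witnesses a nonempty reduction out of $t$). Either formulation gives the strict decrease, and this is exactly the well-foundedness ingredient needed to license induction on $\MRP t$ in the reducibility argument that follows.
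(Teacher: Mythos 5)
Your proposal is correct and coincides with the justification the paper leaves implicit (the fact is stated without proof there): strictness follows by prepending $t \to t'$ to a reduction sequence out of $t'$, with Fact~\ref{fact:muT_steps_sn} guaranteeing $t' \in \SNmuT$ so that $\MRP{t'}$ is defined, and finite branching (Fact~\ref{fact:muT_redA_bound}, via K\"onig's lemma) making $\MRP t$ a genuine finite maximum. Your second, realization-free formulation is equally fine and perhaps slightly cleaner, but either version is exactly the standard argument intended.
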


\begin{fact}
\label{fact:muT_redA_subst}
\label{fact:muT_redA_strucsubst}
\(\to_A\)-reduction is preserved under (structural) substitution. 
\begin{enumerate}
\item If $t \to t'$, then $\subst t x s \to \subst {t'} x s$.
\item If $s \to s'$, then $\subst t x s \tto \subst t x {s'}$.
\item If $t \to t'$, then $\cctx E t \to \cctx E {t'}$ and 
	$\subst t \alpha {\beta E} \to \subst {t'} \alpha {\beta E}$.
\item If $E \to E'$, then $\cctx E t \to \cctx {E'} t$ and 
	$\subst t \alpha {\beta E} \tto \subst t \alpha {\beta E'}$.
\end{enumerate}
\end{fact}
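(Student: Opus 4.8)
The plan is to prove the four claims by the usual combination of induction on the reduction derivation (for the single-step statements) and induction on term structure (for the multi-step statements), with the first halves of (3) and (4) following directly from compatibility. Throughout I use the Barendregt convention to keep bound variables disjoint from the variables and contexts being substituted.

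First I would treat the ``single-step'' claims, namely (1) and the structural-substitution half of (3): that $\subst{\cdot}{x}{s}$ (respectively $\subst{\cdot}{\alpha}{\beta E}$) transports one $\to_A$-step $t \to t'$ to one step on the substituted terms. Both go by induction on the derivation of $t \to t'$. The compatible-closure cases are immediate from the induction hypothesis, since both substitutions recurse over exactly the constructors ($\cdot\,\cdot$, $\sucT$, $\nrecT$, $\lambda$, $\mu$) that generate the compatible closure. For the base cases, the $\to_\beta$, $\to_0$ and $\to_\sucT$ rules preserve their redex pattern under substitution, and the redex contraction commutes with the substitution by the ordinary substitution lemma $\subst{(\subst u y r)}{x}{s} \equiv \subst{(\subst u x s)}{y}{\subst r x s}$ (using $y \neq x$ and $y \notin \FV s$). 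For the $\mu$-rules $\to_{\mu R}$, $\to_{\mu\sucT}$, $\to_{\mu\natTypeT}$, which by Fact~\ref{fact:muT_red_singular_context} are uniformly $\cctx{E^s}{\mu\gamma.c} \to \mu\gamma.\subst c \gamma {\gamma E^s}$, the outer substitution distributes through the singular context and through the $\mu$-binder, so after contraction the required identity reduces to the commutation of two structural substitutions,
\[ \subst{(\subst c \gamma {\gamma E^s})}{\alpha}{\beta E} \equiv \subst{(\subst c \alpha {\beta E})}{\gamma}{\gamma\, \subst{E^s}{\alpha}{\beta E}}, \]
valid because $\gamma$ is fresh for $\alpha$, $\beta$ and for $\FCV E$.

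Next, the ``multi-step'' claims (2) and the structural half of (4) go by induction on the structure of $t$. Here $\tto$ rather than $\to$ is forced because $x$ (respectively $\alpha$) may occur any number of times in $t$: substituting the reduct $s \to s'$ (respectively $E \to E'$) into each occurrence contributes one step, so a subterm with $k$ occurrences produces a $k$-step reduction and one with none gives the reflexive case. Each constructor case is closed by composing the reductions supplied by the induction hypotheses on the immediate subterms, using the first halves of (3) and (4) to push a reduction occurring in one subterm out through the surrounding syntax. The two compatibility statements, $\cctx E t \to \cctx E {t'}$ in (3) and $\cctx E t \to \cctx {E'} t$ in (4), are immediate: a \lambdamuT{}-context is built solely from application, $\sucT$ and $\nrecT$, all of which generate the compatible closure defining $\to_A$, so plugging a contraction into the hole, or contracting a redex inside a component of $E$, is again a single $\to_A$-step.

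The main obstacle is the bookkeeping of the commutation lemmas between ordinary and structural substitution in the $\mu$-rule cases of (1) and (3). These are precisely the identities underlying the substitution lemma invoked in the proof of Lemma~\ref{lemma:muT_pred_subst}, and they hold only because the Barendregt convention guarantees that the $\mu$-bound variable $\gamma$ produced by the redex is fresh for $\alpha$, $\beta$ and for the free $\mu$-variables of $E$; without this freshness the nested structural substitutions would capture and the identities would fail. Everything else is routine once these commutations are in place.
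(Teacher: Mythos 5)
Your proof is correct. The paper states this result as a \emph{fact} and provides no proof at all, treating it as routine; your argument --- induction on the reduction derivation for the single-step clauses (1) and the structural half of (3), induction on term structure for the multi-step clauses (2) and (4), with single versus multi-step correctly attributed to the possible duplication of $s$ (resp.\ $E$) at multiple occurrences of $x$ (resp.\ $\alpha$), and with the Barendregt-convention-backed commutation identities between ordinary and structural substitution handling the $\mu$-rule redexes --- is exactly the standard argument the authors leave implicit, and you rightly identify the freshness of the bound $\mu$-variable $\gamma$ with respect to $\FCV E$ as the one point where care is needed.
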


We now extend the notion of strongly normalizing terms to strongly
normalizing contexts. Informally a context is strongly normalizing if
all its sub-terms are strongly normalizing.

\index{Strongly normalizing contexts}
\begin{definition}
The set of \emph{strongly normalizing contexts}, notation $\SNmuTcctx$, is
inductively defined as follows.
\begin{enumerate}
\item $\Box \in \SNmuTcctx$
\item If $E \in \SNmuTcctx$ and $t\in \SNmuT$, then $Et \in \SNmuTcctx$.
\item If $E \in \SNmuTcctx$, then $\sucT E \in \SNmuTcctx$.
\item If $E \in \SNmuTcctx$, $r \in \SNmuT$ and $s \in \SNmuT$, then $\nrecT\ r\ s\ E \in \SNmuTcctx$.
\end{enumerate}
\end{definition}

Parigot's approach has another advantage; for the expansion lemmas we do not 
need to worry about the interpretation of types. We merely 
need the notion of being strongly normalizing (with respect to some context).

\begin{lemma}
\label{lemma:muT_snA_context_redex}
\label{lemma:muT_snA_context_zero}
Let $E$ be a context and $r$ a term such that $r \equiv x$, $r \equiv (\lm x.r)t$, 
$r \equiv \nrecT\ r\ s\ \natenc n$ or $r \equiv \cctx {E^s} {\mu \alpha.c}$.
If $\cctx E r \to t$, then we have:
\begin{enumerate}
\item $t \equiv \cctx E {r'}$ with $r \to r'$, or,
\item $t \equiv \cctx {E'} r$ with $E \to E'$.
\end{enumerate}
\end{lemma}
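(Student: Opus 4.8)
The plan is to proceed by induction on the structure of the context $E$, classifying each single $\to_A$-step $\cctx{E}{r} \to t$ by the position of the contracted redex. The guiding observation is that, because $r$ is restricted to the four listed shapes, $r$ can never be the principal component of a redex whose remaining structure is supplied by the surrounding context: none of $x$, $(\lm x.s)u$, $\nrecT\ r_0\ s_0\ \natenc n$, or $\cctx{E^s}{\mu\alpha.c}$ (with $E^s$ singular, hence $E^s \not\equiv \Box$) is a $\lambda$-abstraction, a $\mu$-abstraction, the constant $0$, or a numeral. Consequently no step can ``straddle'' the junction between $E$ and $r$, and every step is confined either to $r$ (giving outcome 1) or to a component of $E$ (giving outcome 2).

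For the base case $E \equiv \Box$ we have $\cctx{\Box}{r} \equiv r$, so the step is $r \to t$ and outcome 1 holds with $r' \equiv t$. For the inductive step I would treat $E \equiv E_0 u$, $E \equiv \sucT E_0$, and $E \equiv \nrecT\ r_0\ s_0\ E_0$ uniformly. In each case the step either (i) lies inside $\cctx{E_0}{r}$, (ii) lies inside a side-component ($u$, or $r_0$ or $s_0$), or (iii) contracts a redex at the outermost constructor. For (i), the induction hypothesis applied to the reduct of $\cctx{E_0}{r}$ yields either $\cctx{E_0}{r'}$ with $r \to r'$, whence $t \equiv \cctx{E}{r'}$ and outcome 1, or $\cctx{E_0'}{r}$ with $E_0 \to E_0'$, whence by the compatible closure $E \to E'$ for the corresponding larger context and $t \equiv \cctx{E'}{r}$, outcome 2. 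For (ii) a step in a side-component is itself a context reduction $E \to E'$, so $t \equiv \cctx{E'}{r}$, again outcome 2.

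The main obstacle is case (iii), the outermost redex at the interface. For each shape of $E$ the hole lies in the ``active'' position of the topmost constructor --- the function of the outermost application in $E_0 u$, the argument of $\sucT$ in $\sucT E_0$, the recursion argument of $\nrecT$ in $\nrecT\ r_0\ s_0\ E_0$ --- so an outermost $\to_A$-redex would force a constraint on $\cctx{E_0}{r}$: it must be a $\lambda$- or $\mu$-abstraction (for $\to_\beta$, $\to_{\mu R}$, $\to_{\mu\sucT}$, $\to_{\mu\natTypeT}$), or be $0$ or $\sucT\natenc n$ (for $\to_0$, $\to_\sucT$). The abstraction constraints fail because $\cctx{E_0}{r}$ is never headed by $\lambda$ or $\mu$: if $E_0 \not\equiv \Box$ its head is an application, a $\sucT$, or an $\nrecT$, and if $E_0 \equiv \Box$ its head is that of $r$, which by inspection of the four shapes is never $\lambda$ or $\mu$ (note that $\cctx{E^s}{\mu\alpha.c}$ with singular $E^s$ is headed by an application, a $\sucT$, or an $\nrecT$, not by $\mu$). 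The zero and numeral constraints fail because every admissible $r$ contains a subterm that no numeral contains --- a free variable, an application, an $\nrecT$, or a $\mu$ --- and filling a context of $\sucT$-, application-, and $\nrecT$-nodes around $r$ preserves that subterm, so $\cctx{E_0}{r}$ can be neither $0$ nor any $\natenc n$; this also dispatches $\to_\sucT$ in the nested subcase $E_0 \equiv \sucT E_1$, where one would otherwise need $\cctx{E_1}{r}$ to be a numeral. Hence case (iii) never arises and the induction closes. The only real care is to verify that these impossibility checks collectively cover all six $\to_A$-rules, which is routine once the position of the hole is fixed.
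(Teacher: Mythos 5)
Your proof is correct and takes essentially the same route as the paper: induction on the structure of $E$, using the shape restriction on $r$ to show that no redex can form at the interface between $E$ and $r$ (in particular that $\cctx {E_0} r$ is never a $\lambda$- or $\mu$-abstraction). Your write-up is in fact more complete than the paper's sketch, which treats only the case $E \equiv Ft$ and leaves implicit the $\sucT$ and $\nrecT$ cases, where one must additionally check, as you do via the persistent-subterm argument, that $\cctx {E_0} r$ is neither $0$ nor a numeral.
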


\begin{proof}
We prove the result by induction on the structure of $E$. We consider only the
case $E \equiv Ft$. Here we use the assumption about the shape of $r$ to derive 
that $\cctx F r$ cannot be of the shape $\lambda x.s$ or $\mu\beta.c$. This 
guarantees that $\cctx F r t$ is not a redex, by which the result follows
immediately.
\end{proof}

\begin{lemma}
\label{lemma:muT_snA_lambda}
If $r \in \SNmuT$ and \mbox{$\cctx E {\subst t x r} \in \SNmuT$}, then $\cctx E {(\lm x.t)r} \in \SNmuT$.
\end{lemma}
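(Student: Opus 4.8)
The plan is to prove $\cctx E {(\lm x.t)r} \in \SNmuT$ directly from the inductive characterization of $\SNmuT$ in Definition~\ref{definition:muT_sn}: it suffices to show that every one-step $\to$-reduct of $\cctx E {(\lm x.t)r}$ again lies in $\SNmuT$. Both hypotheses supply a termination bound: $\MRP r$ is defined since $r \in \SNmuT$, and $\MRP{\cctx E {\subst t x r}}$ is defined since $\cctx E {\subst t x r} \in \SNmuT$ (Fact~\ref{fact:muT_redA_bound}). I would run the argument by well-founded induction on the lexicographically ordered pair $(\MRP{\cctx E {\subst t x r}}, \MRP r)$, reproving the statement for all $t$, $E$, $r$ satisfying the hypotheses whose associated pair is smaller.

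To classify the one-step reducts, note that $(\lm x.t)r$ is a $\beta$-redex, so it has one of the shapes required by Lemma~\ref{lemma:muT_snA_context_redex}. Hence any reduct of $\cctx E {(\lm x.t)r}$ is either $\cctx {E'} {(\lm x.t)r}$ with $E \to E'$, or $\cctx E v$ where $(\lm x.t)r \to v$; and a reduct $v$ of the redex itself is one of $\subst t x r$ (the $\beta$-step), $(\lm x.t')r$ with $t \to t'$, or $(\lm x.t)r'$ with $r \to r'$. The $\beta$-step case is immediate: $\cctx E {\subst t x r}$ lies in $\SNmuT$ by assumption.

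For the remaining three cases I would invoke the induction hypothesis after checking that the measure strictly decreases. If $E \to E'$, then $\cctx E {\subst t x r} \to \cctx {E'} {\subst t x r}$ by Fact~\ref{fact:muT_redA_subst}, so the first coordinate drops while $r$ is unchanged; the reduct of the $\SNmuT$-term is again in $\SNmuT$ (Fact~\ref{fact:muT_steps_sn}), and the induction hypothesis applies to $(E', t, r)$. If $t \to t'$, then $\subst t x r \to \subst {t'} x r$ and hence $\cctx E {\subst t x r} \to \cctx E {\subst {t'} x r}$, again lowering the first coordinate, so the induction hypothesis applies to $(E, t', r)$. If $r \to r'$, then $\MRP{r'} < \MRP r$; moreover $\subst t x r \tto \subst t x {r'}$ and thus $\cctx E {\subst t x r} \tto \cctx E {\subst t x {r'}}$, so the first coordinate does not increase while the second strictly decreases, giving a smaller pair, and the induction hypothesis applies to $(E, t, r')$ using $r' \in \SNmuT$. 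In each case the hypothesis delivers that the reduct is in $\SNmuT$, completing the induction.

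The only genuinely delicate point is the exhaustiveness of the case split, i.e.\ that placing the redex $(\lm x.t)r$ under $E$ creates no new outermost redex mixing $E$ with the hole. This is exactly what Lemma~\ref{lemma:muT_snA_context_redex} provides, since $(\lm x.t)r$ is neither a $\lambda$-abstraction nor of the form $\mu\beta.c$; everything else is routine bookkeeping with the substitution and reduction facts. The subtle subcase $r \to r'$ with $x \notin \FV t$, where the first coordinate stays fixed, is precisely why the hypothesis $r \in \SNmuT$ and the second coordinate $\MRP r$ are needed.
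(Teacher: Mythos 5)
Your proposal is correct and takes essentially the same route as the paper's own proof: well-founded induction on a measure built from $\MRP{\cctx E {\subst t x r}}$ and $\MRP r$ (the paper uses the sum where you use the lexicographic pair, an immaterial difference since both decrease arguments go through), with the same four-way case analysis on one-step reducts, exhaustiveness by Lemma~\ref{lemma:muT_snA_context_redex}, and the measure decreases supplied by Fact~\ref{fact:muT_redA_subst} together with Fact~\ref{fact:muT_steps_sn}. No gaps.
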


\begin{proof}
We use Fact~\ref{fact:muT_redA_bound} to prove this result by well-founded 
induction on $\MRP r + \MRP{\cctx E {\subst t x r}}$. By 
Definition~\ref{definition:muT_sn} we have to show that for each term $w$
with $\cctx E {(\lm x.t) r} \to w$ we have $w \in \SNmuT$.
\begin{enumerate}
\item Let $w \equiv \cctx E {\subst t x r}$. Now 
	$\cctx E {\subst t x r} \in \SNmuT$ by assumption.
\item Let $w \equiv \cctx E {(\lm x.t')r}$ and $t \to t'$. Now 
	$\cctx E {\subst t x r} \to \cctx E {\subst {t'} x r}$ by 
	Fact~\ref{fact:muT_redA_subst}, hence $\cctx E {\subst {t'} x r} \in \SNmuT$. 
	By the induction hypothesis we have $\cctx E {(\lm x.t')r} \in \SNmuT$
	since $\MRP{\cctx E {\subst {t'} x r}} < \MRP{\cctx E {\subst t x r}}$.
\item Let $w \equiv \cctx E {(\lm x.t)r'}$ and $r \to r'$. Now 
	$\cctx E {\subst t x r} \tto \cctx E {\subst t x {r'}}$ by Fact~\ref{fact:muT_redA_subst}
	and therefore $\cctx E {\subst t x {r'}} \in \SNmuT$. By the induction 
	hypothesis we have $\cctx E {(\lm x.t)r'} \in \SNmuT$ since $\MRP {r'} < \MRP r$.
\item Let $w \equiv \cctx E {(\lm x.t)r}$ and $E \to E'$. Now
	$\cctx E {\subst t x r} \to \cctx {E'} {\subst t x r}$ by 
	Fact~\ref{fact:muT_redA_subst}, hence $\cctx {E'} {\subst t x r} \in \SNmuT$. 
	By the induction hypothesis we have $\cctx {E'} {(\lm x.t)r} \in \SNmuT$
	since $\MRP{\cctx {E'} {\subst t x r}} < \MRP{\cctx E {\subst t x r}}$.
\end{enumerate}
Lemma~\ref{lemma:muT_snA_context_redex} guarantees that we have considered all possible shapes of $w$.
\end{proof}

\begin{lemma}
\label{lemma:muT_snA_mu}
If $F^s \in \SNmuTcctx$ and $\cctx E {\mu\alpha.\subst c \alpha {\alpha F^s}} \in \SNmuT$,
then \mbox{$\cctx E {\cctx {F^s} {\mu \alpha.c}} \in \SNmuT$}.
\end{lemma}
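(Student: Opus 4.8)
The plan is to mirror the structure of the $\beta$-expansion Lemma~\ref{lemma:muT_snA_lambda}. Writing $\mu\alpha.\subst c \alpha {\alpha F^s}$ for the contractum of the singular $\mu$-redex $\cctx {F^s}{\mu\alpha.c}$ (Fact~\ref{fact:muT_red_singular_context}), I would prove the statement by well-founded induction on the measure $\MRP{F^s} + \MRP{\cctx E {\mu\alpha.\subst c \alpha {\alpha F^s}}}$, where $\MRP{F^s}$ abbreviates the sum of the bounds $\MRP{\cdot}$ of the term components of the strongly normalizing singular context $F^s$ --- that is, $\MRP{t}$ when $F^s \equiv \Box t$, zero when $F^s \equiv \sucT\Box$, and $\MRP r + \MRP s$ when $F^s \equiv \nrecT\ r\ s\ \Box$. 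This measure is well-defined by Fact~\ref{fact:muT_redA_bound}, since $F^s \in \SNmuTcctx$ and the contractum-in-context is in $\SNmuT$ by hypothesis. By Definition~\ref{definition:muT_sn} it then suffices to show $w \in \SNmuT$ for every $w$ with $\cctx E {\cctx {F^s}{\mu\alpha.c}} \to w$.

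The key step is to enumerate these one-step reducts exhaustively. Since $\cctx {F^s}{\mu\alpha.c}$ has exactly the redex shape $\cctx {E^s}{\mu\alpha.c}$ covered by Lemma~\ref{lemma:muT_snA_context_redex}, that lemma splits every reduction into one taking place in the outer context $E$, giving (i)~$w \equiv \cctx {E'}{\cctx {F^s}{\mu\alpha.c}}$ with $E \to E'$, and one taking place in the redex $\cctx {F^s}{\mu\alpha.c}$ itself. The latter I would further analyse into three disjoint possibilities --- no straddling redex is possible because $\mu\alpha.c$ is neither a $\lambda$-abstraction nor a numeral --- namely: (ii)~the top-level singular $\mu$-contraction, yielding $w \equiv \cctx E {\mu\alpha.\subst c \alpha {\alpha F^s}}$; (iii)~a reduction inside $F^s$, yielding $w \equiv \cctx E {\cctx {F'^s}{\mu\alpha.c}}$ with $F^s \to F'^s$; and (iv)~a reduction inside the body, yielding $w \equiv \cctx E {\cctx {F^s}{\mu\alpha.c'}}$ with $c \to c'$.

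Each case is then dispatched against the induction hypothesis. Case (ii) is immediate, as $w$ is precisely the term assumed to be in $\SNmuT$. For (i), Fact~\ref{fact:muT_redA_subst} gives $\cctx E {\mu\alpha.\subst c \alpha {\alpha F^s}} \to \cctx {E'}{\mu\alpha.\subst c \alpha {\alpha F^s}}$, so by Fact~\ref{fact:muT_steps_sn} the latter is in $\SNmuT$ with strictly smaller second summand while the first is unchanged, and the IH applies; (iv) is analogous, using that $c \to c'$ lifts through structural substitution to a single step $\cctx E {\mu\alpha.\subst c \alpha {\alpha F^s}} \to \cctx E {\mu\alpha.\subst {c'} \alpha {\alpha F^s}}$. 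For (iii) I would first note that $F'^s$ is again singular and lies in $\SNmuTcctx$, since its components stay strongly normalizing by Fact~\ref{fact:muT_steps_sn}, so its bound strictly drops; the contractum only changes by $\cctx E {\mu\alpha.\subst c \alpha {\alpha F^s}} \tto \cctx E {\mu\alpha.\subst c \alpha {\alpha F'^s}}$ (Fact~\ref{fact:muT_redA_subst}), whence the second summand does not increase and the total still decreases, again licensing the IH.

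The main obstacle I anticipate is precisely case (iii): because structural substitution by a context may duplicate $F^s$ or erase it entirely when $\alpha \notin \FCV c$, the reduction $F^s \to F'^s$ does \emph{not} in general induce a strict decrease of the contractum's bound, so the well-foundedness of the induction has to be carried by the auxiliary component $\MRP{F^s}$ rather than by the term itself. Arranging the measure to account for this --- together with verifying that reducing a component of a singular context keeps it both singular and strongly normalizing --- is the delicate point; the remaining cases are routine once Lemma~\ref{lemma:muT_snA_context_redex} is invoked to guarantee that the reduction analysis is exhaustive.
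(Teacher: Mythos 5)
Your proposal is correct and follows essentially the same route as the paper: the paper's proof of this lemma is literally ``similar to the proof of Lemma~\ref{lemma:muT_snA_lambda}'', and you spell out exactly that adaptation --- well-founded induction on $\MRP{F^s} + \MRP{\cctx E {\mu\alpha.\subst c \alpha {\alpha F^s}}}$, with exhaustiveness of the case analysis secured by Lemma~\ref{lemma:muT_snA_context_redex}. Your closing remark about case (iii) correctly identifies why the auxiliary summand $\MRP{F^s}$ must carry the well-foundedness (structural substitution may duplicate or erase $F^s$), which is precisely the role played by $\MRP r$ in the paper's $\beta$-expansion proof.
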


\begin{proof}
The proof is similar to the proof of Lemma~\ref{lemma:muT_snA_lambda}.
\end{proof}

\begin{corollary}
\label{corollary:muT_snA_mu}
If $F \in \SNmuTcctx$ and $\cctx E {\mu\alpha.\subst c \alpha {\alpha F}} \in \SNmuT$,
then $\cctx E {\cctx F {\mu\alpha.c}} \in \SNmuT$.
\end{corollary}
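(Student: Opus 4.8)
The plan is to lift Lemma~\ref{lemma:muT_snA_mu}, which treats a single \emph{singular} context, to an arbitrary context $F$ by peeling off one singular layer at a time. I would prove the statement by induction on the structure of $F$, stated uniformly for all outer contexts $E$ and all commands $c$, since the step will apply the induction hypothesis with a \emph{larger} outer context than the one we started with.

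For the base case $F \equiv \Box$, observe that $\subst c \alpha {\alpha \Box} \equiv c$ by a routine induction on $c$ (structurally substituting $\alpha$ by $\alpha$ together with the empty context is the identity). Hence the assumption $\cctx E {\mu\alpha.\subst c \alpha {\alpha \Box}} \in \SNmuT$ is literally $\cctx E {\mu\alpha.c} \in \SNmuT$, which is the desired $\cctx E {\cctx \Box {\mu\alpha.c}} \in \SNmuT$.

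For the step case I would strip the outermost constructor of $F$, writing $F \equiv E^s F'$ with $E^s$ singular (one of $\Box t$, $\sucT \Box$, $\nrecT\ r\ s\ \Box$, according to the three grammar cases) and $F'$ the immediate sub-context. The inductive definition of $\SNmuTcctx$ then delivers both $E^s \in \SNmuTcctx$ and $F' \in \SNmuTcctx$ from $F \in \SNmuTcctx$. The key ingredient is the composition property of structural substitution $\subst c \alpha {\alpha F} \equiv \subst {(\subst c \alpha {\alpha F'})} \alpha {\alpha E^s}$, valid since $F \equiv E^s F'$; it follows by induction on $c$ using $\cctx {E^s} {\cctx {F'} q} \equiv \cctx {E^s F'} q$, with the Barendregt convention guaranteeing $\alpha \notin \FCV{E^s} \cup \FCV{F'}$. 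Writing $c' \defined \subst c \alpha {\alpha F'}$, the assumption $\cctx E {\mu\alpha.\subst c \alpha {\alpha F}} \in \SNmuT$ becomes $\cctx E {\mu\alpha.\subst {c'} \alpha {\alpha E^s}} \in \SNmuT$. Applying Lemma~\ref{lemma:muT_snA_mu} to the singular context $E^s$ (with outer context $E$ and command $c'$) yields $\cctx E {\cctx {E^s} {\mu\alpha.c'}} \in \SNmuT$, i.e.\ $\cctx {EE^s} {\mu\alpha.\subst c \alpha {\alpha F'}} \in \SNmuT$. Invoking the induction hypothesis for $F'$ with the enlarged outer context $EE^s$ and command $c$ then gives $\cctx {EE^s} {\cctx {F'} {\mu\alpha.c}} \in \SNmuT$, which by the context-composition fact equals $\cctx E {\cctx {E^s F'} {\mu\alpha.c}} \equiv \cctx E {\cctx F {\mu\alpha.c}}$, as required.

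The only genuinely delicate point---and the step I expect to demand the most care---is the structural-substitution composition identity, where the freshness of $\alpha$ in the contexts must be handled exactly as in the substitution lemmas of Section~\ref{section:muT_confluence}; the remainder is bookkeeping. The idea that makes the induction succeed is to let the outer context absorb each stripped singular layer (passing from $E$ to $EE^s$), so that the repeated applications of the singular-context Lemma~\ref{lemma:muT_snA_mu} reassemble precisely the full structural substitution $\subst c \alpha {\alpha F}$.
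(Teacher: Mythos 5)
Your proposal is correct and follows essentially the same route as the paper's own proof: the paper also argues by induction on the structure of $F$ (uniformly in the outer context $E$), dispatches the base case via $\subst c \alpha {\alpha\,\Box} \equiv c$, and in the step case decomposes $F \equiv G^s H$ into a singular outer layer and an inner context, rewrites $\subst c \alpha {\alpha F} \equiv \subst {\subst c \alpha {\alpha H}} \alpha {\alpha G^s}$ by the same substitution-composition lemma, applies Lemma~\ref{lemma:muT_snA_mu} to the singular layer, and finishes with the induction hypothesis on the enlarged outer context. The only difference is presentational: you make explicit the Barendregt-convention freshness bookkeeping that the paper compresses into the phrase ``an obvious substitution lemma''.
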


\begin{proof}
By induction on the structure of $F$.
\begin{enumerate}
\item Let $F \equiv \Box$. We have
	$\cctx E {\mu\alpha.c} \equiv \cctx E {\mu\alpha.\subst c \alpha {\alpha \Box}}$
	for each context $E$ and command $c$, so by assumption we are done.
\item Let $F \equiv G^sH$. By an obvious substitution lemma and 
	assumption we have \mbox{$\cctx E {\mu\alpha.\subst {\subst c \alpha {\alpha H}} \alpha {\alpha G^s}} 
		\equiv \cctx E {\mu\alpha.\subst c \alpha {\alpha F}} \in \SNmuT$}.
	Therefore we have $\cctx E {\cctx {G^s} {\mu\alpha.\subst c \alpha {\alpha H}}} \in \SNmuT$ by 
	Lemma~\ref{lemma:muT_snA_mu}. Hence
	$\cctx E {\cctx {G^s} {\cctx H {\mu\alpha.c}}} \in \SNmuT$
	by the induction hypothesis.
\qedhere
\end{enumerate}
\end{proof}

\begin{lemma}
\label{lemma:muT_snA_nrec_exp}
For each context $E$ we have the following.
\begin{enumerate}
\item If $\cctx E r \in \SNmuT$ and $s \in \SNmuT$, then $\cctx E {\nrecT\ r\ s\ 0} \in \SNmuT$.
\item If $\cctx E {s\ \natenc n\ (\nrecT\ r\ s\ \natenc n)} \in \SNmuT$, then $\cctx E {\nrecT\ r\ s\ (\sucT \natenc n)} \in \SNmuT$.
\end{enumerate}
\end{lemma}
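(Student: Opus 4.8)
The plan is to prove both statements by well-founded induction on a suitable $\MRP$-measure, exactly mirroring the pattern of Lemma~\ref{lemma:muT_snA_lambda}: in each part the contractum of the recursor redex is assumed to lie in $\SNmuT$, so (by Fact~\ref{fact:muT_redA_bound}) its $\MRP$-value is defined, and I show that every one-step reduct $w$ of the redex term is again in $\SNmuT$, appealing to the induction hypothesis whenever $w$ still contains the redex. The decisive structural input is Lemma~\ref{lemma:muT_snA_context_redex}. Since $\nrecT\ r\ s\ 0 \equiv \nrecT\ r\ s\ \natenc 0$ and $\nrecT\ r\ s\ (\sucT \natenc n) \equiv \nrecT\ r\ s\ \natenc{n+1}$ are both of the admissible shape $\nrecT\ r\ s\ \natenc m$, every reduct $w$ of $\cctx E {\nrecT\ r\ s\ \natenc m}$ is either $\cctx E {r'}$ with the recursor term reduced to $r'$, or $\cctx {E'} {\nrecT\ r\ s\ \natenc m}$ with $E \to E'$. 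This rules out any interaction redex between $E$ and the recursor term and confirms that we have covered all shapes of $w$.

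For part (1) I would induct on $\MRP{\cctx E r} + \MRP s$; both summands are defined by hypothesis, and $s$ must be tracked separately because it is discarded by the $\to_0$-contraction and hence does not occur in $\cctx E r$. The reducts of $\cctx E {\nrecT\ r\ s\ 0}$ are: the contractum $\cctx E r$ itself (in $\SNmuT$ by assumption); $\cctx E {\nrecT\ r'\ s\ 0}$ for $r \to r'$, where $\cctx E r \to \cctx E {r'}$ by Fact~\ref{fact:muT_redA_subst} so the first summand strictly drops and the induction hypothesis applies; $\cctx E {\nrecT\ r\ s'\ 0}$ for $s \to s'$, where the second summand strictly drops; and $\cctx {E'} {\nrecT\ r\ s\ 0}$ for $E \to E'$, again handled via Fact~\ref{fact:muT_redA_subst} and the induction hypothesis. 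The argument $0$ is normal, so it yields no further reduct.

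For part (2) a single measure $\MRP{\cctx E {s\ \natenc n\ (\nrecT\ r\ s\ \natenc n)}}$ suffices, since both $r$ and $s$ already occur in the contractum. The reducts of $\cctx E {\nrecT\ r\ s\ (\sucT \natenc n)}$ are the contractum (in $\SNmuT$ by assumption, reached by the $\to_\sucT$-contraction), together with reductions inside $r$, $s$, or $E$. For each of the latter I would exhibit the matching reduction of the contractum — reducing $r$ once, reducing both copies of $s$ in two steps, or rewriting $E$ — conclude that the resulting contractum lies in $\SNmuT$ with strictly smaller measure by Fact~\ref{fact:muT_steps_sn}, and then invoke the induction hypothesis with the reduced components. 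Note that $\sucT \natenc n \equiv \natenc{n+1}$ is a numeral, hence normal, so the numerical argument contributes no reduct.

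There is no genuine obstacle here; the only point demanding care is the measure bookkeeping in part (2) when $s$ is contracted: because $s$ occurs twice in the contractum, a single $\to_A$-step inside the $s$ of the redex corresponds to two steps inside the contractum, but this still strictly decreases $\MRP$, so the induction proceeds. The sole prerequisite worth checking explicitly is that $0$ and $\sucT \natenc n$ indeed fall under the $\nrecT\ r\ s\ \natenc m$ clause of Lemma~\ref{lemma:muT_snA_context_redex}, which is immediate from $\natenc m \equiv \sucT^m 0$.
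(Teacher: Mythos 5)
Your proposal is correct and matches the paper's proof exactly: the paper likewise proves (1) by well-founded induction on $\MRP{\cctx E r} + \MRP s$ and (2) by induction on $\MRP{\cctx E {s\ \natenc n\ (\nrecT\ r\ s\ \natenc n)}}$, distinguishing cases as in Lemma~\ref{lemma:muT_snA_lambda} with Lemma~\ref{lemma:muT_snA_context_redex} guaranteeing exhaustiveness. Your explicit bookkeeping for the duplicated occurrence of $s$ in part (2) and the check that $0$ and $\sucT\natenc n$ are numerals are exactly the details the paper's terse proof leaves implicit.
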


\begin{proof}
We use Fact~\ref{fact:muT_redA_bound} and prove (1) by 
induction on $\MRP {{\cctx E r}} + \MRP s$ and (2) by 
induction on $\MRP {\cctx E {s\ \natenc n\ (\nrecT\ r\ s\ \natenc n)}}$.
Similar to the proof of Lemma~\ref{lemma:muT_snA_lambda} we distinguish various
cases.
\end{proof}

Parigot extends the well-known \emph{functional construction} of two
sets of terms $S$ and $T$ ($S \to T \defined \{ t \separator \forall u
\in S \ .\ tu \in T \}$) to a set $\mathcal S$ of sequences of terms
and a set $T$ of terms as follows.
\[
	\mathcal S \to T \defined \{ t \separator \forall \vec u \in \mathcal S \ .\ t\vec u \in T \}
\]
Moreover, he defines the notion of reducibility candidates in such way
that each reducibility candidate $R$ can be expressed as $\mathcal S
\to \SNmuT$ for a certain set of sequences of terms $\mathcal S$.
Therefore he is able to switch between the proposition $t \in R$ and
the proposition $t\vec u \in \SNmuT$ for all $\vec u \in \mathcal
S$. We extend Parigot's notion of functional construction to contexts
in the obvious way.

\index{Functional construction}
\begin{definition}
\label{def:muT_snA_func_cons}
Given a set of contexts $\mathcal E$ and a set of terms $T$, the \emph{functional 
construction} $\mathcal E \to T$ is defined as follows.
\[
	\mathcal E \to T \defined \{ t \separator \forall E \in \mathcal E\ .\ \cctx E t \in T \}
\]
Given two sets of terms  $S$ and $T$, then $S \to T$ is defined as follows.
\[
	S \to T \defined \{ \Box u \separator u \in S \} \to T
\]
\end{definition}

Remark that, for sets of terms $S$ and $T$, our definition of the
functional construction $S \to T$ is equivalent to the ordinary
definition.
\[
	S \to T = \{ \Box u \separator u \in S \} \to T = \{ t \separator \forall u \in S \ .\ tu \in T \}
\]

Keeping in mind that we wish to express each reducibility candidate $R$ as 
$\mathcal E \to \SNmuT$ for some $\mathcal E$, one might try to  
define the collection of reducibility candidates as the smallest set that contains 
$\SNmuT$ and is closed under functional construction and arbitrary intersection.
But then $\{ \nrecT\ \Omega\ \Omega\ \Box \} \to \SNmuT = \emptyset$ is a valid
candidate too. To avoid this we should be a bit more careful.

\begin{definition}
\index{Reducibility candidates}
\label{def:muT_snA_red_cand}
We define the collection of \emph{reducibility candidates},
$\SNmuTredcand$, inductively as follows.
\begin{enumerate}
\item[\em (sn)] $\SNmuT \in \SNmuTredcand$
\item[\em ($\bigcap$)] If $\emptyset \subset \mathbf R \subseteq \SNmuTredcand$, then $\bigcap \mathbf R \in \SNmuTredcand$.
\item[\em (app)] If $S, T \in \SNmuTredcand$, then $S \to T \in \SNmuTredcand$.
\item[\em (suc)] If $T \in \SNmuTredcand$, then $\{\sucT \Box \} \to T \in \SNmuTredcand$.
\item[\em (nrec)] If $S,T \in \SNmuTredcand$, then $\{\nrecT\ r\ s\ \Box \separator r \in T, s \in S \to T \to T \} \to T \in \SNmuTredcand$.
\end{enumerate}
\end{definition}

\begin{lemma}
\label{lemma:muT_snA_var}
For each $R \in \SNmuTredcand$ we have the following.
\begin{enumerate}
\item $R \subseteq \SNmuT$
\item $\cctx E x \in R$ for each $x$ and $E \in \SNmuTcctx$.
\end{enumerate}
\end{lemma}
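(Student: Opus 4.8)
The plan is to prove (1) and (2) \emph{simultaneously} by induction on the construction of $R$ according to Definition~\ref{def:muT_snA_red_cand}, since the two statements feed into each other: to show $R \subseteq \SNmuT$ for a functional construction one needs that variables already inhabit the argument candidates (property (2) of the immediate subcandidates), while to show $\cctx E x \in R$ one needs that the components involved are strongly normalizing (property (1) of the subcandidates). I will repeatedly use the following \emph{sub-position remark}: since reduction is the compatible closure, $t \to t'$ entails $tu \to t'u$, $\sucT t \to \sucT t'$ and $\nrecT\ r\ s\ t \to \nrecT\ r\ s\ t'$; hence if any of $tu$, $\sucT t$, $\nrecT\ r\ s\ t$ lies in $\SNmuT$, then so does $t$, because its reduction sequences lift to sequences of the compound term, which are bounded by Fact~\ref{fact:muT_redA_bound}.

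For the base case $R = \SNmuT$, property (1) is trivial. Property (2) is the statement that plugging a variable into a strongly normalizing context is strongly normalizing, which I would prove by well-founded induction on the sum of the bounds $\MRP{\cdot}$ of the term-components of $E$ (finite by Fact~\ref{fact:muT_redA_bound}, as $E \in \SNmuTcctx$ forces every component into $\SNmuT$). By Lemma~\ref{lemma:muT_snA_context_redex} with $r \equiv x$, every reduct $w$ of $\cctx E x$ has the form $\cctx {E'} x$ with $E \to E'$ (the case $x \to x'$ being vacuous), as plugging a variable creates no redex at the interface; such a step reduces a single component of $E$, so $E' \in \SNmuTcctx$ by Fact~\ref{fact:muT_steps_sn} with strictly smaller measure, whence $\cctx {E'} x \in \SNmuT$ by the induction hypothesis. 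The case $E \equiv \Box$ yields the normal form $x$, handled by Fact~\ref{fact:muT_nf_sn}. The intersection case is immediate: for (1) pick any $R' \in \mathbf R$ (nonempty) and use $R \subseteq R' \subseteq \SNmuT$; for (2) note $\cctx E x \in R'$ for every $R' \in \mathbf R$ by the induction hypothesis, so $\cctx E x \in \bigcap \mathbf R$.

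The three functional-construction cases follow one pattern, so I describe (nrec); (app) and (suc) are simpler. Write $R = \mathcal N \to T$ with $\mathcal N = \{\nrecT\ r\ s\ \Box \separator r \in T,\ s \in S \to T \to T\}$. For (1), given $t \in R$ I must supply one element of $\mathcal N$: take $r$ and $s$ to be variables. A variable lies in $T$ by the induction hypothesis (2) for $T$ (context $\Box$), and a variable $y$ lies in $S \to T \to T$ by unfolding the functional construction and observing $y\,u\,v \equiv \cctx {((\Box u)v)} y$ with $(\Box u)v \in \SNmuTcctx$ (because $u \in S \subseteq \SNmuT$ and $v \in T \subseteq \SNmuT$ by the induction hypothesis (1)), so $y\,u\,v \in T$ again by the induction hypothesis (2) for $T$. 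Hence some $N \equiv \nrecT\ x\ y\ \Box \in \mathcal N$ gives $\cctx N t = \nrecT\ x\ y\ t \in T \subseteq \SNmuT$ by the induction hypothesis (1) for $T$, so $t \in \SNmuT$ by the sub-position remark. For (2), let $E \in \SNmuTcctx$ and take any $N \equiv \nrecT\ r\ s\ \Box \in \mathcal N$; by the clauses of context substitution $\cctx N {(\cctx E x)} = \cctx {(\nrecT\ r\ s\ E)} x$, and the enlarged context $\nrecT\ r\ s\ E$ is in $\SNmuTcctx$ since $r \in T \subseteq \SNmuT$ and $s \in \SNmuT$ (the latter because $s\,u\,v \in T \subseteq \SNmuT$ for variables $u \in S$, $v \in T$, by the sub-position remark), both by the induction hypothesis (1). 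Thus $\cctx {(\nrecT\ r\ s\ E)} x \in T$ by the induction hypothesis (2) for $T$, and as $N$ was arbitrary, $\cctx E x \in R$. The (app) and (suc) cases use the identities $(\cctx E x)u = \cctx {(Eu)} x$ and $\sucT(\cctx E x) = \cctx {(\sucT E)} x$ together with $u \in S \subseteq \SNmuT$.

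I expect the main obstacle to be the (nrec) and (app) cases of property (1), where a variable must be placed inside the \emph{compound} candidate $S \to T \to T$, which is not a direct subderivation of $R$ in Definition~\ref{def:muT_snA_red_cand} and so cannot be reached by a bare appeal to the induction hypothesis. The resolution is to unfold the functional construction (Definition~\ref{def:muT_snA_func_cons}) completely and reduce the claim to property (2) of the genuine subcandidate $T$ and property (1) of $S$ and $T$; the accompanying context-composition bookkeeping is routine but must be carried out carefully so that each enlarged object is recognised as a bona fide strongly normalizing context.
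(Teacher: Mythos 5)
Your proposal is correct and takes essentially the same route as the paper's proof: a simultaneous induction on the generation of $R \in \SNmuTredcand$, with the base case $\SNmuT$ discharged via Lemma~\ref{lemma:muT_snA_context_redex} (together with a measure argument on the context's components), the intersection case handled pointwise, and the (suc)/(nrec) cases of property (1) settled by plugging the term into a context built from variables and using subterm closure of $\SNmuT$. Indeed, you make explicit a point the paper glosses over --- it asserts a variable $y \in S \to T \to T$ ``by the induction hypothesis'' even though $S \to T \to T$ is not a subderivation of $R$'s generation --- and your resolution (unfolding Definition~\ref{def:muT_snA_func_cons} so that $y\,u\,v \equiv \cctx{((\Box u)v)}{y}$ and appealing to property (2) for the genuine subcandidate $T$) is exactly the right repair.
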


\begin{proof}
We prove these results simultaneously by induction on the generation of $R$. 
We consider some interesting cases.
\begin{enumerate}
\item[(sn)] Let $R = \SNmuT$. We certainly have $R \subseteq \SNmuT$.
	Also, $\cctx E x \in \SNmuT$ by Lemma~\ref{lemma:muT_snA_context_zero}.
\item[($\bigcap$)] Let $R = \bigcap \mathbf R$. By the induction hypothesis we 
	have $T \subseteq \SNmuT$ for each $T \in \mathbf R$. Therefore we have
	$\bigcap \mathbf R  \subseteq \SNmuT$, so the first property holds.
	
	By the induction hypothesis we also have $\cctx E x \in T$ for each 
	$T \in \mathbf R$ and $E \in \SNmuTcctx$. Therefore we have $\cctx E x \in \mathbf R$
	for each $E \in \SNmuTcctx$, so the second property holds as well.
\item[(suc)] Let $R = \{\sucT \Box \} \to T$. To prove the first property, 
	we suppose that $t \in R$. This means that $\sucT t \in T$. Therefore
	$\sucT t \in \SNmuT$ because $T \subseteq \SNmuT$ by the induction
	hypothesis. Now certainly $t \in \SNmuT$, so the first property holds.
	
	To prove the second property we have to show that $\cctx E x \in R$. 
	By the induction hypothesis we have $\cctx E x \in T$ 
	for each $E \in \SNmuTcctx$. In particular we have $\cctx {\sucT E} x \in T$.
	This means that $\cctx E x \in R$, so the second property holds as well.
\item[(nrec)] Let $R = \{\nrecT\ r\ s\ \Box \separator r \in T, s \in S\to T\to T \} \to T$. 
	To prove the first property, we suppose that $t \in R$. This means that 
	\mbox{$\nrecT\ r\ s\ t \in T$} for each $r \in T$ and $s \in S\to T\to T$. 
	By the induction hypothesis we have an $x \in T$ and $y \in S\to T\to T$, 
	hence $\nrecT\ x\ y\ t \in T$. Thus $t \in \SNmuT$ because $T \subseteq \SNmuT$ 
	by the induction hypothesis, so the first property holds. 
	
	To prove the second property we have to show that $\cctx E x \in R$. 
	By the induction hypothesis we have $\cctx E x \in T$ 
	for each $E \in \SNmuTcctx$. In particular we have $\cctx {\nrecT\ r\ s\ E} x \in T$.
	This means that $\cctx E x \in R$, so the second property holds as well. \qedhere
\end{enumerate}
\end{proof}

As we have remarked before, we wish to express each reducibility candidate 
$R$ as $\mathcal E \to \SNmuT$ for some set of contexts $\mathcal E$. Now we will make
that idea precise.

\begin{definition}
Given an $R \in \SNmuTredcand$, a set of contexts $\SNmuTreddual R$ is 
inductively defined on the generation of $R$ as follows.
\begin{flalign*}
\SNmuTreddual\SNmuT & \defined{}
	\{ \Box \} \\
\SNmuTreddual{(\bigcap \mathbf R)} & \defined{} 
	\bigcup\{ \SNmuTreddual T \separator T \in \mathbf R \} \\
\SNmuTreddual{(S \to T)} &\defined{} 
	\{ \Box \} \cup \{ E(\Box u) \separator u \in S, E \in \SNmuTreddual T \}  \\
\SNmuTreddual{(\{\sucT \Box\} \to T)} &\defined{}
	\{ \Box \} \cup \{ E(\sucT \Box) \separator E \in \SNmuTreddual T \} \\
\SNmuTreddual{(\{\nrecT\ r\ s\ \Box\} \to T)} &\defined{}
	\{ \Box \} \cup \{ E(\nrecT\ r\ s\ \Box) \separator r \in T, s \in S\to T\to T,E \in \SNmuTreddual T \}
\end{flalign*}
\end{definition}

\begin{fact}
\label{fact:muT_snA_box}
For each $R \in \SNmuTredcand$ we have $\Box \in \SNmuTreddual R$.
\end{fact}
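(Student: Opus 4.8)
The plan is to prove $\Box \in \SNmuTreddual R$ by induction on the generation of $R \in \SNmuTredcand$, following the five clauses of Definition~\ref{def:muT_snA_red_cand}. In each case one simply unfolds the corresponding defining equation for $\SNmuTreddual R$ and reads off that $\Box$ is a member.

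First I would dispatch the four clauses (sn), (app), (suc) and (nrec) at once, since none of them requires the induction hypothesis. In clause (sn) we have $\SNmuTreddual\SNmuT = \{\Box\}$, which obviously contains $\Box$. In each of clauses (app), (suc) and (nrec), the set $\SNmuTreddual R$ is defined to be of the shape $\{\Box\} \cup \cdots$, so again $\Box \in \SNmuTreddual R$ holds outright by definition.

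The only case that appeals to the induction hypothesis is ($\bigcap$), where $R = \bigcap \mathbf R$ with $\emptyset \subset \mathbf R \subseteq \SNmuTredcand$, and $\SNmuTreddual{(\bigcap \mathbf R)} = \bigcup\{\SNmuTreddual T \separator T \in \mathbf R\}$. Here I would fix some $T \in \mathbf R$, which is possible precisely because $\mathbf R$ is nonempty, obtain $\Box \in \SNmuTreddual T$ from the induction hypothesis, and conclude $\Box \in \bigcup\{\SNmuTreddual T \separator T \in \mathbf R\} = \SNmuTreddual R$.

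There is essentially no obstacle to this argument; the one subtlety worth flagging is that the ($\bigcap$) case relies crucially on the side condition $\emptyset \subset \mathbf R$ built into the definition of reducibility candidates. This nonemptiness is exactly what furnishes a witness $T$ to which the induction hypothesis can be applied: were empty intersections admitted, $\SNmuTreddual R$ would be the empty union and the claim would fail.
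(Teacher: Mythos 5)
Your proof is correct, and it matches the paper's (implicit) reasoning exactly: the paper states this as a \emph{Fact} without proof precisely because $\Box$ appears outright in every defining clause of $\SNmuTreddual R$ except $(\bigcap)$, which is handled by induction using the nonemptiness condition $\emptyset \subset \mathbf R$, just as you do. Your remark that the side condition $\emptyset \subset \mathbf R$ is what makes the $(\bigcap)$ case go through is a worthwhile observation, and correctly identifies the one point where the statement could otherwise fail.
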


\begin{lemma}
\label{lemma:muT_snA_to_sn}
For each $R \in \SNmuTredcand$ we have $R = \SNmuTreddual R \to \SNmuT$.
\end{lemma}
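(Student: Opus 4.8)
The plan is to prove the identity $R = \SNmuTreddual R \to \SNmuT$ by induction on the generation of $R$ (Definition~\ref{def:muT_snA_red_cand}), verifying it in each of the five cases, by two inclusions where needed. Throughout I will rely on the context-composition identity $\cctx{EF}{t} \equiv \cctx E {\cctx F t}$, on $R \subseteq \SNmuT$ (Lemma~\ref{lemma:muT_snA_var}), and on $\Box \in \SNmuTreddual R$ (Fact~\ref{fact:muT_snA_box}). The latter two handle the $\Box$-component of every dual set uniformly: if $t \in R$ then $\cctx \Box t \equiv t \in \SNmuT$ since $R \subseteq \SNmuT$, and conversely any $t \in \SNmuTreddual R \to \SNmuT$ satisfies $t \in \SNmuT$ by instantiating the quantifier with $\Box$.

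The base case (sn) is immediate, since $\SNmuTreddual{\SNmuT} = \{\Box\}$ and $\{\Box\} \to \SNmuT = \SNmuT$. For the case $(\bigcap)$ I will first record the general distributivity law $(\bigcup_i \mathcal E_i) \to T = \bigcap_i(\mathcal E_i \to T)$, which holds directly from the definition of functional construction by swapping the two universal quantifiers. Applying it to $\SNmuTreddual{(\bigcap \mathbf R)} = \bigcup_{T \in \mathbf R}\SNmuTreddual T$ and then the induction hypothesis $\SNmuTreddual T \to \SNmuT = T$ for each member yields $\SNmuTreddual{(\bigcap \mathbf R)} \to \SNmuT = \bigcap_{T \in \mathbf R} T = \bigcap \mathbf R$.

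The three structural cases (app), (suc), (nrec) all follow one template, so I will carry out (app) in full and indicate that the other two are analogous with the evident change of elementary context. For $R = S \to T$ we have $\SNmuTreddual R = \{\Box\} \cup \{E'(\Box u) \separator u \in S,\ E' \in \SNmuTreddual T\}$. For $R \subseteq \SNmuTreddual R \to \SNmuT$: given $t \in R$ and a dual context, the $\Box$-case is handled as above, while for $E'(\Box u)$ I compute $\cctx{E'(\Box u)}{t} \equiv \cctx{E'}{tu}$; since $tu \in T$ and, by the induction hypothesis, $T = \SNmuTreddual T \to \SNmuT$, this lies in $\SNmuT$. For the converse, given $t \in \SNmuTreddual R \to \SNmuT$ and $u \in S$, I must show $tu \in T$; by the induction hypothesis it suffices to check $\cctx{E'}{tu} \in \SNmuT$ for every $E' \in \SNmuTreddual T$, and this is precisely the hypothesis instantiated at $E'(\Box u) \in \SNmuTreddual R$. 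The (suc) and (nrec) cases replace the elementary context $\Box u$ by $\sucT \Box$, respectively $\nrecT\ r\ s\ \Box$ with $r \in T$ and $s \in S \to T \to T$, with no further change.

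The argument is essentially bookkeeping of nested quantifiers, and I expect the only point requiring care to be the $(\bigcap)$ case, where the direction of the distributivity law (a union of dual sets becoming an intersection of candidates) must be applied correctly; everything else reduces to the composition identity together with a single appeal to the induction hypothesis on the sub-candidate $T$.
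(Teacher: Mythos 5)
Your proposal is correct and follows essentially the paper's own proof: induction on the generation of $R$, with the same treatment of the (sn) and ($\bigcap$) cases (the distributivity of $\bigcup$ over the functional construction) and the same combination of the induction hypothesis $T = \SNmuTreddual T \to \SNmuT$ with context composition in the structural cases. The only, harmless, difference is bookkeeping: you discharge the $\Box$-component of the dual set in a two-inclusion argument via $R \subseteq \SNmuT$ (Lemma~\ref{lemma:muT_snA_var}), whereas the paper writes a single equational chain and instead absorbs the conjunct $t \in \SNmuT$ by appealing to the nonemptiness of $\SNmuTreddual T$, $T$ and $S \to T \to T$ (Fact~\ref{fact:muT_snA_box} and Lemma~\ref{lemma:muT_snA_var}), which your phrasing never needs.
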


\begin{proof}
By induction on the generation of $R$. We consider some interesting
cases.
\begin{enumerate}
\item[(sn)] Let $R = \SNmuT$. We have $R = \{\Box\} \to \SNmuT$, so we are done.
\item[($\bigcap$)] Let $R = \bigcap \mathbf R$. By the induction hypothesis we 
	have $T = \SNmuTreddual T \to \SNmuT$ for each $T \in \mathbf R$.
	Therefore we have the following.
	\begin{flalign*}
	R
		&= \bigcap \{ T \separator T \in \mathbf R \} \\
	 	&= \bigcap \{ \SNmuTreddual  T \to \SNmuT \separator T \in \mathbf R \} \\
	 	&= \bigcap \{ \{ t \separator \forall E \in \SNmuTreddual T\ .\ \cctx E t \in \SNmuT \}
	 		\separator T \in \mathbf R \} \\
	 	&= \{ t \separator \forall T \in \mathbf R, E \in \SNmuTreddual T\ .\ \cctx E t \in \SNmuT \} \\
	 	&= \{ t \separator \forall E \in 
	 		\bigcup \{ \SNmuTreddual T \separator T \in \mathbf R \}\  .\ \cctx E t \in \SNmuT \} \\
	 	&= \bigcup\{ \SNmuTreddual T \separator T \in \mathbf R \} \to \SNmuT
	 \end{flalign*}
\item[(nrec)] Let $R = \{\nrecT\ r\ s\ \Box \separator r \in T, s \in S\to T\to T \} \to T$. 
	By the induction hypothesis we have $T = \SNmuTreddual T \to \SNmuT$. Therefore we 
	have the following.
	\begin{flalign*}
	 R
	 	&= \{\nrecT\ r\ s\ \Box \separator r \in T, s \in S\to T\to T \} \to T \\
	 	&= \{\nrecT\ r\ s\ \Box \separator r \in T, s \in S\to T\to T\} \to \SNmuTreddual T \to \SNmuT \\
	 	&= \{ t \separator \forall r \in T, s \in S\to T\to T\ .\ \nrecT\ r\ s\ t \in \SNmuTreddual T \to \SNmuT \} \\
	 	&= \{ t \separator \forall E \in \SNmuTreddual T,r \in T, s \in S\to T\to T\ .\ \cctx E {\nrecT\ r\ s\ t} \in \SNmuT \} \\
	 	&= \{ t \separator t \in \SNmuT \land \forall E \in \SNmuTreddual T,r \in T, s \in S\to T\to T\ .\ \cctx E {\nrecT\ r\ s\ t} \in \SNmuT \} \\
	 	&= \big(\{ \Box \} \cup \{ E(\nrecT\ r\ s\ \Box) \separator r \in T, s \in S\to T\to T,E \in \SNmuTreddual T \}\big) \to \SNmuT
	 \end{flalign*}	 
	The before last step holds because for all terms $t$, if
	$\cctx E {\nrecT\ r\ s\ t} \in \SNmuT$ for all $E \in \SNmuTreddual T$, $r \in T$, 
	$s \in S\to T\to T$, then also $t \in \SNmuT$. This is because
	$\SNmuTreddual T$, $T$ and $S\to T\to T$ are non-empty by 
	Fact~\ref{fact:muT_snA_box} and Lemma~\ref{lemma:muT_snA_var}.
	\qedhere
\end{enumerate}
\end{proof}

\begin{lemma}
\label{lemma:muT_snA_switch}
For each $R \in \SNmuTredcand$ we have $t \in R$ iff $\cctx E t \in \SNmuT$
for all $E \in \SNmuTreddual{R}$.
\end{lemma}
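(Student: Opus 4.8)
The plan is to observe that this lemma is an almost immediate consequence of Lemma~\ref{lemma:muT_snA_to_sn} together with the definition of the functional construction. The genuine work has already been done in establishing $R = \SNmuTreddual R \to \SNmuT$; what remains here is only to unfold that equation against Definition~\ref{def:muT_snA_func_cons}.

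Concretely, first I would invoke Lemma~\ref{lemma:muT_snA_to_sn} to rewrite the reducibility candidate as $R = \SNmuTreddual R \to \SNmuT$. Next I would unfold the functional construction $\mathcal E \to T$ from Definition~\ref{def:muT_snA_func_cons}, instantiated with $\mathcal E \defined \SNmuTreddual R$ and $T \defined \SNmuT$, to obtain
\[
	R = \SNmuTreddual R \to \SNmuT = \{\, t \separator \forall E \in \SNmuTreddual R\ .\ \cctx E t \in \SNmuT \,\}.
\]
The stated equivalence is then just the membership criterion for this set: $t \in R$ holds precisely when $\cctx E t \in \SNmuT$ for every $E \in \SNmuTreddual R$. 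No induction on the generation of $R$ is needed, since that induction is encapsulated in Lemma~\ref{lemma:muT_snA_to_sn}.

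I do not anticipate a real obstacle in this particular lemma; its role is to repackage the equality $R = \SNmuTreddual R \to \SNmuT$ as a convenient ``switching'' principle for use in the subsequent expansion and adequacy arguments. The only point worth a moment's care is to be explicit that the set $\SNmuTreddual R$ is never empty, so that the quantification over $E$ is meaningful and the right-to-left direction genuinely forces membership in $R$; this is guaranteed by Fact~\ref{fact:muT_snA_box}, which supplies $\Box \in \SNmuTreddual R$ for every $R \in \SNmuTredcand$.
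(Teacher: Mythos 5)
Your proposal is correct and matches the paper's proof exactly: both simply combine Lemma~\ref{lemma:muT_snA_to_sn} ($R = \SNmuTreddual R \to \SNmuT$) with the unfolding of the functional construction from Definition~\ref{def:muT_snA_func_cons}, with no induction needed. Your extra remark about $\Box \in \SNmuTreddual R$ is harmless but not required for the equivalence itself, since the stated iff holds regardless of whether $\SNmuTreddual R$ is inhabited.
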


\begin{proof}
We have $t \in R$ iff $t \in \SNmuTreddual R \to \SNmuT$ 
by Lemma~\ref{lemma:muT_snA_to_sn}, and $t \in \SNmuTreddual R \to \SNmuT$
iff $\cctx E t \in \SNmuT$ for all $E \in \SNmuTreddual R$ by 
Definition~\ref{def:muT_snA_func_cons}.
\end{proof}

Now, to prove strong normalization of $\to_A$, it remains to give 
an interpretation $\muTintp \rho \in \SNmuTredcand$ for each type $\rho$. 
As a first attempt, we could adapt the definition for $\lambda\Arrow$, which
we have given in the introduction of this section.
\begin{flalign*}
\muTintp{\natTypeT} &\defined \SNmuT \\
\muTintp{\sigma \to \tau} &\defined \muTintp \sigma \to \muTintp \tau
\end{flalign*}
Unfortunately, the interpretation of $\natTypeT$ does not contain enough
structure to prove the following properties.
\begin{enumerate}
\item If $t \in \SNmuT$, then $\sucT t \in \SNmuT$.
\item If $t \in \SNmuT$, $r \in S$ and $s \in \SNmuT \to S \to S$, then
	$\nrecT\ r\ s\ t \in S$.
\end{enumerate}
Here, the term $t$ could reduce to a term of the shape $\mu\alpha.c$ and is
thereby able to consume the surrounding $\sucT$ or $\nrecT$. To
define an interpretation of $\natTypeT$ that contains more structure
we introduce the following definition.

\begin{definition}
\label{lemma:muT_snA_nat_cand}
We define the collection $\mathcal N$ inductively as follows.
\begin{enumerate}
\item[\em (sn)] $\SNmuT \in \mathcal N$
\item[\em (suc)] If $S \in \mathcal N$, then $\{ \sucT \Box\} \to S \in \mathcal N$.
\item[\em (nrec)] If $S \in \mathcal N$ and $T \in \SNmuTredcand$, then 
	\mbox{$\{\nrecT\ r\ s\ \Box \separator r \in T, s \in S\to T\to T \} \to T \in \mathcal N$}.
\end{enumerate}
\end{definition}

\begin{fact}
$\mathcal N \subseteq \SNmuTredcand$
\end{fact}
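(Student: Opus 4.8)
The plan is to prove the inclusion by induction on the generation of $\mathcal N$, as given by Definition~\ref{lemma:muT_snA_nat_cand}. The key observation is that each of the three closure rules generating $\mathcal N$ is a special case of a corresponding closure rule generating $\SNmuTredcand$ (Definition~\ref{def:muT_snA_red_cand}): the rule (sn) of $\mathcal N$ is literally the rule (sn) of $\SNmuTredcand$, the rule (suc) of $\mathcal N$ matches (suc) of $\SNmuTredcand$, and the rule (nrec) of $\mathcal N$ matches (nrec) of $\SNmuTredcand$. So the whole argument amounts to checking that whenever a rule of $\mathcal N$ fires, the side-conditions demanded by the analogous rule of $\SNmuTredcand$ are met, which the induction hypothesis supplies.

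Concretely, I would argue as follows. In the base case $\SNmuT \in \mathcal N$, the rule (sn) of $\SNmuTredcand$ gives $\SNmuT \in \SNmuTredcand$ directly. In the step case for (suc), where the element of $\mathcal N$ is $\{\sucT \Box\} \to S$ arising from some $S \in \mathcal N$, the induction hypothesis yields $S \in \SNmuTredcand$, and then rule (suc) of $\SNmuTredcand$ (taking $T \defined S$) produces $\{\sucT \Box\} \to S \in \SNmuTredcand$. In the step case for (nrec), where the element of $\mathcal N$ is $\{\nrecT\ r\ s\ \Box \separator r \in T, s \in S \to T \to T\} \to T$ arising from $S \in \mathcal N$ and $T \in \SNmuTredcand$, the induction hypothesis again promotes $S$ to $S \in \SNmuTredcand$, and since $T \in \SNmuTredcand$ holds by hypothesis, rule (nrec) of $\SNmuTredcand$ applies to exactly these $S$ and $T$, giving the desired membership in $\SNmuTredcand$.

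There is essentially no obstacle here: the only point requiring any attention is the asymmetry in the (nrec) rule of $\mathcal N$, where the first argument $S$ is only assumed to lie in $\mathcal N$ (not a priori in $\SNmuTredcand$), whereas the (nrec) rule of $\SNmuTredcand$ demands $S \in \SNmuTredcand$. This gap is closed precisely by the induction hypothesis, so the inductive structure of the proof is what makes the two rule-sets line up. The second argument $T$ is already required to be a reducibility candidate in the formulation of $\mathcal N$'s (nrec) rule, so no promotion is needed there. All three cases are therefore immediate, and the inclusion $\mathcal N \subseteq \SNmuTredcand$ follows.
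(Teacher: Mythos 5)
Your proof is correct and is exactly the intended argument: the paper states this as a \emph{fact} without proof, and the evident justification is precisely your induction on the generation of $\mathcal N$, matching each of its rules (sn), (suc), (nrec) to the corresponding closure rule of $\SNmuTredcand$ and using the induction hypothesis to promote $S \in \mathcal N$ to $S \in \SNmuTredcand$ in the (suc) and (nrec) cases. You also correctly identify the one point of asymmetry (the first argument of (nrec)), so nothing is missing.
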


\begin{definition}
\label{lemma:muT_snA_type_intp}
The \emph{interpretation} $\muTintp \rho$ of a type $\rho$ is defined
as follows.
\begin{flalign*}
\muTintp \natTypeT & \defined \bigcap \mathcal N \\
\muTintp{\sigma \to \tau} &\defined \muTintp \sigma \to \muTintp \tau
\end{flalign*}
\end{definition}

\begin{fact}
\label{fact:muT_snA_intp_cand}
For each type $\rho$ we have $\muTintp \rho \in \SNmuTredcand$.
\end{fact}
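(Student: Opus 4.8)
The plan is to prove this by a straightforward induction on the structure of the type $\rho$, appealing to the closure conditions of Definition~\ref{def:muT_snA_red_cand} in each case. The two cases correspond exactly to the two clauses of Definition~\ref{lemma:muT_snA_type_intp}, so the induction writes itself once the closure rules are lined up against the definition of $\muTintp{\cdot}$.

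For the base case $\rho = \natTypeT$, recall that $\muTintp \natTypeT \defined \bigcap \mathcal N$. I would invoke the ($\bigcap$) clause of Definition~\ref{def:muT_snA_red_cand}, which requires that $\mathcal N$ be a non-empty subset of $\SNmuTredcand$. Non-emptiness is immediate, since the (sn) clause of Definition~\ref{lemma:muT_snA_nat_cand} puts $\SNmuT \in \mathcal N$; and the inclusion $\mathcal N \subseteq \SNmuTredcand$ is exactly the Fact stated just before the definition of the interpretation. Hence $\emptyset \subset \mathcal N \subseteq \SNmuTredcand$, and the ($\bigcap$) clause yields $\bigcap \mathcal N \in \SNmuTredcand$, as required.

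For the inductive case $\rho = \sigma \to \tau$, I would use $\muTintp{\sigma \to \tau} \defined \muTintp \sigma \to \muTintp \tau$. The induction hypothesis gives $\muTintp \sigma \in \SNmuTredcand$ and $\muTintp \tau \in \SNmuTredcand$, and then the (app) clause of Definition~\ref{def:muT_snA_red_cand} immediately delivers $\muTintp \sigma \to \muTintp \tau \in \SNmuTredcand$. This closes the induction.

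There is essentially no obstacle here: the whole content of the argument is that the two constructors used in defining $\muTintp{\cdot}$ (arbitrary intersection over $\mathcal N$, and the functional construction $\to$) are precisely two of the closure operations under which $\SNmuTredcand$ is generated. The only point deserving a moment's care is verifying the side condition of the ($\bigcap$) clause, namely that $\mathcal N$ is non-empty and contained in $\SNmuTredcand$; but both facts are available from earlier results and require no further work.
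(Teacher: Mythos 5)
Your proof is correct and is exactly the argument the paper intends: the paper records this as an unproven \emph{Fact}, and the implicit justification is precisely your induction on $\rho$, using the $(\bigcap)$ clause (with non-emptiness of $\mathcal N$ via its (sn) clause and the inclusion $\mathcal N \subseteq \SNmuTredcand$ from the preceding Fact) for $\natTypeT$, and the (app) clause for $\sigma \to \tau$. Nothing is missing.
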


\begin{lemma}
\label{lemma:muT_snA_zero}
For each $n \in \nat$ we have $\natenc n \in \muTintp \natTypeT$.
\end{lemma}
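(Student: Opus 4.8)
The plan is to unfold the definition $\muTintp\natTypeT = \bigcap\mathcal N$, so that the goal becomes: for every $S \in \mathcal N$ and every $n \in \nat$, $\natenc n \in S$. I would prove this by induction on the generation of $\mathcal N$, fixing $S$ and establishing $\natenc n \in S$ for all $n$ at once, so that the induction hypothesis supplies membership for every numeral in the smaller candidates. For the base case $S = \SNmuT$, each $\natenc n \equiv \sucT^n 0$ is a normal form, hence strongly normalizing by Fact~\ref{fact:muT_nf_sn}. For the case $S = \{\sucT\Box\}\to S'$ with $S' \in \mathcal N$, Definition~\ref{def:muT_snA_func_cons} reduces $\natenc n \in S$ to $\sucT\natenc n \equiv \natenc{n+1} \in S'$, which is immediate from the induction hypothesis applied to $S'$.

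The interesting case is (nrec), where $S = \{\nrecT\ r\ s\ \Box \separator r \in T, s \in S'\to T\to T\}\to T$ with $S' \in \mathcal N$ and $T \in \SNmuTredcand$. Here $\natenc n \in S$ unfolds to: $\nrecT\ r\ s\ \natenc n \in T$ for all $r \in T$ and $s \in S'\to T\to T$, and I would establish this by an inner induction on $n$. Throughout I would use Lemma~\ref{lemma:muT_snA_switch} to pass between membership in the candidate $T$ and the statement that $\cctx E{\cdot} \in \SNmuT$ for all $E \in \SNmuTreddual T$. For $n = 0$, the target $\nrecT\ r\ s\ 0 \in T$ follows by showing $\cctx E{\nrecT\ r\ s\ 0} \in \SNmuT$: since $r \in T$ we have $\cctx E r \in \SNmuT$ by the switching lemma, and since $S'\to T\to T$ is itself a reducibility candidate (using $\mathcal N \subseteq \SNmuTredcand$ together with clause (app)) and hence contained in $\SNmuT$ by Lemma~\ref{lemma:muT_snA_var}, we get $s \in \SNmuT$, so the expansion Lemma~\ref{lemma:muT_snA_nrec_exp}(1) applies.

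For the successor step I would reduce $\nrecT\ r\ s\ (\sucT\natenc n) \in T$, via the switching lemma and the expansion Lemma~\ref{lemma:muT_snA_nrec_exp}(2), to the membership $s\ \natenc n\ (\nrecT\ r\ s\ \natenc n) \in T$. This in turn holds because $\natenc n \in S'$ by the outer induction hypothesis, $\nrecT\ r\ s\ \natenc n \in T$ by the inner induction hypothesis, and $s \in S'\to T\to T$. I expect the main obstacle to be orchestrating this case correctly: one must interleave the two inductions (on the structure of $\mathcal N$ and on $n$) and repeatedly switch between the candidate formulation ``$\cdot \in T$'' and the strong-normalization formulation ``$\cctx E{\cdot} \in \SNmuT$'' in order to feed the expansion lemmas, all the while checking that $s$ can indeed be instantiated as an element of $S'\to T\to T$ with arguments $\natenc n$ and $\nrecT\ r\ s\ \natenc n$.
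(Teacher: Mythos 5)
Your proposal is correct and follows essentially the same route as the paper: an outer induction on the generation of $\mathcal N$ with all numerals handled simultaneously, the cases (sn) and (suc) dispatched directly, and an inner induction on $n$ in the (nrec) case that shuttles between candidate membership and strong normalization via Lemma~\ref{lemma:muT_snA_switch} to feed the expansion Lemma~\ref{lemma:muT_snA_nrec_exp}. Your justification that $s \in \SNmuT$ (via $\mathcal N \subseteq \SNmuTredcand$, closure under (app), and Lemma~\ref{lemma:muT_snA_var}) is in fact slightly more explicit than the paper's, which cites Lemma~\ref{lemma:muT_snA_var} without spelling out why $S' \to T \to T$ is a candidate.
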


\begin{proof}
In order to prove this result we have to show that $\natenc n \in R$ for all 
$R \in \mathcal N$ and $n \in \nat$. We proceed by induction on the generation of $R$. 
\begin{enumerate}
\item[(var)] Let $R = \SNmuT$. Now we have to show that $\natenc n \in \SNmuT$ 
	for all $n \in \nat$. However, $\natenc n$ is in normal form, so we
	certainly have $\natenc n \in \SNmuT$.
\item[(suc)] Let $R = \{ \sucT \Box\} \to S$. Now we have $\natenc n \in S$ for all $n \in \nat$
	by the induction hypothesis. It remains to show that $\sucT{\natenc n} \in S$ 
	for all $n \in \nat$. However, $\sucT{\natenc n} \equiv \natenc{n + 1}$, so the
	required result follows from the induction hypothesis.
\item[(nrec)] Let $R = \{\nrecT\ r\ s\ \Box \separator r \in T, s \in S\to T\to T \} \to T$.
	Now we have $\natenc n \in S$ for all $n \in \nat$ by the induction hypothesis.
	It remains to show that $\nrecT\ r\ s\ \natenc n \in T$ for all 
	$S \in \mathcal N$, $T \in \SNmuTredcand$, $r \in T$, $s \in S \to T \to T$ and $n \in \nat$. 
	We proceed by induction on $n$.
	\begin{enumerate}
	\item Let $n = 0$. We have $\cctx E r \in \SNmuT$ for all $E \in \SNmuTreddual T$
		by Lemma~\ref{lemma:muT_snA_switch} and $s \in \SNmuT$ by Lemma~\ref{lemma:muT_snA_var}.
		Hence $\cctx E {\nrecT\ r\ s\ 0} \in \SNmuT$ by Lemma~\ref{lemma:muT_snA_nrec_exp}
		and therefore $\nrecT\ r\ s\ 0 \in T$ by Lemma~\ref{lemma:muT_snA_switch}.
	\item Let $n > 0$. We have $\nrecT\ r\ s\ \natenc {n-1} \in T$ by the 
		induction hypothesis. Furthermore, because $s \in S \to T \to T$ and $\natenc {n-1} \in S$,
		we have $s\ \natenc {n-1}\ (\nrecT\ r\ s\ \natenc{n-1}) \in T$, so
		$\cctx E {s\ \natenc {n-1}\ (\nrecT\ r\ s\ \natenc {n-1})} \in \SNmuT$ 
		for all $E \in \SNmuTreddual T$ by Lemma~\ref{lemma:muT_snA_switch}. Therefore 
		$\cctx E {\nrecT\ r\ s\ (\sucT \natenc{n-1})} \in \SNmuT$ by 
		Lemma~\ref{lemma:muT_snA_nrec_exp}, so $\nrecT\ r\ s\ \natenc{n} \in T$
		by Lemma~\ref{lemma:muT_snA_switch}. \qedhere
\end{enumerate}
\end{enumerate}
\end{proof}

\begin{lemma}
\label{lemma:muT_snA_suc}
If $t \in \muTintp\natTypeT$, then $\sucT t \in \muTintp\natTypeT$.
\end{lemma}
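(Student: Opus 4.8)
The plan is to unfold the definition of $\muTintp\natTypeT$ and then exploit the fact that the collection $\mathcal N$ of Definition~\ref{lemma:muT_snA_nat_cand} was tailored precisely so as to absorb a leading $\sucT$. Recall that $\muTintp\natTypeT = \bigcap \mathcal N$ (Definition~\ref{lemma:muT_snA_type_intp}), so in order to prove $\sucT t \in \muTintp\natTypeT$ it suffices to show $\sucT t \in R$ for every $R \in \mathcal N$. I would therefore fix an arbitrary $R \in \mathcal N$ and aim to produce $\sucT t \in R$ directly, without any induction on the generation of $R$.

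The key observation is that the $(\mathrm{suc})$ clause of Definition~\ref{lemma:muT_snA_nat_cand} immediately yields $\{\sucT\Box\} \to R \in \mathcal N$, since $R \in \mathcal N$. Because $t \in \muTintp\natTypeT = \bigcap \mathcal N$, in particular $t \in \{\sucT\Box\} \to R$. Unfolding the functional construction (Definition~\ref{def:muT_snA_func_cons}), this membership says exactly that $\cctx{E}{t} \in R$ for every $E \in \{\sucT\Box\}$, that is, $\cctx{\sucT\Box}{t} \in R$. Since context substitution gives $\cctx{\sucT\Box}{t} \equiv \sucT t$, we conclude $\sucT t \in R$. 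As $R$ was arbitrary, $\sucT t \in \bigcap \mathcal N = \muTintp\natTypeT$, as required.

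There is in fact no real obstacle here: the entire difficulty has been front-loaded into the design of the interpretation. In contrast to the naive choice $\muTintp\natTypeT = \SNmuT$ discussed just before this lemma --- for which $t \in \SNmuT$ does \emph{not} obviously give $\sucT t \in \SNmuT$, because $t$ may reduce to some $\mu\alpha.c$ that then swallows the surrounding $\sucT$ --- the refined interpretation $\bigcap\mathcal N$ is closed under the $(\mathrm{suc})$ rule by construction, and Lemma~\ref{lemma:muT_snA_switch} (resting on Lemma~\ref{lemma:muT_snA_to_sn}) licenses reading candidate membership as a statement about contexts. The only points worth double-checking are purely bookkeeping: that $\{\sucT\Box\} \to R$ is literally the candidate produced by the $(\mathrm{suc})$ clause with $S := R$, and that $\cctx{\sucT\Box}{t}$ denotes $\sucT t$ under the context substitution. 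Both are immediate, so the proof collapses to the single appeal to closure of $\mathcal N$ under $(\mathrm{suc})$ together with $t \in \bigcap\mathcal N$.
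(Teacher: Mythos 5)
Your proof is correct and is essentially identical to the paper's own argument: both reduce the claim to showing $\sucT t \in R$ for arbitrary $R \in \mathcal N$, apply the $(\mathrm{suc})$ closure rule to get $\{\sucT\Box\}\to R \in \mathcal N$, and then use $t \in \bigcap\mathcal N$ to conclude via the functional construction. Your surrounding commentary (e.g.\ the appeal to Lemma~\ref{lemma:muT_snA_switch}) is not needed for the argument but does not detract from it.
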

	
\begin{proof}
Assume that $t \in \muTintp\natTypeT$. This means, $t \in R$ for all 
$R \in \mathcal N$. Now we have to prove that $\sucT t \in R$ for all $R \in \mathcal N$.
But for all $R \in \mathcal N$ we have $\{ \sucT \Box \} \to R \in \mathcal N$,
hence $t \in \{ \sucT \Box \} \to R$ by assumption and therefore $\sucT t \in R$.
\end{proof}

\begin{lemma}
\label{lemma:muT_snA_nrec}
If $r \in \muTintp \rho$, $s \in \muTintp{\natTypeT \to \rho \to \rho}$ and $t \in \muTintp\natTypeT$,
then $\nrecT\ r\ s\ t \in \muTintp \rho$.
\end{lemma}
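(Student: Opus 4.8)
The plan is to exploit the fact that $t\in\muTintp\natTypeT$ unfolds to $t\in\bigcap\mathcal N$, i.e.\ $t$ lies in \emph{every} reducibility candidate of $\mathcal N$, and then to select precisely the candidate of ``$\nrecT$-shape'' that delivers the conclusion. Set $T\defined\muTintp\rho$; by Fact~\ref{fact:muT_snA_intp_cand} we have $T\in\SNmuTredcand$, and unfolding Definition~\ref{lemma:muT_snA_type_intp} the hypothesis on $s$ reads $s\in\muTintp\natTypeT\to T\to T$ (since $\muTintp{(\natTypeT\to\rho\to\rho)}=\muTintp\natTypeT\to\muTintp\rho\to\muTintp\rho$), while $r\in T$. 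Consider the candidate
\[
  R\defined\{\,\nrecT\ r'\ s'\ \Box \mid r'\in T,\ s'\in\muTintp\natTypeT\to T\to T\,\}\to T .
\]
This is exactly the candidate produced by the (nrec) clause of Definition~\ref{lemma:muT_snA_nat_cand} from the choices $S\defined\muTintp\natTypeT$ and $T$. First I would check that $R\in\mathcal N$; granting this, $t\in\bigcap\mathcal N\subseteq R$, and unfolding membership in $R$ gives $\nrecT\ r'\ s'\ t\in T$ for all $r'\in T$ and $s'\in\muTintp\natTypeT\to T\to T$. Instantiating $r'\defined r$ and $s'\defined s$ then yields $\nrecT\ r\ s\ t\in T=\muTintp\rho$, which is the claim. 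This is the same move that drove the proof of Lemma~\ref{lemma:muT_snA_suc}, now transplanted from the (suc) to the (nrec) clause.

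The key step, and the main obstacle, is the membership $R\in\mathcal N$. By the (nrec) clause it suffices that $T\in\SNmuTredcand$ (already available) together with $S=\muTintp\natTypeT\in\mathcal N$; equivalently, one must show $\bigcap\mathcal N\in\mathcal N$. This is precisely the point that forces the interpretation of $\natTypeT$ to be an intersection of the whole family $\mathcal N$ rather than plain $\SNmuT$: only when $\muTintp\natTypeT$ itself appears as the admissible argument-candidate $S$ can a general $s$ of type $\natTypeT\to\rho\to\rho$, which is merely assumed to behave well on $\muTintp\natTypeT$, legitimately be supplied to the recursor. No cruder choice succeeds: every $S\in\mathcal N$ satisfies $\muTintp\natTypeT=\bigcap\mathcal N\subseteq S$, so since $S\mapsto(S\to T\to T)$ is anti-monotone, the set $\SNmuT\to T\to T$ (the choice $S=\SNmuT$) is a proper subset that need not contain $s$, whereas $S=\muTintp\natTypeT$ matches the hypothesis on $s$ on the nose.

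To discharge $\bigcap\mathcal N\in\mathcal N$ I would aim to exhibit $\bigcap\mathcal N$ as a least element of the family, a fixpoint-style property. The operation $S\mapsto\{\sucT\Box\}\to S$ commutes with intersections, which controls the (suc)-generated members; the delicate case is the (nrec)-generated members, where $S\mapsto\{\nrecT\ r\ s\ \Box\mid r\in T, s\in S\to T\to T\}\to T$ is merely monotone in $S$ and does \emph{not} commute with intersection on the nose. Consequently this step cannot be settled by a set-theoretic identity alone and must be argued through the switch Lemma~\ref{lemma:muT_snA_switch} together with the expansion lemmas (Lemma~\ref{lemma:muT_snA_nrec_exp} and Corollary~\ref{corollary:muT_snA_mu}), exactly as numerals were handled in Lemma~\ref{lemma:muT_snA_zero}, so that one reasons about strong normalization of $\cctx E {\nrecT\ r\ s\ t}$ directly rather than about abstract candidate inclusions. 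I expect essentially all of the difficulty to reside here; once this closure property is in place, the lemma reduces to the two-line instantiation described above.
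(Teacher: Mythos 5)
Your core argument coincides with the paper's own proof of Lemma~\ref{lemma:muT_snA_nrec}: set $T = \muTintp\rho \in \SNmuTredcand$ (Fact~\ref{fact:muT_snA_intp_cand}), note $\muTintp{(\natTypeT\to\rho\to\rho)} = \muTintp\natTypeT \to T \to T$, apply the (nrec) clause of Definition~\ref{lemma:muT_snA_nat_cand} with $S = \muTintp\natTypeT$, use $t \in \bigcap\mathcal N \subseteq R$, and instantiate at $r$ and $s$. Your side remark that no cruder choice of $S$ can work, by anti-monotonicity of $S \mapsto S \to T \to T$, is correct and is indeed the design reason for taking $\muTintp\natTypeT = \bigcap\mathcal N$ rather than $\SNmuT$.

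Where you diverge is on the premise $\muTintp\natTypeT \in \mathcal N$. The paper treats this as immediate (``by Definition~\ref{lemma:muT_snA_type_intp}'') and its entire proof is two sentences; you declare it the main obstacle and leave it undischarged. Your suspicion is not baseless: as literally stated, the inductive definition of $\mathcal N$ has no clause producing $\bigcap\mathcal N$, so the paper's justification implicitly reads $\mathcal N$ as closed under nonempty intersections. Adding such a clause is harmless --- every argument in this section that proceeds by induction on the generation of a member of $\mathcal N$ (Lemma~\ref{lemma:muT_snA_zero}, the fact $\mathcal N \subseteq \SNmuTredcand$) extends to the intersection case exactly as in the ($\bigcap$) case of Lemma~\ref{lemma:muT_snA_var} --- and that one-line amendment is the correct repair. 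Your proposed route, by contrast, would not succeed: exhibiting $\bigcap\mathcal N$ as a least \emph{member} of the inductively generated family is hopeless, since each member arises from a finite generation tree and deeper (suc)/(nrec) nestings yield ever smaller candidates, so no least element should be expected; and the machinery of Lemmas~\ref{lemma:muT_snA_switch} and~\ref{lemma:muT_snA_nrec_exp} (and Corollary~\ref{corollary:muT_snA_mu}) is of the wrong shape for this goal. Those lemmas establish \emph{term-level} facts --- that particular terms such as numerals or variables-in-contexts belong to each $R \in \mathcal N$, by induction on the generation of $R$ --- whereas $\bigcap\mathcal N \in \mathcal N$ is a closure property of the family itself, which normalization arguments about $\cctx E {\nrecT\ r\ s\ t}$ cannot deliver. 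In short: your proof is the paper's proof, the step you isolate is genuinely the load-bearing one (and the paper is arguably too quick there), but it is settled by adjusting or re-reading Definition~\ref{lemma:muT_snA_nat_cand}, not by the program you outline.
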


\begin{proof}
We have $\muTintp\natTypeT \in \mathcal N$ by Definition~\ref{lemma:muT_snA_type_intp}, 
so if $t \in \muTintp\natTypeT$, then $\nrecT\ r\ s\ t \in T$ for all 
$T \in \SNmuTredcand$, $r \in T$ and $s \in \muTintp\natTypeT \to T \to T$
by Definition~\ref{lemma:muT_snA_nat_cand}. Also $\muTintp \rho \in \SNmuTredcand$ 
by Fact~\ref{fact:muT_snA_intp_cand} and  
$\muTintp{\natTypeT\to\rho\to\rho} = \muTintp\natTypeT\to\muTintp\rho\to\muTintp\rho$ ,
hence $\nrecT\ r\ s\ t \in \muTintp \rho$.
\end{proof}

\begin{theorem}
\label{theorem:muT_snA_aux}
Let $\mujudg {x_1 : \rho_1, \ldots, x_n : \rho_n} {\alpha_1 : \sigma_1, \ldots, \alpha_m : \sigma_m} t \tau$ such that
$r_i \in \muTintp{\rho_i}$ for all $1 \le i \le n$ and $E_j \in \SNmuTreddual{\muTintp{\sigma_j}}$ 
for all $1 \le j \le m$, then:
\[
	t[x_1 := r_1, \ldots, x_n := r_n, \alpha_1 := \alpha_1\ E_1, \ldots, \alpha_m := \alpha_m\ E_m] \in \muTintp \tau.
\]
\end{theorem}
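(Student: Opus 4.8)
The plan is to prove this by mutual induction on the typing derivation, simultaneously with the companion statement for commands: if $\musjudg{x_1:\rho_1,\ldots,x_n:\rho_n}{\alpha_1:\sigma_1,\ldots,\alpha_m:\sigma_m}{c}$ under the same hypotheses on the $r_i$ and $E_j$, then $c\theta \in \SNmuT$. Here I write $\theta$ for the combined (ordinary and structural) substitution $[x_1:=r_1,\ldots,x_n:=r_n,\alpha_1:=\alpha_1 E_1,\ldots,\alpha_m:=\alpha_m E_m]$ and $t\theta$, $c\theta$ for its application. Since each interpretation $\muTintp\tau$ is a reducibility candidate by Fact~\ref{fact:muT_snA_intp_cand}, I may pass freely between $w \in \muTintp\tau$ and the statement ``$\cctx E w \in \SNmuT$ for all $E \in \SNmuTreddual{\muTintp\tau}$'' using the switching Lemma~\ref{lemma:muT_snA_switch}; this is the engine that drives the whole argument.

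The term cases are routine applications of the closure lemmas already established. For \textbf{(var)} the conclusion is immediate from $r_i \in \muTintp{\rho_i}$; for \textbf{(app)} it is the definition of the functional construction applied to the two induction hypotheses; and for \textbf{(zero)}, \textbf{(suc)}, \textbf{(nrec)} I invoke Lemma~\ref{lemma:muT_snA_zero}, Lemma~\ref{lemma:muT_snA_suc} and Lemma~\ref{lemma:muT_snA_nrec} respectively, after applying the induction hypothesis to the immediate subterms. The \textbf{(lambda)} case $t \equiv \lm x.t'$ of type $\sigma\to\tau'$ needs slightly more: to show $\lm x.(t'\theta) \in \muTintp\sigma \to \muTintp{\tau'}$ I pick $s \in \muTintp\sigma$ (so $s \in \SNmuT$ by Lemma~\ref{lemma:muT_snA_var}), note that $\subst{(t'\theta)} x s \in \muTintp{\tau'}$ by the induction hypothesis with the extended substitution, and then combine Lemma~\ref{lemma:muT_snA_lambda} with the switching lemma to conclude $(\lm x.t'\theta)\,s \in \muTintp{\tau'}$.

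The two genuinely delicate cases are \textbf{(passivate)} and \textbf{(activate)}, where the context machinery pays off. For \textbf{(passivate)}, $c \equiv [\alpha_k]q$ with $q$ of type $\sigma_k$; structural substitution turns $c\theta$ into $[\alpha_k]\cctx{E_k}{q\theta}$. The induction hypothesis gives $q\theta \in \muTintp{\sigma_k}$, and since $E_k \in \SNmuTreddual{\muTintp{\sigma_k}}$ the switching lemma yields $\cctx{E_k}{q\theta} \in \SNmuT$; because under $\to_A$ a command $[\alpha_k]w$ reduces only within $w$ (the $\to_{\mu i}$ rule belongs to $\to_B$), prefixing the passivate preserves strong normalization. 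For \textbf{(activate)}, $t \equiv \mu\alpha.c$ of type $\rho$, I must show $\mu\alpha.(c\theta) \in \muTintp\rho$, i.e.\ $\cctx E{\mu\alpha.(c\theta)} \in \SNmuT$ for every $E \in \SNmuTreddual{\muTintp\rho}$. Fixing such an $E$, I extend $\theta$ by $\alpha:=\alpha E$, which is a legitimate substitution for the command judgment $\musjudg{\Gamma}{\Delta,\alpha:\rho}{c}$ precisely because $E \in \SNmuTreddual{\muTintp\rho}$. The command induction hypothesis then gives $\subst{(c\theta)}\alpha{\alpha E} \in \SNmuT$, hence $\mu\alpha.\subst{(c\theta)}\alpha{\alpha E} \in \SNmuT$ (a $\mu$-binder over an $\SNmuT$ command is itself in $\SNmuT$, as $\to_{\mu\eta}$ lies in $\to_B$). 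Running this expansion backwards via Corollary~\ref{corollary:muT_snA_mu}, with empty outer context and $F \equiv E$, converts it into the desired $\cctx E{\mu\alpha.(c\theta)} \in \SNmuT$.

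The main obstacle, and the conceptual heart of the proof, is exactly this activate step: re-absorbing a strongly-normalizing structural substitution back into $\cctx E{\mu\alpha.(c\theta)}$ is what tames the unpredictable arity of the $\mu$-binder. Two points require care. First, applying Corollary~\ref{corollary:muT_snA_mu} needs the side condition $E \in \SNmuTcctx$, so I would first establish the auxiliary fact $\SNmuTreddual R \subseteq \SNmuTcctx$ for every $R \in \SNmuTredcand$; this follows by a straightforward induction on the generation of $R$, since each constructor of $\SNmuTreddual R$ only adds components ($u \in S$, $r \in T$, $s \in S\to T\to T$) that are strongly normalizing by Lemma~\ref{lemma:muT_snA_var}. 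Second, the bookkeeping of the extended substitution $\alpha:=\alpha E$ must line up with the command induction hypothesis: one must invoke the Barendregt convention to keep $\alpha$ fresh for $\theta$ and check that $\theta$ commutes with $[\alpha:=\alpha E]$ so that $c\theta$ followed by $\alpha:=\alpha E$ is genuinely the instance of $c$ under the extended substitution. These are exactly the places where a careless treatment of simultaneous structural substitution would introduce errors.
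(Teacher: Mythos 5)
Your proof is correct and follows essentially the same route as the paper's: mutual induction on the typing derivations for terms and commands, using the switching Lemma~\ref{lemma:muT_snA_switch} throughout, Lemma~\ref{lemma:muT_snA_lambda} for the lambda case, and Corollary~\ref{corollary:muT_snA_mu} with the substitution extended by $\alpha := \alpha\ E$ for the activate case. You even make explicit two steps the paper leaves implicit, namely that $\SNmuTreddual R \subseteq \SNmuTcctx$ for every $R \in \SNmuTredcand$ (needed to legitimize the appeal to Corollary~\ref{corollary:muT_snA_mu}) and that prefixing a command with $[\alpha]$ or binding it under $\mu\alpha$ preserves $\to_A$-strong normalization since $\to_{\mu i}$ and $\to_{\mu\eta}$ belong to $\to_B$.
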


\begin{proof}
Abbreviate $\Gamma = x_1 : \rho_1, \ldots, x_n : \rho_n$, $\Delta = \alpha_1 : \sigma_1, \ldots, \alpha_m : \sigma_m$
, with  
\mbox{$t' \equiv t[x_1 := r_1, \ldots, x_n := r_n, \alpha_1 := \alpha_1\ E_1, \ldots, \alpha_m := \alpha_m\ E_m]$}
, and \\
\mbox{$c' \equiv c[x_1 := r_1, \ldots, x_n := r_n, \alpha_1 := \alpha_1\ E_1, \ldots, \alpha_m := \alpha_m\ E_m]$}
. 
Now by mutual induction we prove that $\mujudg \Gamma \Delta t \tau$ implies $t' \in \muTintp \tau$
and that $\musjudg \Gamma \Delta c$ implies $c' \in \SNmuTc$.
\begin{enumerate}
\item[(var)] Let $\mujudg \Gamma \Delta x \sigma$ with $x : \sigma \in \Gamma$.
	Now we have $x' \in \muTintp \sigma$ by assumption.
\item[($\lambda$)] Let $\mujudg \Gamma \Delta {\lm x : \sigma.t} {\sigma \to \tau}$
	with $\mujudg {\Gamma, x : \sigma} \Delta t \tau$. Moreover let 
	$u \in \muTintp\rho$ and $E \in \SNmuTreddual{\muTintp\tau}$. Now we have 
	$\subst {t'} x u \in \muTintp{\tau}$ by the induction hypothesis and so
	$\cctx E {\subst {t'} x u} \in \SNmuT$	by Lemma~\ref{lemma:muT_snA_switch}. 
	Therefore $\cctx E {(\lm x.t')u} \in \SNmuT$
	by Lemma~\ref{lemma:muT_snA_lambda} and hence $(\lm x.t')u \in \muTintp\tau$  
	by Lemma~\ref{lemma:muT_snA_switch}, so $\lm x.t' \in \muTintp{\sigma \to \tau}$ by Definition~\ref{def:muT_snA_func_cons}.
\item[(app)] Let $\mujudg \Gamma \Delta {ts} \tau$ with
	$\mujudg \Gamma \Delta t {\sigma \to \tau}$	and  $\mujudg \Gamma \Delta s \sigma$. 
	Now we have $t' \in \muTintp{\sigma \to \tau} = \muTintp\sigma \to \muTintp\tau$
	and $s' \in \muTintp\sigma$ by the induction hypothesis, hence 
	$t's' \in \muTintp\tau$ by Definition~\ref{def:muT_snA_func_cons}.
\item[(zero)] Let $\mujudg \Gamma \Delta 0 \natTypeT$. Now we have $0 \in \muTintp \natTypeT$ by Lemma~\ref{lemma:muT_snA_zero}.
\item[(suc)] Let $\mujudg \Gamma \Delta {\sucT t} \natTypeT$ with
	$\mujudg \Gamma \Delta t \natTypeT$. Now we have $t' \in \muTintp\natTypeT$ 
	by the induction hypothesis and therefore $\sucT t' \in \muTintp\natTypeT$
	by Lemma~\ref{lemma:muT_snA_suc}.
\item[(nrec)] Let $\mujudg \Gamma \Delta {\nrecT\ r\ s\ t} \rho$ with
	$\mujudg \Gamma \Delta r \rho$, $\mujudg \Gamma \Delta s {\natTypeT\to\rho\to\rho}$ and
	$\mujudg \Gamma \Delta t \natTypeT$. Now we have $r' \in \muTintp \rho$, 
	\mbox{$s' \in \muTintp{\natTypeT \to \rho \to \rho}$}
	and $t' \in \muTintp \natTypeT$ by the induction hypothesis. Therefore
	$\nrecT\ r'\ s'\ t' \in \muTintp\rho$ by Lemma~\ref{lemma:muT_snA_nrec}.
\item[(act)] Let $\mujudg \Gamma \Delta {\mu\alpha : \rho.c} \rho$ with
	$\musjudg \Gamma {\Delta. \alpha : \rho} c$. Moreover let
	\mbox{$E \in \SNmuTreddual{\muTintp\rho}$}. Now we have $\subst {c'} \alpha {\alpha E} \in \SNmuTc$
	by the induction hypothesis. Hence $\mu\alpha.\subst {c'} \alpha {\alpha E} \in \SNmuT$ 
	and therefore $\cctx E {\mu\alpha.c'} \in \SNmuT$ by Corollary~\ref{corollary:muT_snA_mu},
	so $\mu\alpha.c' \in \muTintp \rho$ by Lemma~\ref{lemma:muT_snA_switch}.
\item[(pas)] Let $\musjudg \Gamma \Delta {[\alpha]t}$ with $\alpha : \sigma \in \Delta$
	and $\mujudg \Gamma \Delta t \sigma$. Now we have $t' \in \muTintp \sigma$
	by the induction hypothesis. Also, we have a context 
	\mbox{$E \in \SNmuTreddual{\muTintp\sigma}$}	by assumption. Therefore
	$\cctx E {t'} \in \SNmuT$ by Lemma~\ref{lemma:muT_snA_switch} and so
	$[\alpha]\cctx E {t'} \in \SNmuTc$ because \mbox{$([\alpha]t)' = [\alpha]\cctx E {t'}$}.\qedhere
\end{enumerate}
\end{proof}

\begin{corollary}
\label{corollary:muT_snA}
If $\mujudg \Gamma \Delta t \rho$, then $t \in \SNmuT[A]$.
\end{corollary}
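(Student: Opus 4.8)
The plan is to obtain the corollary as an immediate instantiation of the adequacy result Theorem~\ref{theorem:muT_snA_aux}, applied to the \emph{identity} substitution. Writing $\Gamma = x_1 : \rho_1, \ldots, x_n : \rho_n$ and $\Delta = \alpha_1 : \sigma_1, \ldots, \alpha_m : \sigma_m$ for the two environments of the hypothesis $\mujudg \Gamma \Delta t \rho$, I would choose $r_i \defined x_i$ for each $\lambda$-variable and $E_j \defined \Box$ for each $\mu$-variable. The key bookkeeping observation is that with these choices the substituted term is syntactically $t$ itself: ordinary substitution of $x_i$ for $x_i$ does nothing, and the structural substitution $\subst t {\alpha_j} {\alpha_j\ \Box}$ merely rewrites each command $[\alpha_j]q$ to $[\alpha_j]\cctx \Box {q'} \equiv [\alpha_j]q'$, i.e.\ acts as the identity up to the recursive traversal of the term.

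To apply Theorem~\ref{theorem:muT_snA_aux} I must discharge its two families of side conditions. First, $\muTintp{\rho_i} \in \SNmuTredcand$ by Fact~\ref{fact:muT_snA_intp_cand}, so Lemma~\ref{lemma:muT_snA_var}(2), instantiated with the empty context $\Box \in \SNmuTcctx$, gives $x_i \equiv \cctx \Box {x_i} \in \muTintp{\rho_i}$; this is exactly the required $r_i \in \muTintp{\rho_i}$. Secondly, the required $E_j \in \SNmuTreddual{\muTintp{\sigma_j}}$ is precisely Fact~\ref{fact:muT_snA_box}. With both families verified, Theorem~\ref{theorem:muT_snA_aux} delivers $t \in \muTintp \rho$.

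Finally I would conclude by membership: $\muTintp \rho \subseteq \SNmuT$ by Lemma~\ref{lemma:muT_snA_var}(1), so $t \in \SNmuT$, which, since all subscripts have been suppressed throughout this subsection, is exactly the desired $t \in \SNmuT[A]$.

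I expect essentially no obstacle here, since the whole weight of the strong normalization argument for $\to_A$ has already been absorbed into the reducibility-candidate construction and the adequacy Theorem~\ref{theorem:muT_snA_aux}. The only point demanding a moment's care is the observation that the identity structural substitution leaves $t$ unchanged, so that the term whose membership in $\muTintp \rho$ we establish is genuinely $t$ and not merely some renaming of it.
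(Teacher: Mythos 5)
Your proposal is correct and coincides with the paper's own proof: both instantiate Theorem~\ref{theorem:muT_snA_aux} with the identity substitution ($r_i \defined x_i$, $E_j \defined \Box$), discharging the side conditions via Lemma~\ref{lemma:muT_snA_var} and Fact~\ref{fact:muT_snA_box}, and then conclude $t \in \SNmuT[A]$ from $\muTintp\rho \subseteq \SNmuT$ using Fact~\ref{fact:muT_snA_intp_cand} and Lemma~\ref{lemma:muT_snA_var}(1). Your explicit remark that the identity structural substitution $\subst t {\alpha_j} {\alpha_j\ \Box}$ leaves $t$ syntactically unchanged is a detail the paper leaves implicit, but it is the right observation and raises no issue.
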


\begin{proof}
We have $x_i \in \muTintp{\rho_i}$ for each $x_i : \rho_i \in \Gamma$ by Lemma~\ref{lemma:muT_snA_var}
and $\Box \in \SNmuTreddual{\muTintp{\sigma_j}}$ for each $\alpha_j : \sigma_j \in \Delta$ by 
Fact~\ref{fact:muT_snA_box}. Therefore $t \in \muTintp \rho$
by Theorem~\ref{theorem:muT_snA_aux} and hence $t \in \SNmuT[A]$ by 
Fact~\ref{fact:muT_snA_intp_cand} and Lemma~\ref{lemma:muT_snA_var}.
\end{proof}

\subsection{Strong normalization of \texorpdfstring{\(\to_{AB}\)}{(AB)}}
\label{section:muT_snB}
In this section we prove that $\to_B$ is strongly normalizing and that 
$\to_A$-steps can be advanced. Together with strong normalization of $\to_A$
this is sufficient to prove strong normalization of $\to_{AB}$. Proving
strong normalization of $\to_{AB}$ from $\to_A$ and $\to_B$ is not specific 
to \lambdamuT{}. Krebbers\footnote{The \Coq{} proof is available at 
	\href{http://robbertkrebbers.nl/misc/sn_commute.html}
		{\texttt{http://robbertkrebbers.nl/misc/sn\_commute.$\{$v,html$\}$}}.} 
provides a proof of this result based on abstract relations in the \Coq{} proof 
assistant.

\begin{lemma}
\label{lemma:muT_snB}
For each term $t$ we have $t \in \SNmuT[B]$.
\end{lemma}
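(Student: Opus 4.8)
The plan is to prove that $\to_B$ is strongly normalizing by exhibiting a simple size measure that strictly decreases with every $\to_B$-step; no reducibility machinery is needed here, since $\to_B$ consists only of the two ``contracting'' rules $\to_{\mu\eta}$ and $\to_{\mu i}$. First I would define the size $|t|$ of a term (and $|c|$ of a command) by the obvious structural recursion that counts one for each constructor, so that in particular $|\mu\alpha.c| = 1 + |c|$ and $|[\alpha]t| = 1 + |t|$.

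The crucial observation is that both $\to_B$-rules \emph{decrease} this measure. For $\to_{\mu\eta}$ we immediately have $|\mu\alpha.[\alpha]t| = |t| + 2 > |t|$. For $\to_{\mu i}$ the redex $[\alpha]\mu\beta.c$ contracts to $\subst c \beta {\alpha \Box}$, and here the structural substitution carries the \emph{trivial} context $\Box$, so that (using $\cctx \Box q \equiv q$ in the passivate clause) it merely renames the free occurrences of the $\mu$-variable $\beta$ into $\alpha$ without inserting anything into any command. I would therefore first establish the auxiliary lemma $|\subst c \beta {\alpha \Box}| = |c|$ by a routine induction on $c$; this yields $|[\alpha]\mu\beta.c| = |c| + 2 > |c| = |\subst c \beta {\alpha \Box}|$. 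It is exactly at this point that it matters that the potentially size-\emph{increasing} structural substitutions, namely those of $\to_{\mu R}$, $\to_{\mu\sucT}$ and $\to_{\mu\natTypeT}$, which all carry a nontrivial context, have been relegated to $\to_A$ and so never occur in a $\to_B$-step.

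Next, since $|{\cdot}|$ is additive over the term structure, the compatible closure preserves strict decrease: if $t \to_B t'$ by contracting a redex at some position, then replacing that subexpression by a strictly smaller one strictly decreases the total size, so $|t'| < |t|$; formally this is a direct induction on the derivation of $t \to_B t'$. Finally I would conclude $t \in \SNmuT[B]$ for every term $t$ by well-founded induction on $|t| \in \nat$: by the inductive characterization of $\SNmuT[B]$ it suffices to show that every $t'$ with $t \to_B t'$ lies in $\SNmuT[B]$, which follows from the induction hypothesis because $|t'| < |t|$. I do not expect any genuine obstacle; the entire content of the lemma lies in the single observation that the two $B$-rules are size-decreasing, which in turn rests on the size-preservation of the trivial-context structural substitution used by $\to_{\mu i}$.
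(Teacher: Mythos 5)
Your proposal is correct and takes essentially the same approach as the paper, whose entire proof is the one-sentence observation that a $\to_{\mu\eta}$- or $\to_{\mu i}$-step strictly decreases the size of the term. Your elaboration---in particular the auxiliary lemma that the trivial-context structural substitution $\subst c \beta {\alpha\ \Box}$ preserves size, which is precisely why $\to_{\mu i}$ is size-decreasing while the context-carrying rules relegated to $\to_A$ are not---correctly fills in the details the paper leaves implicit.
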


\begin{proof}
By performing a $\to_{\mu\eta}$ or $\to_{\mu i}$-reduction step on $t$, the term
$t$ reduces strictly in its size and therefore $\to_B$-reduction is strongly
normalizing.
\end{proof}

\begin{lemma}
\label{lemma:muT_sn_postponement}
A single $\to_A$-reduction step can be advanced. That means, if \mbox{$t_1 \to_B t_2 \to_A t_3$},
then there is a $t_4$ such that the following diagram commutes.
\[\xymatrix{
	t_1 \ar_A@{.>}[d] \ar^B@{->}[r] & t_2 \ar^A@{->}[d]\\
	t_4 \ar_{AB}@{.>>}[r] & t_3
}\]
\end{lemma}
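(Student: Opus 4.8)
The plan is to perform a case analysis on the positions of the contracted $\to_B$-redex $P$ in the step $t_1 \to_B t_2$ and the contracted $\to_A$-redex $Q$ in the step $t_2 \to_A t_3$. Two structural observations drive the argument. First, neither $\to_{\mu\eta}$ nor $\to_{\mu i}$ duplicates a subterm: $\to_{\mu\eta}$ merely erases a $\mu\alpha.[\alpha]\Box$-wrapper, and $\to_{\mu i}$ only renames a $\mu$-variable and drops a wrapper. Second, because commands occur only as the body of a $\mu$-abstraction, a $\to_{\mu i}$-step can never create a new $\to_A$-redex, and a $\to_{\mu\eta}$-step creates one only by removing a wrapper that sits inside a singular context $E^s$, thereby exposing a value or a $\mu$-abstraction to $E^s$. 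These observations reduce the proof to a small number of genuinely distinct configurations.

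Given this, the residual cases are routine. If $Q$ is disjoint from $P$, or lies strictly inside the transported body of $P$ (inside $w$ for $\to_{\mu\eta}$, inside $c$ for $\to_{\mu i}$), then $Q$ has a unique ancestor redex in $t_1$; I contract it first to obtain $t_4$, and the residual of $P$ still forms a $\to_B$-redex in $t_4$. Its guarding side condition $\alpha \notin \FCV w$ (or the analogous condition for $\to_{\mu i}$) is preserved, since free (co)variable sets do not grow under reduction and bound names remain fresh by the Barendregt convention, so $t_4 \to_B t_3$. If instead $P$ lies inside $Q$, contracting the ancestor of $Q$ may erase or duplicate $P$; then $t_4 \to_B^{*} t_3$, which is exactly why the lower edge of the diagram is a reflexive–transitive $\to_{AB}$-step.

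The real work is the redex-creation case, where $\to_{\mu\eta}$ exposes a new $\to_A$-redex. Here $t_1$ contains $\cctx {E^s} {\mu\alpha.[\alpha]w}$ with $\alpha \notin \FCV w$ and $t_2$ contains the redex $Q \equiv \cctx {E^s} w$. The uniform strategy is to fire on $t_1$ the singular-context rule of Fact~\ref{fact:muT_red_singular_context}, pulling $E^s$ through the $\mu\alpha$:
\[
\cctx {E^s} {\mu\alpha.[\alpha]w} \to_A \mu\alpha.\subst {([\alpha]w)} \alpha {\alpha E^s} \equiv \mu\alpha.[\alpha]\cctx {E^s} w,
\]
using $\alpha \notin \FCV w$; this defines $t_4$. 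The redex $\cctx {E^s} w$ is still present, so I contract it with the same rule as $Q$, obtaining $\mu\alpha.[\alpha]Q^{\dagger}$ for its contractum $Q^{\dagger}$. If $w$ is a value, then $\alpha$ occurs neither in $w$ nor, by Barendregt, in the components of $E^s$, hence $\alpha \notin \FCV {Q^{\dagger}}$, and a final $\to_{\mu\eta}$-step removes the wrapper and yields $t_3$; thus $t_4 \to_A \cdot \to_{\mu\eta} t_3$. If $w \equiv \mu\beta.c$, then $Q^{\dagger} \equiv \mu\beta.\subst c \beta {\beta E^s}$, the wrapper $[\alpha]\mu\beta.(\cdot)$ is a $\to_{\mu i}$-redex, and contracting it renames $\beta$ to $\alpha$, producing a term $\alpha$-equivalent to $t_3 \equiv \mu\beta.\subst c \beta {\beta E^s}$; thus $t_4 \to_A \cdot \to_{\mu i} t_3$.

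The main obstacle is precisely this last family of cases. One must (i) choose the correct $\to_A$-step to perform first, namely the singular $\mu$-rule on the soon-to-be-erased $\mu\alpha$ rather than the newly created redex itself; and (ii) verify that the Barendregt convention together with the non-growth of free (co)variable sets under reduction supplies the side conditions for the concluding $\to_{\mu\eta}$ or $\to_{\mu i}$ step, and that in the $\mu$-abstraction subcase the two results agree up to renaming of the bound $\mu$-variable. All remaining combinations are either instances of the residual cases or vacuous (for example, $\to_{\mu\eta}$ exposing a value under $\sucT$ produces a numeral and no redex), so collecting the cases completes the diagram.
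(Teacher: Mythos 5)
Your overall strategy---case analysis on the relative positions of the $\to_B$-redex $P$ and the $\to_A$-redex $Q$, advancing the $A$-step and closing the square with further $A$- and $B$-steps via substitution lemmas---is essentially the paper's, and your treatment of the $\mu\eta$-creation case (pull $E^s$ through the $\mu\alpha$ first, contract the exposed redex, then close with $\to_{\mu\eta}$, or with $\to_{\mu i}$ up to renaming of the bound $\mu$-variable) is correct. But there is a genuine gap in your ``residual'' cases: the blanket claim that when $P$ lies inside $Q$ one has $t_4 \tto_B t_3$ fails in exactly the configuration the paper displays as its interesting case. Take $t_1 \equiv \cctx{E^s}{\mu\alpha.[\alpha]\mu\beta.c}$ with $P$ the $\mu i$-redex $[\alpha]\mu\beta.c$ and $Q$ the structural redex of $t_2 \equiv \cctx{E^s}{\mu\alpha.\subst c \beta {\alpha\ \Box}}$. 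Contracting $Q$'s ancestor in $t_1$ yields $t_4 \equiv \mu\alpha.[\alpha]\cctx{E^s}{\mu\beta.\subst c \alpha {\alpha E^s}}$: the structural substitution has inserted $E^s$ between $[\alpha]$ and $\mu\beta$, so the residual of $P$ is no longer a $\mu i$-redex, and no sequence of $\to_B$-steps alone reaches $t_3 \equiv \mu\alpha.\subst{\subst c \beta {\alpha\ \Box}}{\alpha}{\alpha E^s}$, in which $E^s$ has been pushed through $\mu\beta$. Closing this square requires a further $\to_A$-step pulling $E^s$ through $\mu\beta$, then a $\to_{\mu i}$-step, together with a commutation lemma for the two structural substitutions: $t_4 \to_A \mu\alpha.[\alpha]\mu\beta.\subst{\subst c \alpha {\alpha E^s}}{\beta}{\beta E^s} \to_B t_3$, which is precisely the paper's second displayed diagram. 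The lemma's conclusion tolerates this (the bottom edge is $\tto_{AB}$, not $\tto_B$), but your proof as written asserts a false completion for the central case and never verifies it.

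A second, smaller gap: your dismissal of ``$\to_{\mu\eta}$ exposing a value under $\sucT$'' as vacuous is wrong. Since the $\to_\sucT$-rule fires only on literal numerals, completing a numeral can create an $A$-redex strictly larger than $\cctx{E^s}{w}$: for instance $\nrecT\ r\ s\ (\sucT(\mu\alpha.[\alpha]\natenc n)) \to_B \nrecT\ r\ s\ (\sucT \natenc n) \to_A s\ \natenc n\ (\nrecT\ r\ s\ \natenc n)$. Here the created redex is not $\cctx{E^s}{w}$ for the immediate singular context $E^s \equiv \sucT\Box$, so your single-pull recipe (``contract $\cctx{E^s}{w}$ with the same rule as $Q$'') does not apply; one must iterate the pull through the composite context, in the style of Fact~\ref{fact:muT_lift_context}, before contracting the created redex and erasing the wrapper. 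Both gaps are repairable within your framework, but as stated your case analysis is incomplete precisely where the work lies.
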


\begin{proof}
We prove this lemma by distinguishing cases on $t_1 \to_B t_2$ and 
$t_2 \to_A t_3$, we treat some interesting cases.
\begin{enumerate}
\item Let $(\lm x.t)r \to_B (\lm x.t)r' \to_A \lm x.\subst t x {r'}$ with
	$r \to_B r'$. Now by an obvious substitution lemma we have
	$\subst t x r \tto_{AB} \subst t x {r'}$, hence the following diagram commutes.
	\[\xymatrix{
		(\lm x.t)r \ar_A[d] \ar^B[r] 
			& (\lm x.t)r' \ar^A[d] \\
		\subst t x r \ar_{AB}@{->>}[r] 
			& \subst t x {r'}
	}\]
\item Let $\cctx {E^s} {\mu\alpha.[\alpha]\mu\beta.c} 
		\to_B \cctx {E^s} {\mu\alpha.\subst c \beta {\alpha\ \Box}}
		\to_A \mu\alpha.\subst {\subst c \beta {\alpha\ \Box}} \alpha {\alpha E^s}$.
	Now the following diagram commutes by an obvious substitution lemma.
	\[\xymatrix{
		\cctx {E^s} {\mu\alpha.[\alpha]\mu\beta.c} \ar_A[d] \ar^B[rr] 
			& & \cctx {E^s} {\mu\alpha.\subst c \beta {\alpha\ \Box}} \ar^A[d] \\
		\mu\alpha.[\alpha]\cctx {E^s} {\subst {\mu\beta.c} \alpha {\alpha E^s}} \ar_A[rd] 
			& 
			& \mu\alpha.\subst {\subst c \beta {\alpha\ \Box}} \alpha {\alpha E^s}\\
		 & 
		 	{\quad}\save[]-<0cm,0.1cm>*{ \mu\alpha.[\alpha]\mu\beta.\subst {\subst c \alpha {\alpha E^s}} \beta {\beta E^s} } \restore \ar_B[ru]
			 &
	}\]
	\qedafterarray
\end{enumerate}
\end{proof}

\begin{corollary}
\label{corollary:muT_sn_multiple_postponement}
A single $\to_A$-reduction step after multiple $\to_B$-reduction steps can be advanced.
That means, if $t_1 \tto_B t_2 \to_A t_3$, then there is a $t_4$ such that the 
following diagram commutes.
\[\xymatrix{
	t_1 \ar_A@{.>}[d] \ar^B@{->>}[r] & t_2 \ar^A@{->}[d]\\
	t_4 \ar_{AB}@{.>>}[r] & t_3
}\]	
\end{corollary}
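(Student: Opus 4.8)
The plan is to lift the single-step postponement result, Lemma~\ref{lemma:muT_sn_postponement}, to an arbitrary sequence of $\to_B$-steps by induction on the length of $t_1 \tto_B t_2$.

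First I would handle the base case, where $t_1 \tto_B t_2$ consists of zero steps. Then $t_1 \equiv t_2$ and the hypothesis gives $t_1 \to_A t_3$ directly, so taking $t_4 \equiv t_3$ closes the diagram: $t_1 \to_A t_4$ is the given step and $t_4 \tto_{AB} t_3$ holds reflexively. For the inductive step I would peel off the first reduction, writing $t_1 \to_B t_1' \tto_B t_2$ where $t_1' \tto_B t_2$ is one step shorter. Applying the induction hypothesis to $t_1' \tto_B t_2 \to_A t_3$ produces a term $t_4'$ with $t_1' \to_A t_4'$ and $t_4' \tto_{AB} t_3$. This leaves the configuration $t_1 \to_B t_1' \to_A t_4'$, which is exactly the shape handled by Lemma~\ref{lemma:muT_sn_postponement}; applying it yields a $t_4$ with $t_1 \to_A t_4$ and $t_4 \tto_{AB} t_4'$. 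Concatenating the two $\tto_{AB}$-sequences gives $t_4 \tto_{AB} t_4' \tto_{AB} t_3$, and together with the single step $t_1 \to_A t_4$ this is precisely the required diagram.

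Since all the genuine combinatorial work---the case analysis on the interaction between a $\to_B$-redex and a following $\to_A$-redex, together with the substitution lemmas needed to verify that each square commutes---has already been carried out in Lemma~\ref{lemma:muT_sn_postponement}, this corollary is a routine diagram chase and I expect no real obstacle. The only point requiring any care is the bookkeeping of the reduction labels: the postponed step must remain a single $\to_A$-step at each stage, while the tail is allowed to grow into an arbitrary $\tto_{AB}$-sequence, so transitivity of $\tto_{AB}$ is what makes the induction go through.
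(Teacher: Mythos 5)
Your proposal is correct and matches the paper's proof, which simply iterates Lemma~\ref{lemma:muT_sn_postponement} from right to left across the $\to_B$-sequence; your induction on the length of $t_1 \tto_B t_2$, peeling off the first $\to_B$-step and applying the induction hypothesis to the suffix, is exactly the formalization of that iteration. The one point you rightly flag---that the induction hypothesis must return a \emph{single} $\to_A$-step so the lemma remains applicable, while the residual tail is absorbed by transitivity of $\tto_{AB}$---is precisely what the paper's diagram encodes.
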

 
\begin{proof}
The result holds by repeatedly applying Lemma~\ref{lemma:muT_sn_postponement} 
starting from right to left as the diagram indicates.
\[\xymatrix{
	t_1 \ar_A@{.>}[d]\ar^B@{->}[r]
		& t_2 \ar_A@{.>}[d] \ar^B@{->>}[rr] &
		& t_{n-1} \ar_A@{.>}[d]\ar^B@{->}[r]
		& t_n \ar_A@{->}[d] \\
	t_1'  \ar_{AB}@{.>>}[r]
		& t_2'  \ar_{AB}@{.>>}[rr] 	& 
		& t_{n-1}'  \ar_{AB}@{.>>}[r]
		& t_n' 
}\]
\qedafterarray
\end{proof}

\begin{lemma}
\label{theorem:muT_snA_snAB}
If $t \in \SNmuT[A]$, then $t \in \SNmuT[AB]$.
\end{lemma}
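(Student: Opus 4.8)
The plan is to derive strong normalization of $\to_{AB}$ on a term $t \in \SNmuT[A]$ from the two facts already established: every term is $\to_B$-strongly normalizing (Lemma~\ref{lemma:muT_snB}), and a single $\to_A$-step can be advanced past $\to_B$-steps (Lemma~\ref{lemma:muT_sn_postponement} and Corollary~\ref{corollary:muT_sn_multiple_postponement}). Throughout I write $\MRP t$ for the length of the longest $\to_A$-reduction starting at $t \in \SNmuT[A]$, which exists by Fact~\ref{fact:muT_redA_bound}; recall also that $t \to_A t'$ forces $\MRP{t'} < \MRP t$. The heart of the argument is the following closure property of $\SNmuT[A]$ under $\to_B$: if $t \in \SNmuT[A]$ and $t \to_B t'$, then $t' \in \SNmuT[A]$ and $\MRP{t'} \le \MRP t$.

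I would prove this closure property by strong induction on $\MRP t$. Given $t \to_B t'$, it suffices to show that every $\to_A$-reduct $t''$ of $t'$ lies in $\SNmuT[A]$ with $\MRP{t''} < \MRP t$, for then $t'$ is itself $\to_A$-strongly normalizing with $\MRP{t'} \le \MRP t$. So consider $t \to_B t' \to_A t''$. Advancing the $\to_A$-step (Lemma~\ref{lemma:muT_sn_postponement}) yields a $w$ with $t \to_A w \tto_{AB} t''$; since $t \in \SNmuT[A]$ we have $w \in \SNmuT[A]$ and $\MRP w < \MRP t$. It remains to transport this bound along the mixed tail $w \tto_{AB} t''$. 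This is exactly where the difficulty lies: the postponed tail is a $\to_{AB}$-reduction rather than a pure $\to_A$- or $\to_B$-reduction, so I must unwind it step by step, using that each $\to_A$-step strictly decreases the longest $\to_A$-reduction length and, for each $\to_B$-step, invoking the induction hypothesis---legitimately, because every term $v$ occurring in the tail satisfies $\MRP v \le \MRP w < \MRP t$. This gives $t'' \in \SNmuT[A]$ and $\MRP{t''} \le \MRP w < \MRP t$, completing the claim.

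With the closure property in hand, the theorem follows by a nested induction: an outer induction on $\MRP t \in \nat$ and an inner well-founded induction along $\to_B$ (available since $t \in \SNmuT[B]$ for every term by Lemma~\ref{lemma:muT_snB}). To see $t \in \SNmuT[AB]$ I check that each $\to_{AB}$-reduct is again in $\SNmuT[AB]$. If $t \to_A t'$, then $t' \in \SNmuT[A]$ with $\MRP{t'} < \MRP t$, so the outer induction hypothesis gives $t' \in \SNmuT[AB]$. If $t \to_B t'$, the closure property gives $t' \in \SNmuT[A]$ with $\MRP{t'} \le \MRP t$; when $\MRP{t'} < \MRP t$ the outer hypothesis applies, and when $\MRP{t'} = \MRP t$ the inner hypothesis applies since $t \to_B t'$. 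In either case $t' \in \SNmuT[AB]$, hence $t \in \SNmuT[AB]$. I expect the main obstacle to be precisely the bookkeeping in the closure property, since the $\tto_{AB}$ tail produced by postponement mixes both reductions and must be unwound with care to keep the bound on the $\to_A$-reduction length under control.
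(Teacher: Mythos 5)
Your proof is correct, but it takes a genuinely different route from the paper's, and a heavier one. The paper also does well-founded induction along $\to_A$ (on the derivation of $t \in \SNmuT[A]$) and also reduces, via $t \in \SNmuT[B]$, to showing $t_3 \in \SNmuT[AB]$ for every sequence $t \tto_B t_2 \to_A t_3$; but at the point you identify as ``exactly where the difficulty lies''---the mixed tail $t_4 \tto_{AB} t_3$ produced by postponement---the paper observes that no unwinding is needed at all: since $t \to_A t_4$, the induction hypothesis applies to $t_4$ and yields $t_4 \in \SNmuT[AB]$ \emph{outright}, and $\SNmuT[AB]$ is closed under $\tto_{AB}$-reduction (Fact~\ref{fact:muT_steps_sn}), so $t_3 \in \SNmuT[AB]$ immediately, with no quantitative bookkeeping. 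The crucial difference is what the induction hypothesis delivers: yours gives only membership in $\SNmuT[A]$ together with a bound $\MRP{\cdot}$, which forces you to transport that bound step by step through the tail (your closure property, proved by strong induction on $\MRP t$) and then to run a lexicographic induction on $(\MRP t, \to_B)$ in the main argument; the paper's gives membership in the target set $\SNmuT[AB]$ itself, for which the tail is harmless. Your route does buy something the paper's does not state: the auxiliary fact that $\SNmuT[A]$ is closed under $\to_B$ with $\MRP{t'} \le \MRP t$, a quantitative postponement-style result of independent interest, and your argument uses only the single-step postponement lemma rather than Corollary~\ref{corollary:muT_sn_multiple_postponement}. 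All the inductions you set up are well-founded as claimed (the lexicographic order on $(\MRP t, \to_B^+)$ is well-founded by Lemma~\ref{lemma:muT_snB}, and every term in the postponed tail has $\MRP$ strictly below $\MRP t$, so the closure-property induction hypothesis legitimately applies to its $\to_B$-steps), so the proposal stands; it is simply a longer path to the same destination.
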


\begin{proof}
We prove this result by induction on the derivation of $t \in \SNmuT[A]$, so
by the induction hypothesis we obtain that for each term $t'$ with $t \to_A t'$
we have $t' \in \SNmuT[AB]$. By Lemma~\ref{lemma:muT_snB} we have 
$t \in \SNmuT[B]$, hence it suffices to prove that for all reduction sequences 
$t \tto_B t_2 \to_A t_3$ we have $t_3 \in \SNmuT[AB]$. Now by 
Corollary~\ref{corollary:muT_sn_multiple_postponement} we obtain a $t_4$ such
that the following diagram commutes.
\[\xymatrix{
	t \ar_A@{.>}[d] \ar^B@{->>}[r] & t_2 \ar^A@{->}[d]\\
	t_4 \ar_{AB}@{.>>}[r] & t_3
}\]	
By the induction hypothesis we have $t_4 \in \SNmuT[AB]$. Therefore,
since $t_4 \tto_{AB} t_3$, we have $t_3 \in \SNmuT[AB]$ by 
Fact~\ref{fact:muT_steps_sn}, so we are done.
\end{proof}

\begin{theorem}
\index{Strong normalization!for $\lambdamuT$}
\label{theorem:muT_sn}
If $t$ is well-typed, then $t \in \SNmuT[AB]$.
\end{theorem}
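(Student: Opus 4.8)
The plan is to assemble the theorem directly from the two main results of this subsection, which together do all the work: strong normalization of $\to_A$ on well-typed terms (Corollary~\ref{corollary:muT_snA}), and the transfer of strong normalization from $\to_A$ to $\to_{AB}$ (Lemma~\ref{theorem:muT_snA_snAB}). Since $\to\, =\, \to_{AB}$, membership in $\SNmuT[AB]$ is precisely the statement that $t$ is strongly normalizing for the full reduction relation, so nothing further is needed once both ingredients are in hand.

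Concretely, I would argue as follows. Suppose $t$ is well-typed, so that $\mujudg \Gamma \Delta t \rho$ for some $\Gamma$, $\Delta$ and $\rho$. By Corollary~\ref{corollary:muT_snA} we immediately obtain $t \in \SNmuT[A]$. Feeding this into Lemma~\ref{theorem:muT_snA_snAB} yields $t \in \SNmuT[AB]$, which is the desired conclusion. This is a one-line deduction, and I would present it as such.

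The substantive content has already been discharged before reaching this statement, so there is no genuine obstacle left at this step; the two hard parts lie upstream. The first is the reducibility argument culminating in Corollary~\ref{corollary:muT_snA}, where $\natTypeT$ must be interpreted by the intersection $\bigcap \mathcal N$ rather than simply by $\SNmuT$, precisely so that a surrounding $\sucT$ or $\nrecT$ cannot be absorbed by a subterm that reduces to the shape $\mu\alpha.c$. The second is the postponement argument, Lemma~\ref{lemma:muT_sn_postponement} and Corollary~\ref{corollary:muT_sn_multiple_postponement}, showing that a single $\to_A$-step can be advanced past $\to_B$-steps; combined with strong normalization of $\to_B$ (Lemma~\ref{lemma:muT_snB}), this drives the induction inside Lemma~\ref{theorem:muT_snA_snAB}. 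Given those results, the theorem is an immediate corollary and requires no separate machinery of its own.
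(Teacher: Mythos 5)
Your proposal is correct and coincides with the paper's own proof: the paper likewise derives the theorem in one step by combining Corollary~\ref{corollary:muT_snA} (strong normalization of $\to_A$ for well-typed terms) with Lemma~\ref{theorem:muT_snA_snAB} (transfer from $\SNmuT[A]$ to $\SNmuT[AB]$). Your surrounding commentary on where the real work lies is accurate but not part of the proof itself.
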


\begin{proof}
This result follows directly from Theorem~\ref{theorem:muT_snA_snAB} and
Corollary~\ref{corollary:muT_snA}.
\end{proof}

\section{Conclusions and further work}
In this paper we have introduced the \lambdamucal{T}, an extension of Parigot's 
\lambdamucal{} to include a type of natural numbers $\natTypeT$ with 
primitive recursion $\nrecT$, \`a la \GodelsTfull{}. We have proven the main
meta-theoretical properties and have shown that exactly the provably recursive
functions in first-order arithmetic can be represented.

In order to maintain confluence and a normal form theorem the \lambdamucal{T} 
is not a straightforward combination of the \lambdamucal{} and \GodelsTfull{}.
Both these systems are originally \cbn{}, whereas \lambdamuT{} is 
a \cbn{} system with strict evaluation on datatypes.

In our treatment of the reduction rules in \lambdamuT{}, we
have observed a tension between the \cbn{} features taken
directly from Parigot's original calculus, and the need to restrict the
rules for the datatypes to be \cbv{}. We plan to investigate a
fully-fledged \cbv{} version of \lambdamuT{} (see for example~\cite{ong1997,py1998} for 
definitions of a \cbv{} variant of \lambdamu{}). We expect that, apart from 
our proof of strong normalization, most of our results will extend to such a 
system. For a proof of strong normalization we will likely experience problems 
related to those discussed in~\cite{david2005}. The key issue 
is our Lemma~\ref{lemma:muT_snA_lambda}, which states that if
$r \in \SNmuT$ and \mbox{$\cctx E {\subst t x r} \in \SNmuT$}, then 
$\cctx E {(\lm x.t)r} \in \SNmuT$. In a \cbv{} variant the reduction
rule $v(\mu \alpha.c) \to \mu \alpha.\subst c \alpha {\alpha\ (v\Box)}$
will complicate this because $(\lm x.t)r$ is not solely a $\beta$-redex anymore. 

Instead of the \lambdamucal{} it would be interesting to consider a
system with the control operators \catch{} and \throw{} as
primitive (see Figure~\ref{figure:catchthrow_typing} for the typing rules). 
Such a system is described by Crolard~\cite{crolard1999}, who
proves a correspondence with \lambdamu{}. 
Herbelin~\cite{herbelin2010} also considers a variant of such a system to
define an intuitionistic logic that proves a variant of Markov's principle. 

\begin{figure}[h!]
\centering
\subfloat[catch]{
	\AXC{$\mujudg \Gamma {\Delta, \alpha : \rho} t \rho$}
	\UIC{$\mujudg \Gamma \Delta {\catchin \alpha t} \rho$}
	\normalAlignProof
	\DisplayProof}
\subfloat[throw]{
	\AXC{$\mujudg \Gamma \Delta t \rho$}
	\AXC{$\alpha : \rho \in \Delta$}
	\BIC{$\mujudg \Gamma \Delta {\throwto t \alpha} \tau$}
	\normalAlignProof
	\DisplayProof}
\caption{The typing rules for the primitives \catch{} and \throw{}.}
\label{figure:catchthrow_typing}
\end{figure}

The further reaching goal would be to define a dependently
typed \lambdacal{} with datatypes and control operators that
allows program extraction from classical proofs. In such a calculus
one can write specifications of programs, which can then be proven
using classical logic. The extraction mechanism would then extract a
program from such a proof, where the classical reasoning steps are
extracted to control operators. This would yield programs-with-control
that are \emph{correct by construction} because they are extracted
from a proof of the specification. This would extend the well-known
extraction method for constructive proofs, see~\cite{paulin1989} for example, 
to classical proofs.

This goal is particularly useful to obtain provably correct algorithms
where the use of control operators would really pay off (for example if
a lot of backtracking is involved). See~\cite{caldwell2000} for 
applications to classical search algorithms. The work of 
Makarov~\cite{makarov2006} may also be useful here, as it gives ways to 
optimize program extraction to make it feasible for practical programming.

\paragraph{Acknowledgments}
We are grateful to the anonymous referees who spotted some mistakes in earlier 
versions of this paper and provided several helpful suggestions.

\bibliographystyle{alpha}
\begin{small}

\end{small}
\end{document}